\newcommand{\kicktree}{kick tree\xspace}
\newcommand{\kicktrees}{kick trees\xspace}
\newcommand{\cubby}{cubby\xspace}
\newcommand{\cubbys}{cubbies\xspace}
\newcommand{\facility}{facility\xspace}
\newcommand{\facilitys}{facilities\xspace}
\DeclareMathOperator*{\argmin}{argmin}
\newcommand{\defn}[1]{\emph{\textbf{{#1}}}}
\newcommand{\N}{\mathbb{N}}
\newcommand{\poly}{\operatorname{poly}}
\newcommand{\polylog}{\operatorname{polylog}}
\newcommand{\E}{\mathbb{E}}
\newcommand{\Tower}{\operatorname{tow}}
\renewcommand{\paragraph}[1]{\vspace{.5 cm} \noindent \textbf{#1} }
\newtheoremstyle{slanted}% <name>
{3pt}% <Space above>
{3pt}% <Space below>
{\slshape}% <Body font>
{}% <Indent amount>
{\bfseries}% <Theorem head font>
{.}% <Punctuation after theorem head>
{.5em}% <Space after theorem heading>
{}% <Theorem head spec (can be left empty, meaning `normal')>
\theoremstyle{slanted}
\newtheorem{theorem}{Theorem}%[section]
\newtheorem{lemma}[theorem]{Lemma}
\newtheorem{corollary}[theorem]{Corollary}
\begin{document}

\date{}

\title{On the Optimal Time/Space Tradeoff for Hash Tables}

\author{Michael A. Bender\\Stony Brook University \and Mart\'{\i}n Farach-Colton\\Rutgers University \and John Kuszmaul\\Yale University \and  \and William Kuszmaul\\MIT \and Mingmou Liu \\ NTU}

\maketitle

\thispagestyle{empty}
\begin{abstract}
For nearly six decades, the central open question in the study of hash tables has been to determine the optimal achievable tradeoff curve between time and space. State-of-the-art hash tables offer the following guarantee: If keys/values are $\Theta(\log n)$ bits each, then it is possible to achieve constant-time insertions/deletions/queries while wasting only $O(\log \log n)$ bits of space per key when compared to the information-theoretic optimum. Even prior to this bound being achieved, the target of $O(\log \log n)$ wasted bits per key was known to be a natural end goal, and was proven to be optimal for a number of closely related problems (e.g., stable hashing, dynamic retrieval, and dynamically-resized filters).

This paper shows that $O(\log \log n)$ wasted bits per key is not the end of the line for hashing. In fact, for any $k \in [\log^* n]$, it is possible to achieve $O(k)$-time insertions/deletions, $O(1)$-time queries, and 
$$O(\log^{(k)} n) = O\left(\underbrace{\log \log \cdots \log}_{k} n\right)$$
wasted bits per key (all with high probability in $n$). This means that, each time we increase insertion/deletion time by an \emph{additive constant}, we reduce the wasted bits per key \emph{exponentially}. We further show that this tradeoff curve is the best achievable by any of a large class of hash tables, including any hash table designed using the current framework for making constant-time hash tables succinct.

Our results hold not just for fixed-capacity hash tables, but also for hash tables that are dynamically resized (this is a fundamental departure from what is possible for filters); and for hash tables that store very large keys/values, each of which can be up to $n^{o(1)}$ bits (this breaks with the conventional wisdom that larger keys/values should lead to more wasted bits per key). For very small keys/values, we are able to tighten our bounds to $o(1)$ wasted bits per key, even when $k = O(1)$. Building on this, we obtain a constant-time dynamic filter that uses $n \left\lceil \log \epsilon^{-1} \right\rceil + n \log e + o(n)$ bits of space for a wide choice of false-positive rates $\epsilon$, resolving a long-standing open problem for the design of dynamic filters. 

\end{abstract}
\vfill
\newpage 
\pagenumbering{arabic}

\section{Introduction}\label{sec:intro}

A \defn{hash table} \cite{KnuthVol3} (sometimes called a \defn{dictionary}) is a data structure that stores a set of keys from some key-universe $U$ and that supports three operations on that set: insertions, deletions, and queries. Some hash tables are also capable of storing a \defn{value} $v \in V$ associated with each key. In this case, a query on a key $k$ returns both whether key $k$ is present and what the associated value $v$ is, if $k$ is present.

Since hash tables were introduced in 1953, there has been a vast literature on the question of how to design space- and time-efficient hash tables \cite{knuth1963notes, KnuthVol3, Fredman82FKS, dietzfelbinger1988dynamic, dietzfelbinger1990new, pagh2001low, raman2003succinct, demaine2006dictionariis, patrascu2008succincter, arbitman2010backyard, yu2020nearly, liu2020succinct, bender2021all, PaghRo04, FotakisPaSa03, dietzfelbinger1990new, arbitman2009amortized}. Whereas early hash tables \cite{KnuthVol3, knuth1963notes} required $\omega(1)$ time per operation in order to support a load factor of $1 - o(1)$, modern hash tables \cite{arbitman2010backyard, liu2020succinct, bender2021all} offer a much stronger guarantee. Not only are these hash tables constant time (with high probability), and not only do they support a load factor of $1 - o(1)$, but they have even converged towards the \emph{information-theoretically optimal} number of bits of space, given by 
$$\mathcal{B}(U, V, n) = \log \binom{|U|}{n} + n \log |V|.$$ 

A hash table that uses $\mathcal{B}(U, V, n) + rn$ bits of space is said to incur $r$ \defn{wasted bits per key}. When $\log |U| + \log |V| = c \log n$ for some constant $c > 1$, the state of the art for $r$ is $O(\log \log n)$, which was first achieved in 2003 by Raman and Rao \cite{raman2003succinct} with constant expected-time operations, and which after a long line of work \cite{demaine2006dictionariis, arbitman2010backyard, liu2020succinct, bender2021all} has now also been achieved \cite{bender2021all} with (high-probability) worst-case constant-time operations.  

Besides having remained the state of the art for nearly two decades, there are several more fundamental reasons to believe that $r = O(\log \log n)$ might be optimal. It is known that $\Theta(\log \log n)$ wasted bits per key is optimal for the closely related problems of dynamic value retrieval\footnote{A dynamic data-retrieval data structure is a hash table with the added restriction that queries must be for keys/value pairs that are present. If keys are from a universe of size $\poly(n)$ and $v$ is the size of each value in bits, then static value-retrieval requires $nv + o(n)$ bits \cite{dietzfelbinger2008succinct}, and dynamic value-retrieval requires $nv + \Theta(n \log \log n)$ bits \cite{demaine2006dictionariis, mortensen2005dynamic}.} \cite{mortensen2005dynamic, demaine2006dictionariis,dietzfelbinger2008succinct} and fully-dynamic approximate set membership\footnote{Fully-dynamic approximate set membership data structures, also known as dynamically-resizable filters, are analogous to dynamically-resizable hash tables, but with some $\epsilon$ probability of queries returning false-positives. Whereas an optimal static filter requires $n\log \epsilon^{-1}$ bits \cite{carter1978exact}, an optimal resizable filter requires $n \log \epsilon^{-1} + \Omega(n \log \log n)$ space \cite{pagh2013approximate}, which is known to be optimal for $\epsilon^{-1} \le \polylog n$ \cite{pagh2013approximate, liu2020succinct}.} \cite{pagh2013approximate, liu2020succinct,carter1978exact}. And it is known that \emph{stable} hash tables \cite{demaine2006dictionariis, bender2021all} (i.e., hash tables in which each key/value pair is assigned a fixed and unchanging position upon arrival) have an optimal value of $r = \Theta(\log \log n)$. 

Nonetheless, it is \emph{not known} whether $r = O(\log \log n)$ wasted bit per key is optimal for dynamic constant-time hash tables. More generally, it is an open question what the optimal tradeoff is between time and space (e.g., can slightly super-constant-time operations yield major space savings?). Tight answers to these questions would close off one of the longest-standing directions of research in the field of data structures.

\paragraph{An astonishing tradeoff curve between time and space.} In this paper, we present a data structure that achieves a much stronger time/space tradeoff than existing hash tables, and we prove a matching lower bound establishing that our hash table is optimal across a large family of data structures that includes all existing fast succinct hash tables.

We show that it is possible to achieve significantly fewer than $O(\log \log n)$ wasted bits per key. In fact, for any parameter $k \in [ \log^* n]$, we construct a hash table that supports constant-time queries, that supports $O(k)$-time insertions/deletions, and that incurs 
$$O(\log^{(k)} n) = O\left(\underbrace{\log \log \cdots \log}_{k} n\right)$$
wasted bits per key, where the guarantees on time and space are worst-case with high probability in $n$. Our result holds not just for fixed-capacity hash tables, but also for dynamically-resizable hash tables, as well as for hash tables storing very large keys/values (up to $n^{o(1)}$ bits each). 

Our result implies a remarkably steep tradeoff: each time that we increase insertion/deletion time by an \emph{additive constant}, we are able to \emph{exponentially reduce} the number of wasted bits per key. In particular, we obtain a hash table that supports $O(1)$-time insertions/deletions/queries with $O(\log^{(c)} n)$ wasted bits per key, for any positive constant $c$ of our choice; and we obtain a hash table that supports $O(\log^* n)$-time insertions/deletions with $O(1)$ wasted bits per key and constant-time queries. 

As we mentioned above, we prove that the tradeoff curve on which these hash tables sit is tight for a large class of data structures, including all known dynamic succinct hash tables \cite{liu2020succinct, bercea2020dynamic, bender2018bloom, arbitman2010backyard, bender2021all, raman2003succinct}, and more generally, any hash table that makes implicit (or explicit) use of ``augmented open addressing'' (discussed in more detail in Section~\ref{sec:augmented-open-addressing}).

Finally, in the special case where keys/values are small, meaning that each key-value pair consists of $\log n + o(\log n / \log^{(k)} n)$ bits (but keys are still from a universe of size $\omega(n)$), we are able to further tighten our bounds to obtain $o(1)$ wasted bits per key. Building on this, we obtain a dynamic constant-time approximate set-membership data structure (i.e., a filter) that achieves space
$$ n \log \epsilon^{-1} + n\log e + o(n)$$ 
bits, for a wide choice of false-positive rates $\epsilon$, resolving a long-standing open problem as to whether $O(1)$ wasted bits per key is achievable by dynamic filters. In fact, not only is $\log e + o(1)$ constant, but it is the provably optimal number of wasted bits per key for any filter that is constructed by storing fingerprints in a hash table \cite{carter1978exact, pagh2005optimal, bender2018bloom, bercea2020dynamic, liu2020succinct}.

\section{Overview of Results and Techniques}\label{sec:overview}

This section presents an overview of the main results and techniques in the paper.

\subsection{The relationship between hash tables and balls-to-slots schemes}
\label{sec:augmented-open-addressing}

An implicit theme in the design and analysis of hash tables is that the problem of constructing a space-efficient hash table is closely related to the problem of placing balls into slots of an array. We now formalize this relationship by defining the class of \defn{augmented open-addressed hash tables} (which includes all known succinct constant-time hash tables \cite{liu2020succinct, bercea2020dynamic, bender2018bloom, arbitman2010backyard, bender2021all, raman2003succinct}), and by formally defining the balls-to-slots problem that any augmented open-addressed hash table must solve (we will call this problem the \defn{probe-complexity problem}). Later, in Section \ref{sec:probe}, we will give tight upper and lower bounds for the probe-complexity problem, which in subsequent sections will allow for us to construct optimal augmented open-addressed hash tables.

\paragraph{Augmented open addressing.} 
Augmented open-addressed hash tables are hash tables that abide by the following basic framework: elements are stored in an array of some size $m = (1 + \epsilon)n$, and each element $x \in U$ is assigned a probe sequence $h_1(x), h_2(x), h_3(x), \ldots\in [m] $ of array slots where it can be stored; queries are then implemented using a secondary query-routing data structure which, for each key $x$, stores the index $i$ of the position $h_i(x)$ containing the key.\footnote{Additionally, a technique known as ``quotienting'' is used to shave $\log n$ bits off of each key---this is what bridges the gap between using $\mathcal{B}(U, V, n) + nr$ space instead of $n \log |U| + n \log |V| + nr$ space.} If a hash table is to be succinct, it must simultaneously achieve $\epsilon = o(1)$ (we call the quantity $1 - \epsilon$ the \defn{load factor}), while also ensuring that the quantities stored by the query-routing data structure don't take up too many bits (i.e., keys are in positions $h_i(x)$ for relatively small values of $i$).

The use of a probe sequence to determine where a key can reside is analogous to classical open addressing \cite{KnuthVol3}. An important difference is that the query-routing data structure allows for queries to be performed in constant time, without needing to scan through the positions $h_1(x), h_2(x), \ldots$. 

Of course, there is flexibility in terms of what granularity augmented open addressing is used at. For example, in order to support dynamic resizing \cite{raman2003succinct, liu2020succinct, bender2021all}, a hash table might use augmented open addressing on bins of size $\polylog n$, and then use a different set of techniques to determine which bin each key should go into. Nonetheless, all known succinct constant-time hash tables \cite{liu2020succinct, bercea2020dynamic, bender2018bloom, arbitman2010backyard, bender2021all, raman2003succinct} rely on some form of augmented open-addressing as the highest-granularity abstraction layer in which elements are stored. 

\paragraph{The probe complexity problem.} We can formalize the balls-to-slots problem that any augmented open-addressed hash table must solve as follows. Each key $x$ is thought of as a ``ball'' that is associated with some probe sequence $h(x) =  \langle  h_1 (x), h_2 (x),\ldots, h_m(x) \rangle$ where without loss of generality the sequence is a permutation of $\langle 1, 2, \ldots, m\rangle$. A balls-to-slots scheme must support an (online) sequence of ball insertions/deletions so that, at any given moment, the up to $n$ balls that are present are each assigned distinct positions in an array of $m = (1 + \epsilon)$ slots. The balls-to-slots scheme is measured by two objectives: the average \defn{probe complexity} of the balls, which for a ball $x$ in slot $h_i(x)$ is given by $(1 + \log i)$; and the \defn{switching cost} of the balls-to-slots scheme, which is the number of balls that the scheme rearranges on each insertion/deletion (including the ball being inserted/deleted). 

The probe complexity of each ball $x$ can be viewed as the minimum number of bits (asymptotically) that must be stored in the query router in order for the position of the ball to be recovered by queries. The switching cost, on the other hand, can be viewed as (a lower bound on) the amount of time that it takes to implement a ball insertion/deletion. Thus lower bounds on the relationship between probe complexity and switching cost in the probe-complexity problem directly translate to lower bounds on the relationship between space and insertion-time in augmented open-addressed hash tables.

Intuitively, there are three challenges to designing an augmented open-addressed hash table: one must construct a balls-to-slots scheme with low probe complexity, low switching cost, and high load factor; one must efficiently implement that balls-to-slots scheme so that insertions/deletions can (ideally) be performed in time proportional to the switching cost; and one must implement a query-routing data structure that maps each key $x$ to the slot $h_i(x)$ where it resides (ideally, this should use space proportional to the probe complexity of $x$). Thus the problem of determining the optimal tradeoff between average probe complexity and switching cost is central to the problem of designing an optimal augmented open-addressed hash table.

\paragraph{The two approaches to designing balls-to-slots schemes.}
One way to design a balls-to-slots scheme is to base it on a traditional open-addressed hash table such a linear probing \cite{KnuthVol3}, double hashing \cite{KnuthVol3}, or Cuckoo hashing \cite{PaghRo04}. Cuckoo hashing \cite{PaghRo04} (and its variants \cite{FotakisPaSa03, dietzfelbinger2005balanced, arbitman2009amortized}) is especially appealing because it bounds probe complexity deterministically. Standard Cuckoo hashing achieves a probe complexity of $O(1)$, but is only able to support a load factor of $1 - \epsilon < 1/2$. Generalizations of Cuckoo hashing (i.e., $d$-ary Cuckoo hashing \cite{FotakisPaSa03} and Cuckoo hashing with $d$-slot bins \cite{dietzfelbinger2005balanced}) are able to support higher load factors $1 - \epsilon$, but at the cost of incurring a super-constant switching cost of at least $\Omega(\log \epsilon^{-1})$. Thus Cuckoo hashing cannot be used on its own to obtain a succinct constant-time hash table (although it has been used in past work as an essential building block \cite{arbitman2010backyard}). 

To achieve small probe complexity and switching cost, while also supporting $\epsilon = o(1)$, past work has used balls-to-slots schemes that are based on standard balls-to-bins techniques. One simple approach is to set $m = (1 + 1 / \log n)n$; to partition the array of size $m$ into bins of size $\ell = \polylog n$; and finally to hash each key to a random bin $g(x) \in \{0, 1, 2, \ldots, m / \ell - 1\}$ and set $h_i(x) = g(x) \cdot \ell + i$ for all $i$. With high probability in $n$, every key will find a free position in the bin that it hashes to, meaning that each key is assigned to one of its first $\ell = \polylog n$ choices. This scheme achieves load factor $1 - 1 / \log n$, worst-case probe complexity $O(\log \log n)$, and worst-case switching cost $1$ (with high probability in $n$). 

It's natural to hope that an even better probe complexity could be achieved by making use of more sophisticated balls-to-bins schemes (e.g., power of two choices \cite{mitzenmacher2001power}). This turns out not to be possible, as one can prove a lower bound of $\Omega(\log \log n)$ average probe complexity for \emph{any} balls-to-slots scheme with switching cost $1$ and load factor $1 - 1 / \log n$; in fact, this is a special case of a more general lower bound \cite{demaine2006dictionariis, mortensen2005dynamic} which says that \defn{stable hash tables} (i.e., hash tables in which elements are assigned permanent positions when they are inserted) must incur $\Omega(\log \log n)$ wasted bits per key.

The central bottleneck to designing augmented open-addressed hash tables that make use of this simple balls-to-slots scheme has been the issue of achieving constant-time operations while preserving space efficiency \cite{raman2003succinct, arbitman2010backyard, liu2020succinct, bercea2020dynamic, bender2018bloom, bender2021all}. Raman and Rao \cite{raman2003succinct} gave an elegant solution with constant expected time in which the query-routing data structure is itself a collection of small hash tables that store fingerprints of keys. The bottleneck to achieving the same guarantee with worst-case time bounds has been, until recently, the difficulty of constructing efficient query-routing data structures for bins of $\polylog n$ elements---this led researchers to develop more sophisticated balls-to-slots schemes that make use of smaller bins \cite{arbitman2010backyard, liu2020succinct, bercea2020dynamic, bender2018bloom}, which allowed for them to overcome the query-routing bottleneck, but resulted in a worse space utilization. Recently, \cite{bender2021all} resolved this issue by showing how to perform query-routing on bins of size $\polylog n$ while incurring only $O(\log \log n)$ extra wasted bits per key. 

In summary, it is known how to use the balls-to-bins framework for the probe-complexity problem in order to achieve $O(\log \log n)$ wasted bits per key, and this results in an optimal stable hash table. It has not been known whether the ability to move keys around during insertions opens the door to even higher space efficiency.

\paragraph{An optimal solution to the probe-complexity problem.}
An essential technical insight in our paper is that one can achieve an extremely small average probe complexity by moving around just a few balls on each insertion. We present a balls-to-slots scheme, called the \defn{$k$-\kicktree}, that achieves average probe complexity $\log^{(k)} n$ while achieving a worst-case switching cost of $O(k)$. Moreover, this result holds even when the number $n$ of balls equals the number $m$ of slots, so the load factor is $1$. 

We prove that this tradeoff between switching cost and probe complexity is asymptotically optimal (as long as the load factor $1 - \epsilon$ is at least, say, $1 - 1 / \log \log n$). In particular, if a balls-to-slots scheme achieves average probe complexity $O(\log^{(k)} n)$, it must move an average of $\Omega(k)$ items per insertion/deletion. This lower bound is established via an intricate potential-function argument that we consider to be one of the main technical contributions of the paper. 

Interpreting our lower bound as a statement about augmented open-addressed hash tables, we can conclude that if there exists a hash table with a better time/space tradeoff curve than the hash tables in this paper, it would have to fundamentally avoid the use of augmented open addressing, and would instead require an entirely new approach to succinct hashing.

\subsection{Transforming a $k$-\kicktree into a $k$-kick hash table}

The $k$-\kicktree serves as the balls-to-slots scheme for all of the hash tables that we construct in this paper, but existing techniques for constructing the other parts of an augmented open-addressed hash table, (e.g.\ the query router, how to dynamically resize, etc.) are not themselves space and time efficient enough to fully take advantage of the efficiency of $k$-\kicktrees.  Next, we summarize the main technical obstacles that we overcome in order to use $k$-\kicktrees time and space efficiently in our hash tables.

\paragraph{An improved query router (Section \ref{sec:metadata}).}
We show how to build general-purpose query-routing data structures with strong space and time guarantees. Even if different keys have very different probe complexities from one another, our query-routing data structure uses space within a constant factor of optimal and supports constant-time queries/updates. The building blocks that we use to construct the query-router will likely also be useful in future work on related problems.

\paragraph{Supporting dynamic resizing (Section \ref{sec:variable}).} Past approaches~\cite{bender2021all, raman2003succinct, liu2020succinct} have performed resizing at a granularity of $1 + 1/\polylog n$ factors.\footnote{The specific ways in which resizing has been implemented have differed, with some papers \cite{raman2003succinct, liu2020succinct} performing resizing at a per-bin level, and others \cite{bender2021all} performing it globally.} This has required the data to be partitioned into $\polylog n$ chunks, and for the query-router to store an additional $\Theta(\log \log n)$ bits associated with each key in order to identify its chunk. In other words, dynamic resizing introduces yet another source of $\Theta(\log \log n)$ wasted bits per key. We give a general-purpose technique for avoiding this type of overhead---surprisingly, the technique results in the \emph{same} tradeoff curve that we encounter for probe-complexity: at the cost of $O(k)$ time per insertion/deletion, we can reduce the space overhead of resizing to $O(\log^{(k)} n)$ bits per key. 

\paragraph{Handling large keys/values (Section \ref{sec:largekeys}).}
Now consider the setting where the keys and values are $u$ and $v$ bits long, respectively, for some potentially large $u, v$ satisfying $u + v \le n^{o(1)}$. Past techniques have encountered several major obstacles in this case, resulting in the wasted space per key growing substantially as the key size $u$ becomes super-logarithmic \cite{arbitman2010backyard, liu2020succinct, raman2003succinct}. The only known succinct hash table that scales gracefully in the regime of $u + v = \omega(\log n)$ is the hash table of Raman and Rao \cite{raman2003succinct}, which achieves $O(\log (u + v))$ wasted bits per key with constant expected-time insertions---subsequent work \cite{arbitman2010backyard, liu2020succinct} on worst-case insertion times has encountered much larger space blowups due to technical difficulties surrounding the use of quotienting and the use of lookup-tables in hash tables with large keys. 

Our approach to handling large keys and values is to give a general-purpose reduction from the setting where $u \ge \omega(\log n)$ to the setting where $u = O(\log n)$. In essence, our reduction allows for us to move bits from the key length $u$ to the value-length $v$. 

We then show how to adapt our hash tables to support arbitrarily large values with \emph{no additional space wastage}. Here, we exploit a special property of the $k$-\kicktree, namely that it is capable of supporting a load factor of $1$, which ends up allowing for us to construct a dynamically-resized hash table in which there are \emph{no empty slots}. 

Combining these techniques, we conclude that the tradeoff curve in this paper is agnostic to key/value size: with $O(k)$-time per insertion/deletion, we can achieve $O(\log^{(k)} n)$ wasted bits per key.

\paragraph{Handling small keys/values (Section \ref{sec:small}).}
In addition to considering large keys, past work \cite{demaine2006dictionariis, raman2003succinct, liu2020succinct, arbitman2009amortized} has also focused in on the small case, where $u + v = \log n + t$ for some $t = o(\log n)$ (and where the universe $U = [2^{u}]$ of keys may have an arbitrarily small size satisfying $|U| = \omega(n)$).
In this setting, we show that if $t$ is even \emph{slightly} sublogarithmic, that is,
$$t = O(\log n / \log^{(k)} n)$$
for some positive constant $ k $, then it is possible to support constant-time insertions/deletions/queries while achieving $o(1)$ wasted bits per key. Prior to our work, this type of guarantee was only known to be possible in the much smaller regime of $t = \tilde{O}((\log n)^{1/ 3})$ \cite{arbitman2010backyard, raman2003succinct}. What makes our expanded range for $t$ interesting is that, as we shall see shortly, it enables us to design optimal dynamic filters for a large range of false-positive rates (in fact, for all false-positive rates except for those that are nearly polynomially small).

Our small-key result again follows from a general-purpose reduction, which in this case reduces the setting of small keys/values to the setting of larger keys/values. Interestingly, this reduction relies heavily on the ability to efficiently support dynamic-resizing and on the steep tradeoff curve between time/space for standard-sized keys/values.

\subsection{An application to optimal dynamic filters} 

Finally, in Section \ref{sec:filters}, we apply our small-keys result to the widely studied problem of maintaining space-efficient approximate-membership data structures, also known as filters. A (dynamic) \defn{filter} is a data structure that supports inserts/queries/deletes on a set of keys but that is permitted to return a false positive on a query with some probability $\epsilon$. Information theoretically, a filter must use at least $\mathcal{F}(n, \epsilon) = n \log \epsilon^{-1}$ bits \cite{carter1978exact}. 

It remains an open question what the optimal achievable wasted-bits-per-key is, that is, what is the smallest value of $r$ such that it is possible to construct a time-efficient dynamic filter using $\mathcal{F}(n, \epsilon) + nr$ bits. We remark that, here, $n$ is taken to be a fixed upper bound on the number of keys---if $n$ is permitted to change dynamically, then it is known that the optimal $r$ satisfies $r = \Omega(\log \log n)$ \cite{pagh2013approximate}. 

Filters tend to be used in applications where space efficiency is a central concern; the result is that most applications select $\epsilon$ such that $\log^{-1} \epsilon$ is very small (for a practical discussion of filters, see, e.g., \cite{fan2014cuckoo, bender2012don, dillinger2021ribbon}). This leads to the close relationship between the filter problem and the hash-table problem with small keys. 

Perhaps the most famous filter is the so-called Bloom filter \cite{bloom1970space}, which supports $O(\epsilon^{-1})$-time insertions and achieves $r = O(\log \epsilon^{-1})$ (the Bloom filter does not support deletions).  After a long line of work \cite{bloom1970space, carter1978exact, pagh2005optimal, bender2018bloom, bercea2020dynamic, liu2020succinct}, contemporary filters are able to achieve much stronger bounds than this. Indeed, there are now a number of filters \cite{pagh2005optimal, bender2018bloom, bercea2020dynamic} that  and that achieve
$$r = o(\log \epsilon^{-1})$$ wasted bits per key for all $\epsilon$ satisfying  $$\log \epsilon^{-1} \in [\omega(1), O(\log n)],$$
while supporting constant-time insertions/deletions/queries either in expectation \cite{pagh2005optimal} or in the worst case \cite{bercea2020dynamic, bender2021all} (with high probability).

The central open question in the study of filters is whether it is possible to achieve $r = O(1)$ wasted-bits-per-key for all $\epsilon$. It is known that $\Omega(1)$ wasted-bits-per-key are \emph{necessary}, at least for some values of $\epsilon$ \cite{lovett2010lower} (namely, $\epsilon = \Theta(1)$), but it is not known whether $O(1)$ wasted-bits-per-key is \emph{achievable}. 

We show that, for any positive constant $k$, it is possible to achieve a filter that uses space 
\begin{equation}
r = \log e + o(1) = O(1)
    \label{eq:const}
\end{equation} wasted bits per key for all inverse-power-of-two $\epsilon$ satisfying 
$$\log \epsilon^{-1} \in [\omega(1), \log n / \log^{(k)} n],$$
while supporting worst-case constant-time insertions/deletions/queries (with high probability). The total space used by the data structure is therefore
$$n \log \epsilon^{-1} + n\log e + o(n)$$
bits. This resolves the question of whether $r = O(1)$ is achievable in all cases except for when $\log^{-1} \epsilon$ is very close to $\log n$. Finally, we show that for any value of $\log^{-1} \epsilon$ (including $\log^{1} \epsilon = \Theta(\log n)$), the same time/space tradeoff curve that we achieve for hash tables, in which $O(k)$-time insertions/deletions yield $O(\log^{(k)} n)$ wasted bits per key, is achievable for dynamic filters.

We remark that the specific constant $\log e$ that we achieve in Equation \eqref{eq:const} is information-theoretically optimal for any filter that is constructed by storing fingerprints in a hash table. (This includes all modern dynamic filters \cite{carter1978exact, pagh2005optimal, bender2018bloom, bercea2020dynamic, liu2020succinct}.) Thus, improving upon this constant would require a fundamentally new approach to building constant-time filters. We conjecture that no such improvements are possible (even for non-constant-time dynamic filters)---proving a lower bound for this claim is an appealing direction for future work. 

\subsection{Preliminaries}\label{sec:prelim}

We conclude the section by formalizing several preliminaries that we will need throughout the paper.

\paragraph{Notation.}
We use $[i, j]$ to denote the range $\{i, \ldots, j\}$, we use $[i]$ to denote $[1, i]$, and we use $\log^{(i)} n$ to denote the function given by $\log^{(0)} n = n$ and $\log^{(i)} n = \max(\log \log^{(i)} n, 1)$ for all $i \ge 0$. Note that, as a matter of convention, we do not allow $\log^{(i)} n$ to become sub-constant.

\paragraph{High-probability guarantees.} We say that an event occurs with high probability (w.h.p.) in $n$ if it occurs with probability $1 - 1 / \poly(n)$. Our hash tables will offer a deterministic guarantee on the running times of queries, a high-probability guarantee on the running time of any given insertion/deletion, and a high-probability guarantee on the space consumption at any given point in time. To simplify discussion, we will allow for our hash tables to have alternative failure modes (e.g., some bin overflows), with the implicit assumption that whenever a low-probability failure event occurs during an insertion/deletion, the hash table is then be rebuilt from scratch using new randomness---this means that failure events cause the hash table to violate time/space guarantees, but not correctness guarantees.

\paragraph{Simulating fully random hash functions.}
Whereas early work on hash tables \cite{Fredman82FKS, dietzfelbinger1990new, dietzfelbinger1988dynamic} was bottlenecked by the known families of hash functions, there are now well-established techniques \cite{siegel1989universal, pagh2003uniform, dietzfelbinger2003almost, arbitman2009amortized, bender2021all} for simulating fully random hash functions in hash tables. Notably, Siegel \cite{siegel1989universal} showed that for some positive $\epsilon > 0$, there is a family of constant-time hash functions that can be constructed in time $o(n)$ and that is $n^{\epsilon}$-independent.\footnote{Siegel's construction requires that the universe $U$ of keys has at most polynomial size---but it can also be used with a larger universe by first performing dimension reduction to a $\poly(n)$-size universe using a pairwise independent hash function.} In the context of hash tables, this can be amplified to simulate $\poly(n)$-independence \cite{arbitman2009amortized, bender2021all} with the following ``sharding'' technique: use a hash function $h_1$ to partition the elements into buckets with sizes in the range $[n^{\delta}, n^{\delta} + n^{2\delta/3}]$; then implement each bucket as its own hash table, where the all of the buckets share access to a single $n^{\epsilon}$-independent family $\mathcal{H}$ of hash functions---if a given bucket has size $m = \Theta(n^{\delta})$, then $\mathcal{H}$ is $\poly(m)$-independent. Thus we can assume without loss of generality that we have access to $\poly(n)$-independent hash functions.

In Section \ref{sec:quotient}, in order to perform quotienting, we will also want access to random \emph{permutation} hash functions, that is, hash functions $h$ that are bijective on some universe $U$ of keys. As long as $|U| \le \poly(n)$, then there are again well-established techniques for simulating full randomness. 
Naor and Reingold (Corollary 8.1 of \cite{naor1999construction}), building on seminal work by Luby and Rackoff \cite{luby1988construct}, showed how to construct in time $o(n)$ an $n^\epsilon$-wise $1 / n^\delta$-dependent family of permutations with constant-time evaluation. Subsequent work showed how to amplify this to $n^\epsilon$-wise $1/\poly(n)$-dependence, which allows for the simulation of $n^\epsilon$-wise independence with probability $1 - 1 / \poly(n)$. Finally, using a similar sharding technique as described above (but with $h_1$ implemented using a single-round Feistel permutation, as in \cite{arbitman2009amortized}), one can use such a family of hash functions to simulate $\poly(n)$-independence in a hash table (see Section 7 of \cite{bender2021all} for an in-depth discussion). Thus, as long as $|U| \le \poly(n)$, then we can assume without loss of generality that we have access to $\poly(n)$-independent permutation hash functions.

\paragraph{Machine model.}
Our analyses will be in the standard word RAM model. If keys/value pairs are each $w$ bits long, then we shall assume a machine word of size at least $w$. To analyze space consumption, we will assume that algorithms have the ability to allocate/free memory with $O(\log n)$-bit pointers. We remark, however, that all of our algorithms have highly predictable allocation patterns, and are therefore straightforward to implement using a small number of large memory slabs (e.g., when keys/values are $\Theta(\log n)$ bits, we need only to allocate $\polylog(n)$ slabs of memory at a time).

\section{The Probe-Complexity Problem}\label{sec:probe}

Recall that the probe-complexity problem can be defined formally as follows. Let $ U $ be a universe of balls, and let $n \in \mathbb{N}$ be the number of slots\footnote{Whereas the hash table literature typically uses $n$ to be the number of keys/values, the balls-to-bins literature typically uses $n$ to be the number of slots (or bins). In this section, we follow the balls-to-bins convention, and in the rest of the paper we follow the hash-table convention.}, and let $\epsilon \in [0, 1)$ be a load-factor parameter. A \defn{balls-to-slots scheme} assigns to every ball $ x\in U$ a fixed \defn{probe sequence} $h(x) = \langle  h_1 (x), h_2 (x),\ldots\rangle$, each $h_i(x) \in [n]$.

We define the \defn{probe-complexity problem} as follows. There are $ n $ \defn{slots} each with capacity 1. An oblivious adversary (who does not know $h$) selects a sequence of ball insertions/deletions such that at most $(1 - \epsilon)n + 1$ balls are present at a time. A balls-to-slots scheme must maintain an assignment of balls to slots such that each ball $ x $ (that is present) is assigned to slot $h_i(x)$ for some $i$. If a ball $x$ is in slot $r$, then we say that $x$ has \defn{probe complexity} $\Theta(1 + \log i)$ where $i = \operatorname{argmin}_j \{h_j(x) = r\}$.

To simplify discussion, we shall also give the balls-to-slots scheme $n$ extra \defn{special slots}. Any ball that is stored in a special slot automatically has probe complexity $\log n$. Whenever a ball insertion occurs, the ball is first placed into a special slot. The balls-to-slots scheme can then move that ball (and other balls) around in order to reduce the average probe complexity of the balls that are present. 

The balls-to-slots scheme is measured by two objectives: the average probe complexity of the balls that are present; and the \defn{switching cost}, which is the number of balls that the balls-to-slots scheme moves around on any given insertion/deletion. When a balls-to-slots scheme is used in an augmented open-addressed hash table, the switching cost is (a lower bound on) the time spent on a given insertion/deletion, and the probe complexity of a key $x$ is (a lower bound on) the number of metadata bits that must be stored to support constant-time queries for $x$.  

In this section, we give an optimal solution to the probe-complexity problem (Subsection \ref{sec:probeupper}), achieving probe-complexity $O(\log^{(k)} n)$ with switching cost $O(k)$. This holds even when $\epsilon = 1/n$, meaning that there are up to $ n $ balls present at a time (and there are up to $n - 1$ balls present prior to any given insertion).

We then also prove a matching lower bound in Subsection \ref{sec:lower}: any balls-to-slots scheme that supports $\epsilon \le 1 / \log^{(O(1))}(n)$ with expected average probe complexity $O(\log^{(k)} n)$ must incur average switching cost $\Omega(k)$. Note that, whereas our upper bound supports $\epsilon = 1/n$ (i.e., the slots are completely full), our lower bound allows for $\epsilon$ to be as large as $1 / \log^{(O(1))} n$, without changing the answer for what the optimal tradeoff curve between probe complexity and switching cost is.

\subsection{A balls-to-slots scheme with small average probe complexity}\label{sec:probeupper}

In this section, we fix $\epsilon = 1/n$, and we construct a balls-to-slots scheme that achieves switching cost $k + 1$ ($1$ for inserting a ball, and $k$ for moving around balls already in the system) while also achieving expected average probe complexity $\Theta(\log^{(k + 1)} n)$. At the end of the section, we also show how to transform the bound on average probe complexity into a high-probability result.

\paragraph{Defining each ball's probe sequence.} 
Define $s_0 = n$ and define $ s_i = \Theta((\log ^ { (i) } n) ^ 6)$ to be a power of two for each $i \in [1, k]$. We shall assume for simplicity that $n$ is divisible by $s_i$ for each $i > 0$, but the same arguments easily extend to arbitrary $n$.

We shall consider $k + 1$ different ways of partitioning the $n$ slots into bins: for $i \in [0, k]$, the \defn{depth-$ i $ partition} breaks the slots into contiguous bins of size $s_i$. For each depth-$i$ bin $b$, with $i > 0$, the \defn{parent bin} $b'$ of $b$ is the depth-$ (i -1) $ bin that contains $b$. (And $b$ is a \defn{child} of $b'$.) So the partitions are arranged in a tree, where the depth-$i$ components are children of the depth-$(i - 1)$ components, and where the branching factor of the tree decreases roughly exponentially between levels.  
We call this tree the \defn{$k$-\kicktree}.

Before defining $h$, we define an auxiliary function $g$. Each ball $x$ randomly selects a leaf of the tree (i.e, some depth-$k$ bin $b$) and defines $g_i(x)$ to be the depth-$i$ ancestor of $b$. In other words, each $g_i(x)$ is a uniformly random depth-$i$ bin, and the sequence $g_0(x), g_1(x), \ldots, g_k(x)$ forms a root-to-leaf path through the \kicktree.

The function $h_i(x)$ first cycles through the slots of $g_{k}(x)$, then the slots of $g_{k - 1}(x)$, then the slots of $g_{k - 2}(x)$, etc. Formally, this means that for each depth $i$ and for each $j \in [s_i]$, $h_{(k + 1 - i) s_i + j}(x)$ is the $j$-th position in bin $g_i(x)$.  Since the bin $g_0(x)$ contains \emph{all} slots in $[n]$, the sequence $\{h_i(x)\}_{i \in [(k + 1) n + 1, (k + 2) n ]}$ hits every slot, so we do not need to define $h_i$ for $i > (k + 2) n$.

Whenever a ball $ x $ is inserted, it ends up at some depth $i$, and within that depth it is assigned to some position $j \in [s_i]$ of bin $g_i(x)$. The ball's probe complexity is then
$$O(1 + \log (k + 1 - i) + \log s_i).$$
Since $k + 1 - i = O(\log^* s_i)$, the probe complexity reduces to
$$O(\log s_i).$$ 
Throughout the rest of the section, we will think of each ball $x$'s position as being determined by a pair $(i, j)$, where $i$ is a depth and $j$ is a position in $g_i(x)$, rather than being determined directly by the probe sequence $h$. If a ball $x$ is associated with depth $i$, we will treat it as having probe complexity $\Theta(s_i)$.

\paragraph{The structure of a ball insertion.}
Call a depth-$ i $ bin \defn{saturated} if the bin contains no free slots and if all of the balls in the bin are associated with depths $i$ or greater. Note that, when we are performing an insertion, $g_0(x)$ cannot be saturated, but $g_i(x)$ for $i > 0$ may be.

Whenever a ball $x$ is inserted, we select a depth $i$ such that none of the bins $g_0(x), g_1(x), \ldots, g_i(x)$ are saturated. (We will describe the process for selecting $i$ later.) We assign $ x $ to bin $g_i(x)$ with depth $ i $. If there is a free slot in $g_i(x)$, then we use it; otherwise, since $g_i(x)$ is not saturated, the bin must contain a ball $x'$ associated with some depth $i' < i$. We assign $ x $ to the slot that $x'$ is in, and we reassign $x'$ to a new slot as follows. If there is a free slot in $g_{i'}(x')$, then we use it; otherwise, since $g_{i'}(x') = g_{i'}(x)$ is not saturated, the bin must contain a ball $x''$ associated with some depth $i'' < i'$. We assign $x'$ to the slot that $x''$ is in, and we reassign $x''$ to a new slot, etc., where $x''$ may displace some ball $x'''$ at a depth $i''' < i''$, and so on. In effect, we treat the depths as priorities, so that whenever a ball $y$ is moved, it is permitted to displace any other ball $y'$ that is of a lower priority.

Each insertion has switching cost at most $k + 1$, since it places the ball that is being inserted and then rearranges at most one ball in each depth $\{0, 1, \ldots, k - 1\}$. Moreover, whenever a ball is moved, the depth that it is in stays the same, and thus the $O(\log s_i)$-bound on the probe complexity for that ball also stays the same. In order to achieve $O(\log^{(k + 1)} n)$ average probe complexity (in expectation), it therefore suffices to ensure that, whenever a ball is inserted, the expected probe complexity for the new ball is $O(\log s_k) = O(\log^{(k + 1)} n)$.

\paragraph{Choosing which depth to use.}
The final piece of the algorithm that we must specify is how to choose the depth $ i $ that a given ball insertion will use. 

The most natural approach is to be greedy: select the largest $ i $ such that none of bins $g_0(x), g_1(x), \ldots, g_i(x)$ are saturated. This optimizes the probe complexity of the current insertion but comes with a downside. We are not doing anything to control which bins are saturated, so even though we are selecting $ i $ greedily, we cannot argue that $i$ will actually be large for any given insertion (for example, what if $g_1(x)$ is saturated?).

Our solution is to take an \emph{almost} greedy approach. Each ball $ x $ is assigned an independent hash $ s (x) \in [0, k]$ satisfying
$$\Pr[s(x) < i] = 1 / (\log^{(i)} n)^2$$
for each $i \in [1, k]$. The hash $s(x)$ dictates the  \emph{maximum possible depth} that ball $x$ is permitted to be in.  Each insertion $x$ uses depth $\min(j, s(x))$, where $j$ is the largest value such that none of the bins $g_0(x), g_1(x), \ldots, g_j(x)$ are saturated.

\paragraph{Analyzing a given insertion.} We now analyze the expected probe complexity of a given ball. 

\begin{lemma}
Consider the insertion of some ball $ x $ into a $k$-\kicktree with $n$ slots. The expected probe complexity of $ x $ is $ O (\log^{(k+1)} n) $.
\label{lem:insertionprobe}
\end{lemma}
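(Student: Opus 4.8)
The plan is to bound the expected probe complexity of $x$ by controlling, for each depth $i$, the probability that $x$ is placed at depth at most $i$. Recall that a ball at depth $i$ has probe complexity $\Theta(1+\log s_i)$, that $1+\log s_i=\Theta(\log^{(i+1)}n)$ (with the paper's sub-constant floor), and that the depth $x$ receives is $d(x)=\min(j,s(x))$, where $j$ is the largest index with $g_0(x),\dots,g_j(x)$ all unsaturated at the moment of insertion. Write $\ell_i:=1+\log s_i$, which is nonincreasing in $i$. By summation by parts,
$$\E[\text{probe complexity of }x]=\Theta\!\left(\ell_k+\sum_{i=0}^{k-1}\Pr[d(x)\le i]\,(\ell_i-\ell_{i+1})\right)=O\!\left(\log^{(k+1)}n+\sum_{i=0}^{k-1}\Pr[d(x)\le i]\cdot\log^{(i+1)}n\right),$$
so, since $\log^{(k+1)}n\ge 1$, it suffices to prove that $\sum_{i=0}^{k-1}\Pr[d(x)\le i]\cdot\log^{(i+1)}n=O(1)$.

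If $d(x)\le i$ then either $s(x)<i+1$ (probability $(\log^{(i+1)}n)^{-2}$ by the definition of $s$) or else $g_m(x)$ is saturated for some $m\in[1,i+1]$ (note $g_0(x)$ is never saturated during an insertion, since it comprises all $n$ slots while at most $n-1$ other balls are present). Hence $\Pr[d(x)\le i]\le (\log^{(i+1)}n)^{-2}+\sum_{m=1}^{i+1}q_m$, where $q_m:=\Pr[g_m(x)\text{ is saturated}]$, and the key estimate is $q_m=O\big((\log^{(m)}n)^{-2}\big)$.

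To bound $q_m$, fix a depth-$m$ bin $B$ and let $W_m(B)$ be the number of currently-present balls $y\ne x$ with $g_m(y)=B$ and $s(y)\ge m$. If $B$ is saturated, all $s_m$ of its slots hold balls at depth $\ge m$; a ball at depth $d\ge m$ physically resides in bin $g_d(\cdot)\subseteq g_m(\cdot)$, so if it resides inside $B$ then $g_m(\cdot)=B$ (the depth-$m$ bins partition the $n$ slots), and depth $\ge m$ forces $s(\cdot)\ge m$; thus those $s_m$ balls are all counted by $W_m(B)$, so saturation of $B$ implies $W_m(B)\ge s_m$. Because the adversary is oblivious and the pairs $(g(y),s(y))$ of distinct balls are $\poly(n)$-wise independent, conditioning on the (hash-independent) set of present balls makes $W_m(B)$ a sum of $\poly(n)$-wise independent indicators, each equal to $1$ with probability $\frac{s_m}{n}\big(1-(\log^{(m)}n)^{-2}\big)$, so its mean is at most $s_m\big(1-(\log^{(m)}n)^{-2}\big)=s_m-\Theta\big((\log^{(m)}n)^4\big)$. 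This is the crux: the reservation built into $s$ leaves an additive slack of $\Theta\big((\log^{(m)}n)^4\big)$ below the bin capacity $s_m=\Theta\big((\log^{(m)}n)^6\big)$, and this slack exceeds $\sqrt{s_m}=\Theta\big((\log^{(m)}n)^3\big)$ by a polynomial factor, so a Chernoff bound for $\poly(n)$-wise independent variables gives $\Pr[W_m(B)\ge s_m]\le \exp\!\big(-\Omega((\log^{(m)}n)^2)\big)\le (\log^{(m)}n)^{-2}$, where the constant hidden in $s_m=\Theta((\log^{(m)}n)^6)$ can be taken large enough to force the last inequality for every value of $\log^{(m)}n\ge 1$. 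Finally, $B=g_m(x)$ is a uniformly random depth-$m$ bin independent of all other balls, so $q_m=\sum_B\frac{s_m}{n}\Pr[B\text{ saturated}]\le \frac{s_m}{n}\cdot\frac{n}{s_m}\cdot\max_B\Pr[B\text{ saturated}]=O\big((\log^{(m)}n)^{-2}\big)$.

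It remains to sum. Using $\Pr[d(x)\le i]= O\big(\sum_{m=1}^{i+1}(\log^{(m)}n)^{-2}\big)$ and swapping the order of summation,
$$\sum_{i=0}^{k-1}\Pr[d(x)\le i]\cdot\log^{(i+1)}n = O\!\left(\sum_{m=1}^{k}(\log^{(m)}n)^{-2}\sum_{j=m}^{k}\log^{(j)}n\right)=O\!\left(\sum_{m=1}^{k}(\log^{(m)}n)^{-1}\right)=O(1),$$
since $\sum_{j\ge m}\log^{(j)}n=\Theta(\log^{(m)}n)$ and, as $m$ decreases from $k$, the quantities $\log^{(m)}n$ grow tower-exponentially (using $k\le\log^*n$), so $\sum_{m}(\log^{(m)}n)^{-1}$ is a tower-exponentially convergent series. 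Together with the first display, this gives $\E[\text{probe complexity of }x]=O(\log^{(k+1)}n)$. The step I expect to be the main obstacle is the bound $q_m=O((\log^{(m)}n)^{-2})$: at load factor $1$ a depth-$m$ bin's expected fill among eligible balls is essentially $s_m$, so the argument closes only because the hash $s$ manufactures precisely enough slack for concentration to bite, and because we merely need a \emph{random} bin to be unlikely saturated rather than requiring that no bin ever saturates; some care is also needed to confirm the tail bound at this scale survives under $\poly(n)$-wise (rather than full) independence, and to handle the sums cleanly when $\log^{(m)}n$ is near its constant floor.
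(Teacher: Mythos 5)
Your proposal is correct and follows essentially the same route as the paper's proof: split according to whether the hash $s(x)$ or a saturated ancestor bin limits the depth, bound the saturation probability of a depth-$m$ bin by a Chernoff bound exploiting the $\Theta((\log^{(m)}n)^4)$ slack that the $(\log^{(m)}n)^{-2}$ trimming of $s(\cdot)$ creates below the capacity $s_m=\Theta((\log^{(m)}n)^6)$, and then sum a tower-convergent series. The repackaging via summation by parts, the union bound over saturated ancestors, and the slightly weaker tail estimate $q_m=O((\log^{(m)}n)^{-2})$ (versus the paper's $O((\log^{(m-1)}n)^{-2})$) are cosmetic and do not affect the conclusion.
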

\begin{proof}
Let $j$ be the largest value such that none of the bins $g_0(x), g_1(x), \ldots, g_j(x)$ are saturated. Then the probe complexity of $x$, after being inserted, is 
$$O(\log \max(s_{s(x)}, s_j)) = O(\log s_{s(x)}) + O(\log s_j).$$

We can bound the expected value of the first quantity by
\begin{align*}
\E[\log s_{s(x)}] & = \log s_k + \sum_{i \in [0, k)} \Pr[s(x) = i] \cdot \log s_i \\
                  & \le  O(\log^{(k+1)} n) + \sum_{i \in [0, k)} \Pr[s(x) < i  + 1] \cdot \log s_i \\
                  & = O(\log^{(k+1)} n) + \sum_{i \in [0, k)} \frac{1}{(\log^{(i + 1)} n)^2} \cdot \log s_i \\
                  & = O(\log^{(k+1)} n) + \sum_{i \in [0, k)} \frac{1}{(\log^{(i + 1)} n)^2 } \cdot \log (\log^{(i)} n)^6 \\
                  & = O\left(\log^{(k+1)} n + \sum_{i \in [0, k)} \frac{1}{(\log^{(i + 1)} n)^2 } \cdot \log^{(i + 1)} n\right) \\
                  & = O\left(\log^{(k+1)} n + \sum_{i \in [0, k)} \frac{1}{\log^{(i + 1)} n}\right) \\
                  & = O(\log^{(k+1)} n).
\end{align*}

We can bound the expected value of the second quantity by
\begin{equation}
\E[\log s_{j}]  \le \log s_k + \sum_{i \in [0, k)} \Pr[g_{i + 1}(x) \text{ saturated}] \cdot  \log s_i. 
\label{eq:sj}
\end{equation}
In order for $g_{i + 1}(x)$ to be saturated (prior to $x$'s insertion), there must be $s_{i + 1}$ balls $y$ present that satisfy $g_{i + 1}(y) = g_{i + 1}(x)$ and $s(y) \ge i + 1$. For a given $y \neq x$, 
\begin{align*}
& \Pr[g_{i + 1}(y) = g_{i + 1}(x) \text{ and } s(y) \ge i + 1] \\
& = \Pr[g_{i + 1}(y) = g_{i + 1}(x)] \cdot \Pr[s(y) \ge i + 1] \\
& = \frac{1}{n / s_{i + 1}} \cdot \left(1 - \Pr[s(y) < i + 1]\right) \\
& = \frac{1}{n / s_{i + 1}} \cdot \left(1 - 1 / (\log^{(i + 1)} n)^2\right). \\
\end{align*}
The number $Y$ of such $y$ therefore satisfies
\begin{align*}
\E[Y] & \le  n \cdot \frac{1}{n / s_{i + 1}} \cdot \left(1 - 1 / (\log^{(i + 1)} n)^2\right) \\
& = s_{i + 1} \cdot \left(1 - 1 / (\log^{(i + 1)} n)^2\right) \\
& = (\log^{(i + 1)} n)^6 - (\log^{(i + 1)} n)^4. \\
\end{align*}
Since $Y$ is a sum of independent indicator random variables, we can apply a Chernoff bound to deduce that
\begin{align*}
\Pr[Y \ge (\log^{(i + 1)} n)^6] & \le e^{-\Omega(\log^{(i + 1)} n)^2} \\
                                & \le O\left(\frac{1}{(\log^{(i)} n)^2}\right). 
\end{align*}
This is an upper bound on the probability that $g_{(i + 1)}(x)$ is saturated. Thus, by \eqref{eq:sj}, 
\begin{align*}
\E[\log s_{j}] & \le \log s_k + \sum_{i \in [0, k)} \frac{1}{(\log^{(i)} n)^2} \cdot \log s_i \\
& = O\left(\log^{(k+1)} n + \sum_{i \in [0, k)} \frac{1}{(\log^{(i)} n)^2} \cdot \log^{(i + 1)} n\right) \\
& = O(\log^{(k+1)} n).\\
\end{align*}
This completes the proof of the lemma.
\end{proof}

It's worth taking a moment to understand the bottlenecks in the Lemma~\ref{lem:insertionprobe}. For convenience, let us focus on the setting where we are aiming for average probe complexity $O(1)$, so $k = \Theta(\log^* n)$; and further assume that, if a ball is placed in a depth-$d$ bin, then the ball has probe complexity $\Theta(\log s_d)$. Consider some bin $ b $ with depth $ i > 0$, meaning that the bin has size $ s_i $. If we want to ensure that the probability of $b$ being saturated is $o(1)$, then we must ensure that the expected number of elements in $b$ is $s_i - \omega(\sqrt{s_i})$ (because the standard deviation of the number of elements in the bin is $\Theta(\sqrt{s_i})$). This means that the hash function $s(x)$ must satisfy 
$$\Pr[s(x) < i] = \omega(1/\sqrt{s_i}).$$
However, whenever $s(x) < i$, the probe complexity of $x$ is forced to be at least $\log s_{i - 1}$. Thus the expected probe complexity of each ball $x$ must be at least
$$\omega\left(\frac{\log s_{i - 1}}{\sqrt{s_i}}\right).$$
Since we want average probe complexity $O(1)$, it follows that $\frac{\log s_{i - 1}}{\sqrt{s_i}} \le 1$, or equivalently,
$$s_{i - 1} \le 2^{\sqrt{s_i}}.$$
This is the inequality that fundamentally limits the rate at which the $s_i$'s can shrink and that forces us to have $\Omega(\log^* n)$ depths in order to achieve average probe complexity $O(1)$. In fact, we'll see in Section \ref{sec:lower} that this relationship between probe complexity and switching cost is fundamental---no balls-to-slots scheme can do better than the $k$-\kicktree does.

An immediate consequence of Lemma \ref{lem:insertionprobe} is:
\begin{theorem}
For any $k \in [\log^* n - 1]$, the $k$-\kicktree is a balls-to-slots scheme with $\epsilon = 1/n$ that achieves worst-case switching cost $k + 1$ and expected average probe complexity $O(\log^{(k+1)} n)$. 
\label{thm:probeupper}
\end{theorem}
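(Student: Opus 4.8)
The plan is to derive both halves of the statement — worst-case switching cost $k+1$ and expected average probe complexity $O(\log^{(k+1)} n)$ — almost immediately from the construction, using Lemma~\ref{lem:insertionprobe} as the only real input. First I would verify that the insertion procedure is well-defined and bound its switching cost. Just before an insertion of $x$ there are at most $n-1$ balls present, so the root bin $g_0(x)$ (which is all $n$ slots) has a free slot and is in particular not saturated; hence the index $j$ (the largest value for which $g_0(x),\ldots,g_j(x)$ are all unsaturated) exists, and $x$ is assigned depth $d(x)=\min(j,s(x))\le j$. I would then follow the displacement chain: each step moves a floating ball into bin $g_{d}(x)$ for a depth $d$ that strictly decreases along the chain, starting from $d(x)$; since all depths lie in $\{0,\ldots,k\}$ the chain has length at most $k+1$ and ends at the latest when it reaches $g_0(x)$, which always has a free slot. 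This simultaneously gives switching cost at most $k+1$: the ball $x$ itself plus at most one relocated ball at each depth in $\{0,1,\ldots,k-1\}$. Deletions I would handle by simply vacating the deleted ball's slot — switching cost $1$, no ball relocated, and no bin's occupancy increased.

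For the probe-complexity half, the key observation I would isolate is that relocations are \emph{depth-preserving}: every ball appearing in a displacement chain stays at the depth it was given when it was inserted, merely changing slots inside its own depth-$d(\cdot)$ bin. Consequently, once a ball $x$ is inserted at depth $d(x)$, its probe complexity is $O(\log s_{d(x)})$ for the rest of its lifetime (folding the $\log(k+1-d(x))$ term into $O(\log s_{d(x)})$ via $k+1-d(x)=O(\log^* s_{d(x)})$, exactly as in the paragraph preceding the statement). Because the adversary is oblivious, Lemma~\ref{lem:insertionprobe} applies verbatim to every insertion regardless of the preceding request sequence, so $\E[\log s_{d(x)}]=O(\log^{(k+1)} n)$ for each ball $x$.

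The last step is a one-line linearity-of-expectation argument. Fix any time $t$ and let $S_t$ be the set of balls present at time $t$; since $\epsilon=1/n$, the set $S_t$ is deterministic (only the slot assignment is random) with $|S_t|\le n$. The total probe complexity at time $t$ is $\sum_{x\in S_t}O(\log s_{d(x)})$, whose expectation is $\sum_{x\in S_t}O(\E[\log s_{d(x)}])=|S_t|\cdot O(\log^{(k+1)}n)$; dividing by $|S_t|$ gives expected average probe complexity $O(\log^{(k+1)} n)$, which together with the switching-cost bound proves the theorem for all $k$ in the stated range.

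I expect the only genuinely nontrivial point to be the well-definedness of the displacement chain: one must check that whenever the chain targets a depth $i'<d(x)$, the bin $g_{i'}(x)$ is still unsaturated \emph{at that moment} (not merely before the insertion began), so that the chain can always continue by finding a free slot or evicting a lower-depth ball. This needs a small invariant — the steps taken so far inside $g_{i'}(x)$ only remove balls of depth $\ge i'$ and add balls of depth $\ge i'$, so any pre-insertion ball of depth $<i'$ witnessing non-saturation is untouched — and it is the single place where the nested tree structure of the $k$-\kicktree is used. Everything downstream (depth-invariance of moves, the Chernoff-based saturation bound inside Lemma~\ref{lem:insertionprobe}, and the final averaging) is routine.
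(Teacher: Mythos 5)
Your proof is correct and follows essentially the same route as the paper: the switching-cost bound and depth-preservation of moves come directly from the construction, and the expected average probe complexity is obtained by applying Lemma~\ref{lem:insertionprobe} to each insertion and using linearity of expectation. The invariant you spell out for the well-definedness of the displacement chain (pre-insertion witnesses of non-saturation are untouched, since the chain only swaps balls of depth at least the current level and consumes no free slots) is left implicit in the paper's description of insertions, but it is the same argument.
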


We conclude the section by transforming our bound unexpected average probe complexity into a high-probability bound.

\begin{theorem}
For any $k \in [\log^* n - 1]$, there is a balls-to-slots scheme with $\epsilon = 1/n$ that achieves worst-case switching cost $k + 1$ and average probe complexity $O(\log^{(k+1)} n)$, with probability $1 - 1 / 2^{n^{\Omega(1)}}$ at any given moment. 
\label{thm:probeupper2}
\end{theorem}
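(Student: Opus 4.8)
The plan is to convert the expected-value bound of Theorem~\ref{thm:probeupper} into a high-probability bound by freezing depths and summing. Fix a moment $t$ and let $S$ be the set of present balls; since the adversary is oblivious, $S$ is a fixed set with $|S|\le n$, and since a ball's depth --- hence its probe complexity --- is set at insertion time and never changes afterward, the total probe complexity at time $t$ is exactly $\sum_{x\in S}P_x$, where $P_x$ is the probe complexity $x$ received when inserted. By the analysis in the proof of Lemma~\ref{lem:insertionprobe}, $P_x\le c\log s_{s(x)}+c\log s_{j_x}$ for an absolute constant $c$, where $j_x$ is the largest $j$ with none of $g_0(x),\dots,g_j(x)$ saturated at $x$'s insertion. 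So I would bound the total by $\Sigma_1+\Sigma_2$ with $\Sigma_1=c\sum_{x\in S}\log s_{s(x)}$ and $\Sigma_2=c\sum_{x\in S}\log s_{j_x}$, and show that each is $O(n\log^{(k+1)}n)$ except with probability $2^{-n^{\Omega(1)}}$; this gives average probe complexity $O(\log^{(k+1)}n)$ in the regime of interest (where $\Omega(n)$ balls are present).

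The sum $\Sigma_1$ is easy: the summands $\log s_{s(x)}$, $x\in S$, are independent (the hashes $s(\cdot)$ are independent and $S$ is fixed), each lies in $[0,\log n]$, and $\E[\Sigma_1]=O(n\log^{(k+1)}n)$ by the first computation in the proof of Lemma~\ref{lem:insertionprobe}. Since that expectation is $\Omega(n)$, a Hoeffding bound gives $\Sigma_1=O(n\log^{(k+1)}n)$ except with probability $\exp(-\Omega(n/\log^2 n))=2^{-n^{\Omega(1)}}$.

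The interesting term is $\Sigma_2$. Using $\log s_{j_x}\le\log s_k+\sum_{i=1}^{k}\mathbf{1}[\,g_i(x)\text{ saturated at }x\text{'s insertion}\,]\cdot\log s_{i-1}$, we get $\Sigma_2\le cn\log s_k+c\sum_{i=1}^{k}\log s_{i-1}\,Z_i$, where $Z_i:=|\{x\in S: g_i(x)\text{ saturated at }x\text{'s insertion}\}|$; the first term is $O(n\log^{(k+1)}n)$, and by the Chernoff estimate in the proof of Lemma~\ref{lem:insertionprobe} we have $\E[Z_i]\le n\,e^{-\Omega((\log^{(i)}n)^2)}=O(n/(\log^{(i-1)}n)^2)$, so $\E[\sum_i\log s_{i-1}Z_i]=O(n\log^{(k+1)}n)$ by the same telescoping as in the lemma. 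To get a tail bound I would split the levels $i\in[1,k]$ into \emph{deep} ones, with $\log^{(i)}n=O(1)$ --- there are only $O(1)$ of these, each has $\log s_{i-1}=O(1)$, and the trivial bound $Z_i\le|S|\le n$ already makes them contribute only $O(n)$ in total --- and \emph{shallow} ones, with $\log^{(i)}n=\omega(1)$. For shallow $i$, the point is that $\log s_{i-1}Z_i$ exceeds the target only once $Z_i$ reaches a threshold $\theta_i=\Theta(n/(k\log^{(i)}n))$, which is vastly larger than $\E[Z_i]=ne^{-\Omega((\log^{(i)}n)^2)}$; moreover $Z_i\ge\theta_i$ forces $\Omega(\theta_i/s_i)=n^{1-o(1)}$ distinct depth-$i$ bins to be saturated at relevant insertion times (unless some depth-$i$ bin receives $\gg\max(s_i,\log n)$ balls of $S$, which is itself rare enough), each such saturation being an event of probability only $e^{-\Omega((\log^{(i)}n)^2)}<1/2$; so morally $\Pr[Z_i\ge\theta_i]$ is of order $2^{-\Omega(n\log^{(i)}n/k)}\le 2^{-n^{\Omega(1)}}$, and one then union-bounds over the $O(\log^* n)$ shallow levels.

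The main obstacle is making this ``morally'' rigorous: the saturation events of different bins are not independent --- a ball bumped to a shallow depth can occupy a slot that lies inside a sibling bin and thereby influence its occupancy --- and the operation sequence has no a priori length bound. I would address this by (i) for each ball counted by $Z_i$, extracting a local certificate of saturation consisting of $s_i$ balls that were simultaneously present (hence among the $\le n$ balls at one moment, a fixed set) and all hash via $g_i$ to one common depth-$i$ bin, chosen so that certificates for distinct bins involve disjoint hash coordinates and the per-bin Chernoff bound composes cleanly into a union bound over the certificate structure; and (ii) handling the cross-bin dependence and the sequence length by revealing the balls' hashes in insertion order and running a Freedman/Azuma-type supermartingale bound on $\sum_{x\in S}\log s_{d_x}$, using a stopping time at the first moment that an auxiliary ``few-saturated-bins'' invariant fails, that invariant being established by a bounded-differences argument applied only at the at most $n$ insertion times of balls in $S$ (the only moments the analysis ever inspects). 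The delicate part throughout is to bound \emph{how many} balls are pushed to a bad depth rather than \emph{whether any} ball is, since at intermediate levels $i$ --- where $(\log^{(i)}n)^2$ is $\omega(1)$ but $o(\log n)$ --- a single saturated bin is only $2^{-\polylog n}$-unlikely, and it is precisely the $n^{1-o(1)}$-fold blow-up of that exponent that yields the required $2^{-n^{\Omega(1)}}$ failure probability.
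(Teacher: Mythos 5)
Your decomposition of the total probe complexity into $\Sigma_1+\Sigma_2$ and the Hoeffding bound for $\Sigma_1$ are fine, but the heart of the theorem is exactly the step you leave as a plan: a $2^{-n^{\Omega(1)}}$ tail for $\Sigma_2$, i.e.\ for the number of balls that were inserted while an ancestor bin was saturated. As you yourself note, the saturation indicators are correlated across bins and across time, and neither of your proposed repairs is carried out; moreover both face concrete quantitative obstacles. In the certificate route, a ball counted by $Z_i$ only witnesses that its bin was saturated at its own insertion time, so after passing to the $\theta_i/O(s_i+\log n)$ distinct bins you must also specify, for each bin, a moment at which it was saturated; union-bounding over the up to $n$ candidate times costs a factor of order $n$ per bin, while the per-bin saturation probability supplied by the Chernoff estimate of Lemma~\ref{lem:insertionprobe} is only $e^{-\Omega((\log^{(i)}n)^2)}$, which at intermediate depths is far larger than $1/n$, so the union bound as sketched does not close. (Union-bounding instead over the identities of the $s_i$ certificate balls is even lossier, since $\binom{n}{s_i}(s_i/n)^{s_i}\approx e^{s_i}$ swamps the gain $e^{-s_i/(\log^{(i)}n)^2}$.) In the martingale route, Freedman/Azuma needs a worst-case or high-probability bound on how much a single ball's hashes can move $\Sigma_2$, and one hash can tip a bin into saturation, push later balls to lower depths, and thereby help saturate \emph{larger} ancestor bins; controlling this cascade is precisely the difficulty, and the ``few-saturated-bins invariant plus stopping time'' is named but not constructed. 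So the proposal does not yet establish the statement.

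The paper sidesteps all of this by using the freedom in the statement (``there is a balls-to-slots scheme''): it modifies the scheme rather than proving concentration for the single $n$-slot kick tree. An independent hash diverts a $1/\log n$ fraction of balls to special slots; the remaining balls are hashed to $\sqrt n$ subarrays of $\sqrt n$ slots, each run as its own independent $k$-kick tree (with overflow also sent to special slots). The subarray average probe complexities $X_1,\dots,X_{\sqrt n}$ are then independent, bounded by $O(\log n)$, and have mean $O(\log^{(k+1)}n)$ by Theorem~\ref{thm:probeupper}, so a single Chernoff bound over the $\sqrt n$ shards yields the $1-2^{-n^{\Omega(1)}}$ guarantee. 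If you want to complete your argument, the cleanest fix is this sharding idea; proving concentration for the unmodified kick tree would require genuinely new arguments beyond what you have outlined.
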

\begin{proof}
Let $\rho:U \rightarrow [\log n]$ be a fully independent and uniformly random hash function. Whenever a ball $x$ is inserted, if $\rho(x) = 1$, then place $x$ into a special slot.\footnote{Alternatively we could place $x$ into whatever slot $s$ is free and then move $x$ to a different free slot whenever that slot $s$ needs to be used by a different ball. This would increase the switching cost by at most $1$ per operation and avoid the use of special slots.} With probability $1 - 1 / 2^{n / \log n}$, the number of balls in special slots is  $O(n / \log n)$ at any given moment, meaning that they contribute $O(1)$ to the average probe complexity.

We hash the remaining balls $x$ (i.e., balls $x$ satisfying $\rho(x) > 1$) randomly to subarrays of size $\sqrt{n}$. The expected number of balls in a given sub-array is at most $\sqrt{n} (1 - 1 / \log n)$, so with probability $1 - 1 / 2^{n^{\Omega(1)}}$ each of the subarrays receives at most $\sqrt{n} - 1$ balls at any given moment. In the rare case that a sub-array overflows, we can simply put the ball $x$ being inserted into a special slot.

Finally, we implement each subarray using a $k$-\kicktree. By Theorem \ref{thm:probeupper}, each insertion incurs switching cost $k + 1$ and each subarray independently incurs expected average probe complexity $O(\log^{(k+1)} n)$. 

If we define $X_1, X_2, \ldots, X_{\sqrt{n}}$ to be the average probe complexities of the subarrays, then $\{X_i\}$ are independent random variables satisfying $X_i \le O(\log n)$ and $\E[X_i] = O(\log^{(k+1)} n)$. By a Chernoff bound, we have with probability $1 - 1 / 2^{n^{\Omega(1)}}$ that the average probe complexity across the entire balls-to-slots scheme is $O(\log^{(k+1)} n)$, as desired.
\end{proof}

\subsection{A lower-bound on the relationship between switching cost and probe complexity}\label{sec:lower}

In this section we will construct a sequence of insertions/deletions such that, in order for an online balls-to-slots scheme to achieve small average probe complexity, they must incur a large average switching cost. Since we are constructing a lower bound, we shall refer to the balls-to-slots scheme that we are analyzing as our \defn{adversary}.

Let $U$ be the universe of balls, let $h$ be the function mapping each ball $x$ to a probe sequence $\{h_i(x)\}$, and let $n$ be the number of slots. For $ i\in\N, j\in [n] $, define 
$$q(h, i, j) = n\Pr_{ x\in U } [h_k (x) = j \text{ for some }k \le i].$$
Intuitively, if there are $n$ random balls present, then $q(h, i, j)$ represents the expected number of balls that are capable of residing in slot $j$ with probe complexity at most $1 + \log i$.

If the $h_i(x)$'s are selected uniformly and independently in $[n]$, then we will have $q(h, i, j) = \Theta(i)$ for each $i \in [n]$. Call $h$ \defn{nearly uniform} if $ q (h, i, j) < \poly (i) $ for all $ i, j $. As a minor technical convention, we will also allow for $h_i(x)$ to be null, in which case it does not contribute to any $q(h, i, j)$ (and $h_i(x)$ cannot be used by any ball assignment). 

We shall begin by proving a lower bound that holds assuming a nearly uniform $h$. We shall also initially assume that $\epsilon = 1/ n$ and that the average probe complexity being achieved by the balls-to-slots scheme is $O(1)$. We will then remove these assumptions at the end of the section.

\begin{theorem}
 Suppose the universe $U$ has sufficiently large polynomial size. Consider any balls-to-slots scheme that uses nearly-uniform probe sequences, that achieves expected average probe complexity $O(1)$ (across all balls in the system at any given moment), and that supports $\epsilon = 1/n$. The expected amortized switching cost per insertion/deletion must be $\Omega(\log^* n)$.  
\label{thm:switchinglowerbound}
\end{theorem}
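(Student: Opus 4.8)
The plan is to design a hard instance and a potential function that, together, force the adversary to pay $\Omega(\log^* n)$ amortized switching cost whenever it maintains $O(1)$ expected average probe complexity. The starting observation is that, because $h$ is nearly uniform, only $O(\poly(i))$ balls in the universe can be placed into a fixed slot $j$ with probe complexity at most $1+\log i$; equivalently, the set of ``cheap'' slots for a random ball is small. Reading this contrapositively: if we insert $n$ random balls (filling all slots, since $\epsilon = 1/n$), then the number of balls that can simultaneously be assigned probe complexity below some threshold $\log t$ is limited, because the cheap slots of all balls of probe complexity $\le \log t$ are drawn from a pool of size roughly $\sum_j q(h,t,j) = \Theta(tn/\dots)$ — more precisely, any assignment that places $a$ balls at probe complexity $\le \log t$ requires that these $a$ balls have pairwise-compatible cheap slots, and a counting/Markov argument bounds $a$. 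So at full load, a constant fraction of balls must sit at probe complexity $\Omega(\log t)$ for any constant $t$, and pushing $t\to\omega(1)$ shows a constant fraction must have probe complexity $\omega(1)$ — unless the adversary keeps reshuffling.

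The core of the argument is a multi-scale potential function keyed to the kick-tree-like scales $t_0 = n \gg t_1 \gg \cdots \gg t_L$ with $L = \Theta(\log^* n)$ and $t_{i-1}\approx 2^{\sqrt{t_i}}$ (the inequality identified as fundamental in the discussion after Lemma~\ref{lem:insertionprobe}). Define $\Phi = \sum_{i} \alpha_i \cdot (\text{number of balls currently at probe complexity} > \log t_i)$ for suitable weights $\alpha_i$ growing as $i$ decreases. The adversarial sequence proceeds in $L$ phases; in phase $i$ it repeatedly inserts a random ball and immediately deletes it (or inserts batches of $\Theta(t_i)$ random balls into a fixed ``zone'' of $\Theta(t_i)$ slots and deletes them), chosen so that, at scale $i$, the slots available are nearly saturated. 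The near-uniformity of $h$ guarantees that a freshly inserted random ball, placed in a special slot, has probe complexity $\log n$, and to bring it down to probe complexity $\le \log t_i$ the adversary must either evict an already-cheap ball (raising its probe complexity, hence $\Phi$) or touch $\Omega(1)$ balls. Summing the per-operation potential changes against the switching cost, each phase contributes $\Omega(1)$ to the amortized switching cost, and there are $\Theta(\log^* n)$ phases, giving the bound. The expected-average-probe-complexity-$O(1)$ hypothesis is used to certify that $\Phi$ stays $O(n)$ throughout (a constant fraction of balls can be ``cheap'' but not more), which is what lets us convert ``potential must rise by $\Omega(n)$ across all phases'' into ``switching cost must be $\Omega(n \log^* n)$ total''.

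Concretely the steps are: (1) formalize the near-uniformity counting lemma — at full load, for any scale $t$, at most $o(n)$ balls (in fact a bounded fraction depending on $t$) can be assigned probe complexity $\le \log t$, via a union/Markov bound over the $q(h,t,\cdot)$ values; (2) define the scales $t_i$ with the recursion $t_{i-1} = 2^{\Theta(\sqrt{t_i})}$, $t_L = \Theta(1)$, $L = \Theta(\log^* n)$; (3) define the weighted potential $\Phi$ and verify $0 \le \Phi \le O(n)$ using the $O(1)$-average-probe-complexity hypothesis together with Markov; (4) describe the phased adversary and, for phase $i$, show that between the start and end of the phase $\Phi$ must increase by $\Omega(\alpha_i \cdot (\text{length of phase}))$ unless the scheme spends switching cost linear in the phase length — this is the amortized local argument, where insertion of a random ball that needs to become cheap at scale $i$ must displace a scale-$i$-cheap ball (increasing its contribution to a lower-indexed term of $\Phi$) or do $\Omega(1)$ work; (5) telescope over all $L$ phases: total $\Phi$-increase is bounded by $O(n)$, so $\sum_i \alpha_i(\text{phase-}i\text{ operations}) = O(n)$ is impossible unless total switching cost is $\Omega(nL) = \Omega(n\log^* n)$; (6) finally, remove the assumptions — handle general $\epsilon \le 1/\log^{(O(1))} n$ by noting the freed slack is too small to affect the scale recursion, and handle average probe complexity $O(\log^{(k)} n)$ rather than $O(1)$ by truncating the phase sequence at scale $k$, recovering the $\Omega(k)$ bound.

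The main obstacle I expect is step (4): making the local amortized argument airtight. The subtlety is that the scheme may move balls ``preemptively'' during earlier operations, or may exploit deletions to create cheap slots for free, so one cannot simply charge each insertion in isolation. The potential function must be engineered so that any such preemptive reshuffling has already been paid for (the $\alpha_i$ weights must be large enough, relative to the saturation thresholds at scale $i-1$, that moving a scale-$i$-cheap ball out genuinely costs potential that cannot be recouped cheaply at coarser scales), and the phase lengths must be tuned so that near-saturation at scale $i$ is actually forced by the adversary's insertions despite the $\epsilon$-slack and despite deletions. Getting the three tuned quantities — the scale recursion $t_{i-1}\approx 2^{\sqrt{t_i}}$, the weights $\alpha_i$, and the phase lengths — to cohere is the delicate heart of the proof.
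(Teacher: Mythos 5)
There is a genuine gap, and it is in the accounting that is supposed to produce the factor $\log^* n$. Your adversary runs $L=\Theta(\log^* n)$ phases \emph{sequentially}, and in each phase you argue an $\Omega(1)$ amortized switching cost for the operations of that phase. Summing over phases, this only shows that the total switching cost is $\Omega(\mbox{total number of operations})$, i.e.\ an $\Omega(1)$ amortized bound, not $\Omega(\log^* n)$: the phases do not stack, because each operation belongs to exactly one phase. To get $\Omega(L)$ per operation you need every single insertion/deletion of one fixed sequence to force $\Omega(L)$ work, which is what the paper's proof arranges: it uses one uniform random insert/delete sequence, partitions probe indices into $L$ tower-spaced levels, and observes that each inserted ball starts at the top level (special slot, probe complexity $\log n$) while each deleted ball has expected level $O(1)$ by the $O(1)$-average hypothesis; hence, by the bookkeeping of Lemma~\ref{lem:impact}, the scheme must accumulate total ``impact'' (level decreases) $\Theta(ML)$ over $M$ operations. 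Your potential $\Phi=\sum_i \alpha_i\cdot\#\{\mbox{balls above scale } t_i\}$ is essentially this sum-of-levels quantity, so it can reproduce the $\Theta(ML)$ demand, but it cannot supply the second, harder half of the argument: a bound showing that each move of the scheme can only realize $O(1)$ impact ``for free.'' A single move can drop a ball across many levels at once (into a free or low-priority slot), so ``the adversary must evict a cheap ball or touch $\Omega(1)$ balls'' does not preclude amortized $O(1)$ moves per operation. The paper's central construction exists precisely to close this hole: a second potential built from disjoint $i$-stanzas (chains of slots through which a ball could be cheaply demoted past level $i$), with the near-uniformity of $h$ used via degree bounds in the bipartite graphs $G_i$ to show that random insertions/deletions create such demotion opportunities at rate only $1/\poly(L)$ per operation, while any move of impact $r$ consumes $r\pm O(1)$ of this potential. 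Nothing in your proposal tracks the scarcity of these opportunities, and without it the bound collapses to $\Omega(1)$.

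A secondary problem is your step (1): near-uniformity does \emph{not} imply that at full load a constant fraction of balls must sit at probe complexity $\omega(1)$, or that only $o(n)$ balls can be simultaneously cheap. The paper's own upper bound (Theorem~\ref{thm:probeupper} with $k=\Theta(\log^* n)$) maintains average probe complexity $O(1)$ at load exactly $1$ with nearly uniform probe sequences, so no static counting argument of this kind can be correct; the fraction of balls forced above threshold $t$ decays with $t$ rather than staying constant. The hardness being proved is purely dynamic---it is about the cost of repeatedly re-establishing a cheap assignment after random insertions, not about the nonexistence of cheap assignments---so the lower bound has to be driven by the level/impact accounting and the opportunity-tracking potential described above rather than by saturation of cheap slots.
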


Throughout the rest of the section, we shall consider an input sequence that begins right after $n - 1$ random balls have just been inserted, and then proceeds to perform $M = \poly(n)$ insertions and the same number of deletions. The insertions and deletions alternate; each insertion inserts a random ball (which with probability $1 - 1 / \poly(n)$ has never been inserted in the past); and each deletion deletes a random ball out of those present. 

Define $ L = \lceil (\log^* n) / 2 \rceil$. Define $\Tower (0) = 1$ and $\Tower(i) = 2^{\Tower(i - 1)}$ for all integer $i > 0$. Say that a ball $x$ is in \defn{level 0} if it has been assigned to a slot $h_i(x)$ for some $i \le \Tower(L)$, and say that a ball $x$ is in \defn{level $j \in \{1, 2, \ldots, L\}$} if it has been assigned to a slot $h_i(x)$ for some $i \in (\Tower(L + j - 1), \Tower(L + j)]$. If a ball is in a special slot, or if it has been assigned to a slot $h_i(x)$ with $i \ge n$, then the ball is said to be in level $L$.

Say that a move by the adversary has \defn{impact} $r$ if it decreases the level of some ball by $r$. Positive impact means that the ball's level decreased, and negative impact means that the ball's level increased.

\begin{lemma}
 For $i \in [M]$, define $\alpha_i$ to be the sum of the impacts of the moves that the adversary performs during the $i$-th insertion, and define $\beta_i$ to be the sum of the impacts of the moves that the adversary performs during the $i$-th deletion. Finally, define $\psi = \sum_{i \in [M]} (\alpha_i + \beta_i)$ to be the total impact by the adversary across all insertions/deletions. Then
$$\E[\psi] = \Theta(ML).$$
\label{lem:impact}
\end{lemma}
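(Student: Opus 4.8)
The plan is to establish $\E[\psi] = \Theta(ML)$ via matching upper and lower bounds, with the lower bound being the substantive direction. For the upper bound, observe that no ball's level can ever be negative or exceed $L$, so the total impact accumulated on any single ball over the course of the input is bounded in a telescoping sense: each unit of positive impact on a ball must be "paid for" either by the ball being inserted afresh (starting at level $L$, since a freshly inserted ball sits in a special slot) or by a prior unit of negative impact on that same ball. Since there are $M$ insertions, and each starts a ball at level $L$, the total positive impact is at most $ML$ plus the total negative impact; but positive and negative impact differ by exactly the net level change of all balls currently present (which is $O(nL)$ in magnitude at any time, hence negligible against $ML = \poly(n)$ when $M$ is a large enough polynomial). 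This yields $\E[\psi] = O(ML)$. (One must be slightly careful that $\psi$ sums impacts, not absolute values — but since we also want a $\Theta$, what we really need is that the \emph{total} positive impact is $\Theta(ML)$, and the negative impact is dominated; I will phrase the lemma's $\psi$ in terms of the relevant signed sum so that the telescoping is clean.)

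For the lower bound $\E[\psi] = \Omega(ML)$, the key idea is that the system must do a large amount of "lifting" just to keep the average probe complexity at $O(1)$. I would argue that in a typical steady-state configuration, a constant fraction of the balls present sit at level $0$ (indeed this is forced: if $\Omega(n)$ balls were at level $\ge 1$, each contributing probe complexity $\Omega(\log \Tower(L)) = \omega(1)$, the average probe complexity would exceed $O(1)$). Now consider the $i$-th insertion followed by the $i$-th deletion. The inserted ball arrives at level $L$; for the average probe complexity to stay bounded, within a few operations it (or balls displaced to make room for its eventual low-level slot) must be brought down, and I will show that in expectation each insert/delete pair must generate $\Omega(L)$ total positive impact. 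The cleanest way to see this: track the potential $\Phi = \sum_{x \text{ present}} (\text{level of } x)$. Each insertion raises $\Phi$ by exactly $L$ (a new ball at level $L$) minus the positive impact $\alpha_i$ generated during that insertion; each deletion changes $\Phi$ by $-(\text{level of deleted ball}) - \beta_i$. Since a random present ball has level $O(1)$ in expectation (by the probe-complexity bound), deletions remove $O(1)$ from $\Phi$ on average via the first term. Because $\Phi \ge 0$ always and $\Phi = O(nL)$ initially, summing over all $M$ rounds forces $\sum_i (\alpha_i + \beta_i) \ge ML - O(nL) - (\text{average-level terms}) = \Omega(ML)$.

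The main obstacle I anticipate is making the claim "a random present ball has level $O(1)$ in expectation" fully rigorous and, more importantly, uniform over the adversary's choices and over time. The subtlety is that the $O(1)$ average-probe-complexity hypothesis is about the probe complexity $\Theta(1 + \log i)$, which is a \emph{logarithmic} function of the probe index, whereas "level" is defined via iterated exponentials ($\Tower$); so a ball at level $j$ has probe complexity $\Omega(\log \Tower(L+j-1))$, which grows, and one needs that the expected level is $O(1)$ — this follows because $\sum_j j \cdot \Pr[\text{level} \ge j]$ is controlled once $\Pr[\text{level}\ge j]$ decays, and a ball at level $\ge j$ has probe complexity at least roughly $\Tower(L+j-2)$, an enormous quantity, so even a tiny probability mass at high levels would blow up the average. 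I would package this as a preliminary claim: at every moment, $\E[\text{level of a uniformly random present ball}] = O(1)$, proved by a dyadic summation over the level thresholds using Markov's inequality against the $O(1)$ average probe complexity. A second, more technical point to handle carefully is that $\alpha_i$ and $\beta_i$ are correlated with the configuration and with each other across rounds; the telescoping/potential argument sidesteps this because it only uses the deterministic identity relating consecutive values of $\Phi$ to $(\alpha_i, \beta_i)$ and the level of the deleted ball, together with $\Phi \ge 0$ — so I would lean entirely on that identity and take expectations only at the very end, after the per-round contributions have been summed into a single inequality.
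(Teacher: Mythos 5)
Your proposal is correct and takes essentially the same route as the paper: your potential $\Phi$ is exactly the paper's quantity $J$ (the sum of the levels of the balls present), and the accounting---each insertion adds $L$, each deletion removes the level of a random present ball (which has expectation $O(1)$ by the average-probe-complexity hypothesis, since any ball at level $\ge 1$ already has probe complexity $\ge \Omega(\Tower(L-1))$), adversary moves remove $\psi$, and $0 \le J \le O(nL)$ at the endpoints---is precisely the paper's argument, yielding both bounds of $\E[\psi] = \Theta(ML)$ from the single identity. Your preliminary claim about the expected level of a random present ball is just a more carefully spelled-out version of the one-line observation the paper makes at the same point.
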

\begin{proof}
Recall that $M$ is the number of insertions (resp. deletions) performed, and that $L$ is the number of levels. Define a dynamically-changing quantity $ J $ to be the sum of the levels of the balls in the system at any given moment. 

Each insertion places a ball into a special slot, thereby increasing $ J $ by $L$. On the other hand, we claim that each deletion decreases $J$ by $O(1)$ in expectation. To see this, observe that the deletion decreases $J$ by $s$ where $s$ is the level of the element being deleted. Since the adversary guarantees an expected average probe complexity of $O(1)$, we have that $\E[s] = O(1)$, which means that $J$ decreases by $O(1)$ in expectation.

By the definitions of $\alpha_i$ and $\beta_i$, we have that during the $i$-th insertion (resp. $i$-th deletion), the adversary's moves decrease $J$ by $\alpha_i$ (resp. $\beta_i)$. Across all operations, the total effect of the adversary's moves on $J$ is to decrease it by $\psi$. 
If $J_0$ is the value of $J$ prior to the first of the $2M$ operations and $J_*$ is the value of $J$ after the final operation, then 
$$\E[J_*] = \E[J_0] + LM - \Theta(M) - \E[\psi] = \E[J_0] + \Theta(LM) - \E[\psi],$$
where the $LM$ term accounts for insertions, the $M$ term accounts for deletions, and the $\psi$ term accounts for moves by the adversary. On the other hand, $J_0$ and $J_*$ are both deterministically in the range $[0, O(n\log^* n)]$, so we must have
$$\Theta(LM) - \E[\psi] \le O(n \log^* n).$$
Since $M$ is a large polynomial, it follows that $\E[\psi] = \Theta(LM)$,  as desired.
\end{proof}

The main technical ingredient to complete the proof of Theorem \ref{thm:switchinglowerbound} will be to construct a potential function $\phi $ with the following properties:
\begin{itemize}
\item \textbf{Property 1:} Each insertion/deletion increases $\phi $ by at most $O(1)$ in expectation.
\item \textbf{Property 2:} If a move by the adversary has impact $ r \in \mathbb{Z}$, it decreases $\phi$ by $r \pm O(1)$. 
\item \textbf{Property 3:} $\phi$ always satisfies $0 \le \phi \le nL$.
\end{itemize}

Before we construct $\phi $, let us assume the existence of such a $\phi$ and use it to complete the proof. At any given moment, define $\psi$ to be the sum of the impacts of the moves that the adversary has made so far. We will examine how the quantity $\psi + \phi$ evolves over time.

By Property 3, the quantity $\psi + \phi$ is initially at most $nL$. By Property 1, each insertion/deletion increases $\psi + \phi$ by $0 + O(1) = O(1)$ in expectation. By Property 2, each move by the adversary increases $\psi + \phi$ by at most $r - (r - O(1)) = O(1)$ (deterministically). Thus, after $M$ insertions/deletions have been performed, if $ k $ is the total number of moves that the adversary makes, then
$$\E[\psi + \phi] \le nL + O(M) + O(\E[k]) = O(M) + O(\E[k]).$$
On the other hand, by Property 3, $\E[\psi] \le \E[\psi + \phi],$ so
$$\E[\psi] \le O(M) + O(\E[k]).$$
Lemma \ref{lem:impact} tells us that $\E[\psi] = \Theta(ML)$. Thus
$$ML \le O(M) + O(\E[k]),$$
which means that $\E[k] = \Omega(ML) = \Omega(M \log^* n)$, hence Theorem \ref{thm:switchinglowerbound}. The main challenge is therefore to construct a potential function $\phi$ with the three desired properties.

\paragraph{Constructing the potential function $\phi$.}
The basic idea behind $\phi$ is that it should approximate the amount of impact that the adversary could hope to achieve with a small number of moves. One way to do this would be as follows. We could define $\mathcal{S}$ to be the set of all possible move sequences that the adversary could make; for each $S \in \mathcal{S}$, we could define $I(S)$ to be the total impact of $S$ and $|S|$ to be the number of moves in $S$; and we could define
$$\phi = \max_{S \in \mathcal{S}} \left(I(S) - c |S|\right)$$
for some large positive constant $c$. This potential function would exactly capture the adversary's ability to achieve large impact with a small number of moves, but it comes with the drawback that it can behave somewhat erratically with respect to insertions, deletions, and adversary-moves.

A key idea in this section is to construct $\phi$ in a more intricate way, still upper-bounding the amount of impact that the adversary can achieve cheaply, but while also intentionally designing $\phi$ to behave nicely. In order to give the technical definition of $\phi$, we must first define the notion of an \defn{$i$-stanza}, which intuitively corresponds to a sequence of moves in which the adversary is able to reduce the level of some ball $b$ from $\ge i$ to $\le i - 3$ while preserving for every other ball $b'$ how the level $\ell'$ of $b'$ compares to the quantities $i - 2, i - 1, i$. 

Define the  \defn{level of a slot} $ s $ to be the level of the ball in the slot, if there is such a ball, and to be $L$ otherwise.
For $i \in [L]$, define an \defn{$i$-stanza} to be a sequence of slots $s_1, \ldots, s_j$ such that slots $s_1$ and $s_j$ have levels at least $i$; such that slots $s_2, \ldots, s_{j - 1}$ have levels at most $i - 3$; such that each slot $s_k$, $k \in [j - 1]$, contains a ball $x$ that can be placed into $s_{k + 1}$ with a new level of at most $i - 3$; and such that $s_2, \ldots, s_{j - 1}$ are distinct. Note that, by design, $s_2, \ldots, s_j$ cannot be special slots (since they must be capable of containing a ball with level $\le i - 3$), but $s_1$ can be.

Importantly, the final slot $s_j$ in a stanza does \emph{not} have to be an empty slot in order for the stanza to be valid. With that said, if the final slot $s_j$ were empty, then the stanza would have a very intuitive interpretation: one could think of the stanza is representing a possible chain of ball moves, where the first ball move (from slot $s_1$ to slot $s_2$) decreases the level of some ball from $\ge i$ to $\le i - 3$,  where each subsequent ball move (from slot $s_k$ to slot $s_{k + 1}$ for some $k > 0$) maintains the level of some ball to be at most $i - 3$, and where the final move places a ball into an empty slot. 

We say that an $i$-stanza $s_1, \ldots, s_j$ has \defn{size} $j$ and has \defn{potential} $1 - (j - 1) / L$. We refer to $s_1$ as the \defn{starting slot} of the stanza, to $s_2, \ldots, s_{j - 1}$ as the  \defn{internal slots} of the stanza, and to $s_j$ as the \defn{final slot} of the stanza. We say that a collection of $i$-stanzas are \defn{disjoint} if each slot with level $\ge i$ is used at most once as a starting slot and at most once as a final slot, and if each slot with level $\le i - 3$ is used at most once as an internal slot. (The only overlap allowed is that the starting slot of one stanza may be the ending slot of another.) The \defn{potential} of a disjoint collection of $i$-stanzas is the sum of the potentials of the individual stanzas.

For $i \in [L]$, define $\phi_i$ to be the maximum potential of any disjoint collection of $ i $-stanzas. Finally, define the potential function $\phi$ by 
$$\phi = \sum_{i = 3}^L \phi_i.$$ 

\paragraph{The intuition behind $\phi$.}
Before analyzing $\phi$, let us give a bit more intuition for why $\phi$ acts as a natural upper-bound for how much impact the adversary can achieve cheaply  (i.e., with only a small number of moves relative to the impact being achieved). 

Consider any sequence of moves that the adversary could perform, and define a \defn{realized stanza} to be a sequence of slots $s_1, \ldots, s_j$ such that $s_j$ is an empty slot and, for each $k \in [j - 1]$, the ball from slot $s_k$ gets moved to the next slot  $s_{k + 1}$ in the sequence. One can think of a realized stanza as a sequence of moves, where balls $x_1, \ldots, x_{j - 1}$ are in slots $s_1, \ldots, s_{j - 1}$ and are being moved to positions $s_2, \ldots, s_j$, respectively. Each ball $x_k$ is moved from some initial level $b_k$ to some potentially different level $e_k$. (As an edge case,since there is no ball initially in $s_j$, define $b_j = L$, and leave $e_j$ undefined.) 

For a given move, from some level $b_k$ to some level $e_k$, there are three cases for the adversary. If $e_k \in \{b_k - 2, b_k - 1\}$, then we think of the move as having been neither good nor bad for the adversary---the move created $\Theta(1)$ impact, but at the cost of $1$ move. If $e_k \le b_k - 3$, then we think of the move as being good for the adversary, and we say that the adversary has \defn{stolen} $b_k - e_k - 2$ levels $b_k, b_k - 1, \ldots, e_k + 3$. Finally, if $e_k \ge b_k$, then we think of the move as being bad for the adversary, and we say that the adversary has \defn{paid} for $e_k - b_k + 1$ levels $b_k + 1, \ldots, e_k + 2$. Whenever the adversary pays for a level $i$, that cancels out the previous time that the adversary stole that level $i$. In order for the adversary to steal a level $i$ without subsequently paying for it, there must be a sequence $(b_k, e_k), \ldots, (b_{k'}, e_{k'})$ such that $b_k \ge i$, such that $e_k, b_{k + 1}, e_{k + 1}, b_{k + 2}, \ldots, e_{k' - 1} \le i - 3$, and such that $b_{k'} > i$. This sequence of ball moves corresponds exactly to an $i$-stanza. In other words, each $i$-stanza represents a possible opportunity for the adversary to steal level $i$ without subsequently paying for it.

In order for a realized stanza to be worthwhile to the adversary, however, the adversary must perform an average of $\omega(1)$ steals per move. This means that, on average, each level of the $L$ levels $i>0$ must be stolen $\omega(1)$ times for every $L$ moves that are performed. In other words, whenever the adversary steals level $i$, but then fails to steal level $ i $ again for $L$ moves, then that first steal wasn't actually worthwhile. The value of a given steal can be modeled as $1 - q / L$, where $q$ is the number of subsequent moves until the next steal of the same level. This is why we define the potential of an $i$-stanza in the way that we do: the longer that a $i$-stanza is, the less worthwhile of an opportunity that it represents for the adversary.

In summary, each $i$-stanza represents an opportunity for the adversary to steal level $i$ without subsequently paying for it; and the  $i$-stanza's potential upper-bounds how valuable that steal would be to the adversary. An important aspect of how we define $\phi$ is that we analyze each of the levels $i$ separately, so that the $i$-stanzas do not have to care about ball moves $(s_k, e_k)$ satisfying $s_k, e_k \ge i + 1$ or satisfying $s_k, e_k \le i - 3$. As we shall see, this decouples the analyses of the levels from one another in several critical ways.

\paragraph{Analyzing the properties of $\phi $.}
At any given moment, let $ A_1 $ denote the set of balls that are present, and, for the sake of analysis, let $ A_2 $ denote a set of $ n $ random balls that are not present, one of which is the ball that will next be inserted. Define $ A = A_1\cup A_2 $. Define $ B = [n]$ to be the set of all non-special slots.

Define a bipartite graph $ G_i = (A, B)$, where for each $ a\in A $ and $ b\in B $ we draw an edge  $(a, b)$ if ball $ a $ is capable of residing in slot $ b $ with level at most $ i $. That is, there is an edge from $a$ to $b$ if $b \in \{h_1(a), \ldots, h_{\Tower(L + i)}(a)\}$. Note that balls $a \in A$ all deterministically have degrees at most $\Tower(L + i)$. For each slot $b$, let $d_i(b)$ denote the degree of $b$ in $G_i$, and call $b$ \defn{high-degree in $G_i$} if $d_i(b) \ge (\Tower (L + i))^c$ for some sufficiently large constant $c$. We call all other nodes in $G_i$ (including all $a \in A$) \defn{low-degree in $G_i$}.

We now argue that most nodes in $G_i$ are far away from any high-degree nodes.
\begin{lemma}\label{lem:safe}
Let $a_1$ be a random ball in $A_1$ and let $a_2$ be a random ball in $A_2$. With probability $1 - 1 / \poly(L)$, neither $a_1$ nor $a_2$ is within distance $O(L)$ of  any high-degree vertex in $G_i$. 
\end{lemma}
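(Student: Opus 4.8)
The plan is to prove Lemma~\ref{lem:safe} by a counting/union-bound argument that controls both (i)~how many slots can possibly be high-degree in $G_i$ and (ii)~how many vertices lie in the small-radius neighborhood of any fixed vertex, and then to argue that a random ball $a_1 \in A_1$ (resp.\ $a_2 \in A_2$) is unlikely to fall into the union of these neighborhoods. First I would observe that all balls $a \in A$ have degree at most $D := \Tower(L+i)$ in $G_i$ (this is immediate from the definition: $a$ is adjacent only to the slots $h_1(a),\dots,h_D(a)$), so $|A| = 2n$ gives a total of at most $2nD$ edges in $G_i$, and hence at most $2nD / D^c = 2n/D^{c-1}$ slots can be high-degree (a slot is high-degree only if its degree is at least $D^c$). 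Since $D = \Tower(L+i) \ge \Tower(L) = \Tower(\lceil (\log^* n)/2\rceil)$, which is at least, say, polylogarithmic in $n$ and certainly $\omega(\poly(L))$, this set $H_i$ of high-degree slots is a $1/D^{c-1}$ fraction of all slots, i.e.\ vanishingly small.

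Next I would bound the size of the radius-$O(L)$ neighborhood (in $G_i$, viewed as an undirected bipartite graph) of the high-degree set. The key point is that \emph{low-degree} vertices of $G_i$ have degree at most $D^c$: every ball $a \in A$ has degree at most $D \le D^c$, and every slot that is not high-degree has degree less than $D^c$ by definition. So starting from $H_i$ and expanding one BFS layer at a time, as long as we only pass through low-degree vertices, the neighborhood grows by at most a factor of $D^c$ per layer. After $t = O(L)$ layers the number of vertices reachable from $H_i$ through low-degree vertices is at most $|H_i| \cdot (D^c)^{O(L)} \le (2n/D^{c-1})\cdot D^{O(cL)}$. One should be slightly careful here: a neighborhood can also grow by passing \emph{through} a high-degree slot, but the set of balls adjacent to \emph{any} high-degree slot is contained in the radius-$1$ neighborhood of $H_i$ already counted, and from such a ball we again only branch by its own degree $\le D$; more cleanly, I would just union the (at most $|H_i|$) high-degree slots' full adjacency lists (each of size at most $|A| = 2n$) into the count, contributing an extra $|H_i|\cdot 2n \le 4n^2/D^{c-1}$, which is still dominated once we recall $D$ is polynomially large in the relevant parameters and $L = O(\log^* n)$ so $D^{O(cL)}$ is still $n^{o(1)}$ (indeed $D = \Tower(L+i)$ and $i \le L$, so $D \le \Tower(2L) \le \Tower(\log^* n) \le n$; and $D^{O(L)} = n^{O(L/\log D)}$, and since $\log D \ge \Tower(L-1)$ grows far faster than $L$, this exponent is $o(1)$). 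Hence the total number $N_i$ of balls within distance $O(L)$ of any high-degree vertex in $G_i$ is at most $n / D^{\Omega(c)} \cdot n^{o(1)} = n / \poly(L) \cdot (\text{tiny})$, using that $D \ge \poly(L)^{\Omega(1)}$ and choosing $c$ large; in particular $N_i \le n/\poly(L)$.

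Finally I would conclude by a union bound over the two random balls. The ball $a_1$ is uniform over $A_1$ (the present balls) and $a_2$ is uniform over $A_2$ (a set of $n$ random not-present balls). Since the ``bad'' set of balls within distance $O(L)$ of a high-degree vertex of $G_i$ has size at most $n/\poly(L)$ out of the universe $U$ of size $\poly(n) \gg n$ — or, more to the point, among the $A_1 \cup A_2$ balls actually under consideration — the probability that $a_1$ (or $a_2$) lands in it is at most $(n/\poly(L))/n = 1/\poly(L)$. (Here I use that $A_1$ has size between roughly $n/2$ and $n$ and $A_2$ has size exactly $n$, so dividing by $|A_1|$ or $|A_2|$ only costs a constant factor.) Adding the two failure probabilities and summing over the $O(1)$ events gives failure probability $1/\poly(L)$, as claimed.

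The main obstacle I anticipate is the bookkeeping around neighborhood growth through high-degree slots: one must make sure that a path of length $O(L)$ in $G_i$ that repeatedly uses high-degree slots does not blow up the count. The clean fix is the one sketched above — treat high-degree slots specially by absorbing their entire (size-$\le 2n$) adjacency lists into the bad set up front, and then only do BFS growth through low-degree vertices, each of which has degree at most $D^c$ — so that the final bound is $N_i \le |H_i| \cdot 2n + |H_i| \cdot D^{O(cL)}$, and the dominant term is still $n / D^{\Omega(c)} \cdot n^{o(1)}$. Verifying that $D^{O(cL)} = n^{o(1)}$, i.e.\ that $L = O(\log^* n)$ is small enough relative to $\log D = \Omega(\Tower(L-1))$, is a short calculation but is the crux of why the lemma holds; everything else is routine union bounding.
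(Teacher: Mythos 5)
There is a genuine gap, and it is quantitative rather than cosmetic: your bound on the number of high-degree slots is far too weak to survive the radius-$O(L)$ BFS expansion. Your edge-counting argument gives $|H_i| \le 2n/D^{c-1}$ with $D = \Tower(L+i)$, i.e.\ only a \emph{polynomial-in-$D$} saving, while expanding $O(L)$ layers through low-degree vertices multiplies the count by $(D^c)^{O(L)} = D^{O(cL)}$. Since $c$ is a constant and $L = \Theta(\log^* n) \to \infty$, the product $(2n/D^{c-1})\cdot D^{O(cL)} = 2n\,D^{O(cL)-(c-1)}$ exceeds $n$ and the bound is vacuous; no constant choice of $c$ fixes this. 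The side calculation you use to dismiss this, ``$D^{O(L)} = n^{O(L/\log D)} = n^{o(1)}$,'' is also incorrect: $D^{O(L)} = n^{O(L\log D/\log n)}$, and for $i$ close to $L$ one has $D = \Tower(L+i)$ comparable to (or exceeding) $n$, so $D^{O(L)}$ can be $n^{\Omega(L)}$ or worse. Similarly, absorbing the full adjacency lists of high-degree slots contributes $|H_i|\cdot 2n \le 4n^2/D^{c-1}$, which for small $i$ (where $D = \Tower(L+i) \ll n$) is itself much larger than $n$, not ``dominated.''

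The missing idea is the one the paper's proof is built on: you never use the near-uniformity hypothesis or the randomness of the balls, and without them the lemma cannot be proved by worst-case counting. In the paper, each slot's degree $d_i(b)$ is a sum of independent indicators over the $2n$ random balls, with mean $2q(h,\Tower(L+i),b) \le \poly(\Tower(L+i)) \le D^c/2$ by near-uniformity; a Chernoff bound then makes the probability of exceeding the threshold $D^c$ exponentially small in $D^c$, so the \emph{expected number of edges incident to high-degree slots} is $n/2^{\poly(\Tower(L+i))}$ rather than $n/\poly(D)$. That exponentially small count is exactly what absorbs the $\poly(\Tower(L+i))^{O(L)} = 2^{O(L)\Tower(L+i-1)}$ blow-up of the $O(L)$-step neighborhood, uniformly in $i$; Markov's inequality and the final union bound over the two random balls then go through as in your last paragraph (that part of your argument is fine). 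To repair your proof you would need to replace the deterministic $2nD/D^c$ count with this probabilistic, near-uniformity-based tail bound on slot degrees.
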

\begin{proof}
Since the balls $a \in A$ are independent and randomly selected, the degree $d_i(b)$ is a sum of independent indicator random variables. Moreover, by the near-uniformity of $h$, we know that each $b \in B$ satisfies
\begin{align*}
    \E[d_i(b)] & =  2 \cdot n\Pr_{ x\in U } [h_k (x) = b \text{ for some }k \le \Tower (L + i)] \\
    & = 2 \cdot q(h, \Tower(L + i), b) \\
    & \le 2 \cdot \poly(\Tower(L + i)) \\
    & \le (\Tower (L + i))^c / 2.
\end{align*}
Applying a Chernoff bound, it follows that for all $D \ge (\Tower (L + i))^c$, we have
$$\Pr[d_i(b) \ge D] \le \frac{1}{2^{\Omega(D)}}.$$
Thus
$$\E[d_i(b) \cdot \mathbb{I}_{d_i(b) \ge (\Tower (L + i))^c}] \le \frac{1}{2^{\Omega((\Tower (L + i))^c)}} =  \frac{1}{2^{\poly(\Tower (L + i))}}.$$
This means that the expected sum $S$ of the degrees of the high-degree slots in $G_i$ satisfies
$$\E[S] \le n / 2^{\poly(\Tower (L + i))}.$$
One can also think of $S$ as an upper bound on the number of low-degree nodes in $G_i$ that are adjacent to high-degree nodes in $G_i$. 
Every low-degree node in $G_i$ has degree at most $\poly(\Tower(L + i))$. It follows that the number $\lambda$ of nodes in $G_i$ that are within distance $O(L)$ of a high-degree node satisfies
\begin{align*}
    \E[\lambda] & \le S  \cdot  \poly(\Tower (L + i))^{O(L)}, 
\end{align*}
where the first factor $S$ counts the number of nodes $s$ in $G_i$ that are within distance $1$ of a high-degree node, and the second factor counts the number of $O(L)$-long paths starting at a such a node $s$ and then using only low-degree nodes.
Using our bound on $S$, we get that
$$\E[\lambda] \le  \frac{n}{2^{\poly(\Tower (L + i))}} \cdot  \poly(\Tower (L + i))^{O(L)}.$$
The above quantity is dominated by its first factor, so
$$\E[\lambda] \le \frac{n}{2^{\poly(\Tower (L + i))}}.$$
Applying Markov's inequality, we have that with probability $1 - 1 / 2^{\poly(\Tower (L + i))} \ge 1 - 1 / \poly(L)$,
$$\lambda \le \frac{n}{2^{\poly(\Tower (L + i))}}.$$
Let $a_1$ be a random ball in $A_1$ and $a_2$ be a random ball in $A_2$. The probability that either $a_1$ or $a_2$ is within distance $O(L)$ of a high-degree vertex in $G_i$ is at most
$$\Pr\left[\lambda > \frac{n}{2^{\poly(\Tower (L + i))}}\right]+ \frac{\frac{n}{2^{\poly(\Tower (L + i))}}}{\Theta(n)} = \frac{1}{\poly(L)} + \frac{1}{2^{\poly(\Tower (L + i))}} \le \frac{1}{\poly(L)}.$$
This completes the proof of the lemma.
\end{proof}

The next lemma argues that, if we remove the high-degree nodes from $G_i$, then most of the remaining nodes are far away from any nodes with levels $\ge i + 3$.
\begin{lemma}\label{lem:lowd}
Define $G_i'$ to be the graph $ G_i $, but with all high-degree nodes removed. Let $X$ be the set of balls and non-special empty slots that are currently at a level at least $ i + 3$. For random balls $a_1, a_2$ in $A_1, A_2$, respectively, the probability of either $a_1$ or $a_2$ being within distance $O(L)$ of $X$ in $G_i'$ is at most $1 / \poly(L)$. 
\end{lemma}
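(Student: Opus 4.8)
The plan is a volume argument in the spirit of Lemma~\ref{lem:safe}, but expanding outward from $X$ instead of from the high-degree vertices: show $X$ is small, note that $G_i'$ has bounded degree, conclude that only a $1/\poly(L)$ fraction of the $\Theta(n)$ vertices of $A$ lie within distance $O(L)$ of $X$ in $G_i'$, and then average over the random choices of $a_1\in A_1$ and $a_2\in A_2$.

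To bound $|X|$: a ball at level $\ge i+3$ occupies a probe position beyond $\Tower(L+i+2)$, hence has probe complexity at least $\log\Tower(L+i+2)=\Tower(L+i+1)$; a ball parked in a special slot (or at a probe position $\ge n$) has probe complexity $\log n$; and since $\epsilon=1/n$ forces at least $n-1$ balls to be present, the number of empty non-special slots exceeds the number of parked balls by at most $1$. Since the adversary achieves \emph{expected} average probe complexity $O(1)$, Markov's inequality gives that with probability $1-1/\poly(L)$ the total probe complexity at the current moment is $O(n\,\poly(L))$; conditioning on this and adding the three contributions gives $|X|\le O(n\,\poly(L)/\Tower(L+i+1))$ on the range of $i$ for which level $i+3$ is non-vacuous (one uses here that $\log n\ge\Tower(L+i+1)$ on that range, with the extreme topmost values of $i$ — where $X$ collapses to the $O(1)$ empty slots together with the parked balls — checked directly).

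Next, in $G_i'$ every vertex has degree below $(\Tower(L+i))^c$: slots violating this were deleted, and balls $a\in A$ deterministically have at most $\Tower(L+i)$ incident edges. Hence the number of vertices within distance $O(L)$ of $X$ in $G_i'$ is at most
$$|X|\cdot(\Tower(L+i))^{O(L)}\;=\;|X|\cdot 2^{\,O(L)\cdot\Tower(L+i-1)}.$$
The crux is that the ``$+3$'' in ``level $\ge i+3$'' is worth a full extra exponential: $\Tower(L+i+1)=2^{\,2^{\Tower(L+i-1)}}$ dominates the expansion factor $2^{\,O(L)\cdot\Tower(L+i-1)}$, because $2^{\Tower(L+i-1)}$ is enormous compared with $O(L)\cdot\Tower(L+i-1)$ (the quantity $\Tower(L+i-1)$ is itself a tower of height $L+i-1$, hence astronomically larger than $L=\Theta(\log^*n)$ whenever $n$ is large enough for the statement to be nontrivial). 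Therefore the number of ``bad'' vertices is at most $n/\poly(L)$.

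Finally, conditioned on the $O(n\,\poly(L))$ probe-complexity bound, a uniformly random $a_1\in A_1$ (resp.\ $a_2\in A_2$) lands in the bad set with probability $(n/\poly(L))/\Theta(n)=1/\poly(L)$, so by a union bound (also charging the $1/\poly(L)$ Markov failure probability) the probability that $a_1$ or $a_2$ is within distance $O(L)$ of $X$ in $G_i'$ is $1/\poly(L)$, as claimed. The main obstacle is the combination of the first two steps done rigorously — quantifying exactly how much the ``$+3$'' buys, and in particular handling the uppermost admissible levels, where the parked balls and empty slots carry probe complexity only $\log n$ (not $\Tower(L+i+1)$), so that one must separately confirm $\log n$ is large enough relative to $(\Tower(L+i))^{O(L)}$ to run the same volume bound.
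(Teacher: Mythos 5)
Your proposal follows the paper's proof essentially step for step: bound $|X|$ via Markov's inequality from the $O(1)$ expected average probe complexity (balls at level $\ge i+3$ have probe complexity at least $\log \Tower(L+i+2) = \Tower(L+i+1)$, and since $\epsilon = 1/n$ the empty non-special slots number at most one more than the balls counted in $X$), then use the degree bound $(\Tower(L+i))^c$ in $G_i'$ to bound the vertices within distance $O(L)$ of $X$ by $|X| \cdot \Tower(L+i)^{O(L)} \ll n/\poly(L)$, and finally average over the random choices of $a_1, a_2$. The only difference is that you explicitly flag the special-slot/topmost-level edge case (where probe complexity is $\log n$ rather than $\Tower(L+i+1)$), a point the paper's proof folds into the same Markov bound without comment.
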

\begin{proof}
Note that $X$ is determined by the balls-to-slots scheme, so we will think of $X$ as being selected by an adversary who has full knowledge of $A_1$ and $A_2$ but who has no control over the contents of $A_1$ and $A_2$. Since $\epsilon = 1/n$, the number of slots in $X$ is at most $1$ greater than the number of balls in $X$, so to bound $|X|$, we can focus on the number of balls with levels $\ge i + 3$. 

Each ball $x \in X$ has a level of $i + 3$ or greater, so it has probe complexity at least $\log \Tower(L + i + 2) = \Tower(L + i + 1)$. This means that, at any given moment,
\begin{equation}
    \E[|X|] \le \frac{O(n)}{\Tower(L + i + 1)},
    \label{eq:numhighlevel}
\end{equation}  
where the randomness here comes from the fact that the balls-to-slots scheme guarantees an \emph{expected} average probe complexity of $O(1)$ at any given moment.  By Markov's inequality,
$$|X| \le \frac{n \poly(L)}{\Tower(L + i + 1)}$$
with probability $1 - 1 /\poly(L)$.
The number of nodes $ y $ that are within distance $ O (L) $ of $ X $ in $G_i'$ is therefore at most
\begin{equation}
\frac{n \poly(L)}{\Tower(L + i + 1)} \Tower(L + i)^{O(L)} \ll \frac{n}{\poly(L)}.
\label{eq:numy}
\end{equation}
It follows that, for random balls $a_1, a_2$ in $A_1, A_2$, respectively, the probability of either $a_1$ or $a_2$ being within distance $O(L)$ of $X$ in $G_i'$ is at most $1 / \poly(L)$. 
\end{proof}

We can now argue that each insertion/deletion increases $\phi$ by $O(1)$ (actually $o(1)$) in expectation. Intuitively, this means that insertions/deletions do not, on average, introduce opportunities for the adversary to cheaply achieve a large amount of impact.
\begin{lemma}[Establishing Property 1 for $\phi$]
Each insertion/deletion increases $\phi$ by at most $1 / \poly(L)$ in expectation.
\end{lemma}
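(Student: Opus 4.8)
Since $\phi=\sum_{i=3}^{L}\phi_i$, it suffices to show that for every $i$ an insertion or deletion increases $\phi_i$ by at most $1$ deterministically and by $0$ except with probability $1/\poly(L)$, where we may take the $\poly(L)$ to be $L^{c}$ for a constant $c\ge 2$ of our choosing by enlarging the distance bounds in Lemmas~\ref{lem:safe}--\ref{lem:lowd} to a suitable multiple of $L$ (the proofs of those lemmas are insensitive to the constant in the ``$O(L)$''). Summing over the fewer than $L$ values of $i$ then gives $\E[\Delta\phi]=\sum_i\E[\Delta\phi_i]\le L\cdot L^{-c}=1/\poly(L)$. Fix $i$; the key tools are Lemmas~\ref{lem:safe} and~\ref{lem:lowd} applied \emph{with parameter $i-3$} to the (marginally uniform) random ball that is inserted or deleted.

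\textbf{Insertions.} An inserted ball $x$ is placed in a special slot $\sigma$; the level of $\sigma$ is $L$ both before and after, no other slot's level or contents changes, and the ``placeability'' relation between balls and slots is fixed. The only new structure is that $\sigma$ -- which held no ball before and so appeared in no $i$-stanza -- can now be the \emph{starting} slot of a stanza, and disjointness permits at most one stanza to start at $\sigma$. A one-line exchange argument (remove from the optimal post-insertion collection the stanza, if any, that starts at $\sigma$) therefore gives $\Delta\phi_i\le\max\{\text{potential of an }i\text{-stanza }\sigma=s_1,\dots,s_j\}\le 1$. Now an $i$-stanza $\sigma=s_1,\dots,s_j$ of size $j$ spells out a simple path of length $2j-3$ in $G_{i-3}$ that starts at the ball $x$, alternates through the internal slots $s_2,\dots,s_{j-1}$ (all of level $\le i-3$) and the balls occupying them, and ends at the slot $s_j$ of level $\ge i$; balls in $A$ always have degree $\le\Tower(L+i-3)$ in $G_{i-3}$ and so are low-degree, so if $x$ is far from high-degree nodes this path lies in $G_{i-3}'$. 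Apply Lemma~\ref{lem:safe} and Lemma~\ref{lem:lowd} with parameter $i-3$ to $x$, where we enlarge the set $X$ of Lemma~\ref{lem:lowd} to also include the $O(n)/\Tower(L+i-2)$ occupied slots of level $\ge i$ (this changes the bound on $|X|$, and hence the lemma's proof, only by a constant factor). Then, except with probability $1/\poly(L)$, $x$ is at distance $>2L$ from every high-degree node and from every slot of level $\ge i$ in $G_{i-3}'$, which forces $j>L+1$, hence negative potential $1-(j-1)/L<0$; so no stanza starting at $\sigma$ occurs in the optimal collection, and $\Delta\phi_i=0$.

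\textbf{Deletions.} When a uniformly random present ball $z$ is deleted from slot $b$, the only changes are that $b$ loses its ball and its level rises from $z$'s level to $L$. If $z$ had level $\ge i$, then $b$ merely loses the ability to be a starting slot (it is still a valid final slot and is not an internal slot before or after), and since placeability is fixed no new stanza is created, so $\Delta\phi_i\le 0$. If $z$ had level $\le i-1$, the sole capability $b$ gains is that of being a \emph{final} slot, and at most one stanza can end at the empty slot $b$, so again by an exchange argument $\Delta\phi_i\le\max\{\text{potential of an }i\text{-stanza ending at }b\}\le 1$. To see this is usually negative: the scheme's guarantee of $O(1)$ expected average probe complexity, together with the fact that any ball of level $\ge 1$ has probe complexity $\Omega(\Tower(L))$ and $\Tower(L)\gg\poly(L)$, gives by Markov's inequality that $z$ has level $0$ -- in particular level $\le i-3$ -- except with probability $1/\poly(L)$; when this holds $(z,b)$ is an edge of $G_{i-3}$, so $b$ is at distance $1$ from $z$. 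As above, an $i$-stanza of size $j$ ending at $b$ spells out a path of length $2j-3$ in $G_{i-3}$ from a level-$\ge i$ node to $b$; applying Lemmas~\ref{lem:safe} and~\ref{lem:lowd} (parameter $i-3$, same enlargement of $X$) to the random ball $z$ shows that, except with probability $1/\poly(L)$, $z$ -- hence $b$ -- is at distance $>2L$ from every high-degree node and every slot of level $\ge i$, forcing $j>L+1$ and negative potential. Thus $\Delta\phi_i=0$ except with probability $1/\poly(L)$.

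\textbf{Main obstacle.} The delicate part is not the bookkeeping but the two structural facts the earlier lemmas were designed to supply: (i) a \emph{locality} property -- an insertion or deletion perturbs exactly one special (resp.\ freed) slot, so that the entire increase in $\phi_i$ is captured by a single new stanza anchored there, bounding the deterministic increase by $1$; and (ii) the probabilistic statement that a freshly inserted or deleted ball is, with probability $1-1/\poly(L)$, far in $G_{i-3}$ from every high-degree node and every slot of level $\ge i$, so that any such stanza has size $>L+1$ and therefore negative potential. The one point needing care is that a stanza's final slot may be occupied rather than empty, which is handled by the stated constant-factor enlargement of $X$ in Lemma~\ref{lem:lowd}; everything else is routine exchange arguments plus a union bound over the $<L$ levels.
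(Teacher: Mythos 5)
Your proof is correct and follows essentially the same route as the paper's: decompose $\phi$ into the $\phi_i$, bound each per-operation increase by the single stanza anchored at the special (resp.\ freed) slot, and use Lemmas~\ref{lem:safe} and~\ref{lem:lowd} to argue that any positive-potential stanza would yield an $O(L)$-length path in $G_{i-3}$ to a level-$\ge i$ node, an event of probability $1/\poly(L)$. The two extra details you supply---enlarging $X$ to include occupied slots of level $\ge i$ (so the stanza's possibly-occupied final slot is covered), and the Markov bound showing the deleted ball has level $0$ so that its slot is adjacent to it in $G_{i-3}$---are points the paper's write-up glosses over, and you handle them correctly.
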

\begin{proof}
Consider an insertion of a random ball $ a\in A_2$. Let us consider the effect of the insertion on $\phi_{i + 3}$ for some $i$. Notice that, when $a$ is inserted (i.e., placed into a special slot), $\phi_{i + 3}$ either stays the same or increases by $\le 1$, where the increase comes from the fact that $a$ may be part of some $(i + 3)$-stanza that has positive potential and did not exist before. On the other hand, the only way that $a$ can be part of an $(i + 3)$-stanza that has positive potential is if, in $G_i$, $a$ is within distance $L$ of some slot whose level is $\ge i + 3$---the probability of this occurring is therefore an upperbound on the expected increase to $\phi_{i + 3}$ due to the insertion.

By Lemma \ref{lem:safe}, we have with probability $ 1 - 1 / \poly(L)$ that the set of nodes $y \in G_i$ that are within distance $O(L)$ of $a$ is the same as the set of nodes $y \in G_i'$ that are within distance $O(L)$ of $a$. By Lemma \ref{lem:lowd}, we have that with probability $1 - 1 / \poly(L)$, that within the graph $G_i'$, $ a $ is not within distance $O(L)$ of any node with level $\ge i + 3$ (besides $a$ itself). Thus, with probability $1 - 1 / \poly(L)$, we have that in $G_i$, $a$ is not within distance $O(L)$ of any node with level $\ge i + 3$ (besides $a$ itself). This establishes that, with probability $1 - 1 / \poly(L)$, $a$ is not the starting node for any $(i + 3)$-stanza that has positive potential; and thus the expected increase to $\phi_i$ due to the insertion is $O(1 / \poly(L))$.

Now consider the deletion of a random ball $ a\in A_1 $. By the same reasoning as in the preceding paragraph, with probability $1 - 1 / \poly(L)$, we have that in the graph $G_i$, $a$ is not within distance $O(L)$ of any node with level $\ge i + 3$ (besides possibly $a$ itself). Thus, once $a$ is deleted, we have with probability $1 - 1 / \poly(L)$ that the slot which contained $a$ is not the final slot for any $(i + 3)$-stanza that has positive potential. Hence the expected increase to $\phi_i$ due to the deletion is $1 / \poly(L)$.

In either case, the expected increase to
$$\phi = \sum_{i = 3}^L \phi_i$$
is at most $\sum_{i = 3}^L 1 / \poly(L) = 1 / \poly(L)$. Thus  the lemma is proven.
\end{proof}

Next we analyze the effect that a given move by the adversary has on $\phi $.
\begin{lemma}[Establishing Property 2 for $\phi$]
If a move by the adversary has impact $ r $, it decreases $\phi$ by $r \pm O(1)$. 
\end{lemma}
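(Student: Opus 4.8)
The plan is to analyze a single adversary move — taking the ball in slot $s$ (at level $b$) and moving it to slot $t$, where it attains level $e$, so the move has impact $r = b - e$ — and bound the change in each $\phi_{i}$ for $i \in [3, L]$ separately, then sum. The key structural observation is that a move $(s,t)$ with initial level $b$ and final level $e$ can only affect $\phi_i$ if the level $i$ is ``straddled'' by the move in a relevant way: more precisely, $\phi_i$ is built from $i$-stanzas whose starting/final slots have level $\ge i$ and whose internal slots have level $\le i-3$, so a move only changes whether a slot is eligible to be a starting/final slot (this toggles when the slot's level crosses $i$) or eligible to be an internal slot (this toggles when the slot's level crosses $i - 3$). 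Since a single move changes the level of exactly one ball (from $b$ to $e$, with the vacated slot $s$ rising to level $L$ or staying whatever the next occupant is — but in the analysis of a move we treat $s$ as becoming empty/level-$L$ and $t$ as changing from its prior contents), only the levels $i$ with $e < i \le b$ or $e \le i - 3 < b$, i.e. $O(r)$ values of $i$ when $r = b-e$ is positive (and symmetrically $O(|r|)$ when $r$ is negative), can have $\phi_i$ changed by more than the ``universal'' $\pm O(1)$ that any single slot-edit can cause. So the strategy is: (i) for the $O(r)$ ``active'' levels, show $\phi_i$ changes by at most $O(1)$ each, contributing $O(r)$ total; (ii) for all other levels, show $\phi_i$ changes by at most $O(1/L)$ or, more carefully, that the total change over all non-active levels is $O(1)$; and (iii) separately account for the one distinguished level (or $O(1)$ levels) where the move is ``designed'' to decrease $\phi$ by exactly about $r$, namely the levels that an optimal stanza collection was using to certify potential through slot $s$ or $t$.

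More concretely, to get the ``$-r$'' part of ``$r \pm O(1)$'': consider the optimal disjoint stanza collection realizing $\phi$ just \emph{before} the move. The slot $s$ (holding a ball of level $b$) and slot $t$ (level $e$ afterward) are involved in at most $O(1)$ stanzas per level as starting/final/internal slots; but the key point is that, summed over $i$, the stanzas that used slot $s$ as a level-$\ge i$ starting or final slot exist precisely for $i$ ranging over $\{1, \dots, b\}$-ish, and moving $s$'s level down to (effectively) $L$... wait — actually the cleanest way is the reverse: after the move, the ball formerly in $s$ is now in $t$ at a \emph{lower} level, so a \emph{new} stanza collection can be built that exploits this. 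I would build an explicit stanza collection after the move of potential $\ge \phi_{\text{before}} + r - O(1)$, giving $\phi_{\text{after}} \ge \phi_{\text{before}} + r - O(1)$, i.e. $\phi$ \emph{increases} by at least $r - O(1)$? No — we want it to \emph{decrease} by $r - O(1)$. Let me re-read: impact $r$ means the adversary \emph{decreased} some ball's level by $r$ (positive impact = level decreased), and we want $\phi$ to \emph{decrease} by $r \pm O(1)$. The intuition from the paper is that $\phi$ measures the \emph{remaining} cheap impact available; once the adversary realizes $r$ units of impact by this move, that much potential should be ``used up.'' So: I would show (a) $\phi_{\text{after}} \le \phi_{\text{before}} - r + O(1)$ by taking an optimal post-move stanza collection and converting it into a pre-move stanza collection of potential at least $r - O(1)$ larger (splicing the realized move, viewed as a one- or two-edge path from a level-$b$ slot down to a level-$e$ slot, into the collection as the missing piece of several stanzas, one per level in the window, each contributing potential $\approx 1$, for $\approx r$ of them); and (b) $\phi_{\text{after}} \ge \phi_{\text{before}} - r - O(1)$ symmetrically, by taking an optimal pre-move collection and repairing it into a valid post-move collection, losing at most $r + O(1)$ potential (the stanzas that are ``broken'' by the move are exactly those routing through $s$ or $t$ at the $O(r)$ active levels, and each loses $O(1/L) \le O(1)$ potential, plus $O(1)$ stanzas broken per level by the universal slot-edit effect).

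The main obstacle I expect is part (b) — the repair argument — and in particular handling \emph{disjointness} of the stanza collection after surgery. When the move changes slot $s$ from level $b$ to (effectively) $L$ and slot $t$ to level $e$, a pre-move stanza might have used $s$ as an internal slot at some level $i > b$... no, internal slots have level $\le i - 3$, so $s$ with level $b$ was an internal slot only for levels $i \ge b + 3$, and after the move $s$ has an even higher level, so it's \emph{still} a valid internal slot there — good, that direction is fine. The delicate cases are: $s$ used as a starting/final slot for levels $i \le b$ (these may break since $s$'s level rose — but a slot at a \emph{higher} level is still $\ge i$, so starting/final-slot eligibility is monotone up, also fine!); $t$ used as an internal slot for levels $i \ge e_{\text{old}} + 3$ where $e_{\text{old}}$ was $t$'s old level — if $t$'s level dropped ($e < e_{\text{old}}$) then it's still internal-eligible for those $i$; if $t$ was empty before (level $L$) and now holds a ball of level $e$, then $t$ is newly \emph{ineligible} as an internal slot for $i \in [e+3, L]$ and newly \emph{eligible} as a final/starting slot for $i \le e$ — these are the genuinely new constraints, and they break at most one stanza per level, each of potential $O(1)$, and over the $O(r)$ levels where $e < i$ this is the $O(r)$ term, while over levels $i \le e$ the change is an \emph{increase} so it only helps. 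I would also need to check that the ``edge'' of $G_i$ used by a stanza — namely ``the ball in slot $s_k$ can be placed in slot $s_{k+1}$ with new level $\le i - 3$'' — is unaffected by the move except for the one ball that moved and the two slots $s, t$, which is true because those edges depend only on the probe sequences $h$ and on which ball sits where, and a single move changes only one ball's location. Carefully bookkeeping these constant-per-level breakages, plus the $O(r)$ active-level contributions, plus the distinguished $\approx r$ decrease from part (a), and confirming they assemble into exactly ``$-r \pm O(1)$'' rather than, say, ``$-r \pm O(L)$'', is the crux; the tower-function spacing of the level thresholds ($\Tower(L+i)$) is what ensures the ``active window'' has only $O(r)$ levels (since one unit of impact = one level change by construction of the level partition), so no stray factor of $L$ creeps in.
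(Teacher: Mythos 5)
Your overall architecture---analyzing each $\phi_i$ separately and performing surgery on optimal disjoint stanza collections, with disjointness as the crux---matches the paper's, but the proposal has a genuine quantitative gap that the proof must close and yours does not. To obtain a decrease of $r \pm O(1)$ you cannot afford a per-level error of $O(1)$ on the $\approx r$ ``window'' levels (nor on the levels above $b$): you need a \emph{two-sided} per-level bound with error $O(1/L)$, so that the errors sum to $O(1)$ over the at most $L$ levels, with $O(1)$ slack allowed only at $O(1)$ boundary levels. Your stated plan does not meet this: step (i) (``$\phi_i$ changes by at most $O(1)$ each, contributing $O(r)$ total'') yields only $\pm O(r)$; in your direction (b) you allow ``$O(1)$ stanzas broken per level'' at the active levels, which again gives an $O(r)$ error; and step (iii)'s ``one distinguished level where the move is designed to decrease $\phi$ by exactly about $r$'' cannot exist, since a single move changes each $\phi_i$ by $O(1)$ (each slot is usable at most once as a start, once as a final, and once as an internal slot in a disjoint collection)---the $-r$ must come from $\approx r$ levels each dropping by $1 \pm O(1/L)$. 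That is exactly what the paper proves: after reducing to a move from a special slot to a slot at level $j$, it shows $\phi_i' = \phi_i$ for $i \le j$, $\phi_i - 2 \le \phi_i' \le \phi_i$ for $i \in \{j+1, j+2\}$, and $\phi_i - 1 \le \phi_i' \le \phi_i - 1 + 1/L$ for $i \ge j + 3$, the last via an explicit splice/split of the stanzas through the two slots; the potential $1 - (j-1)/L$ is defined precisely so that this $1/L$ precision is attainable. Your sketch gestures at this splicing but never carries it out, and the disjointness bookkeeping you correctly flag as the obstacle is where the entire content lies.

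A second gap comes from analyzing a general move $(s,t)$ directly rather than first reducing, as the paper does, to moves between a special slot and a non-special slot (any move is a type-(a) move, non-special to special, followed by a type-(b) move, special to non-special, and type (a) is the reverse of type (b), so only type (b) needs analysis). Without that reduction there is a second window of levels $i \ge b + 3$: pre-move, $s$ holds the ball at level $b \le i - 3$ (internal-eligible) and $t$ is an empty non-special slot (final-eligible); post-move the roles swap. Stanzas through $s$ or $t$ at these levels genuinely break and must be repaired by re-splicing (e.g.\ extending a stanza that ended at $t$ through $t$ into the now-empty $s$, or grafting the tail of a stanza that passed through $s$ onto one ending at $t$), and one must show the \emph{net} change at each such level is $O(1/L)$, not merely $O(1)$---otherwise an $O(L)$ error accumulates. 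Your proposal asserts these levels contribute little but supplies no argument, so as written it establishes neither direction of the bound $r \pm O(1)$.
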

\begin{proof}
We can assume without loss of generality that the only moves that the adversary ever makes are either (a) to move a ball from a non-special slot into a special slot or (b) to move a ball from a special slot to a non-special slot. Indeed, any move that takes a ball from a non-special slot to another non-special slot can be replaced by a move of type (a) followed by a move of type (b). Also recall that, when a ball is inserted, it initially resides in a special slot, and the adversary can then move it to a non-special slot if desired.

Since moves of type (a) are the reverse of moves of type (b), it suffices to analyze only moves of type (b), and to show that $\phi$ decreases by $r \pm O(1)$.

Suppose the adversary moves ball $ x $ from a special slot $s_1$ to a non-special slot $s_2$, where it has level $j$. Let $ \Sigma $ denote the state of the system before the move, and $ \Sigma' $ denote the state of the system after the move. Let $\phi$ be the potential of $ \Sigma $ and $\phi'$ be the potential of $\Sigma'$.

To complete the proof, we will argue that for each $i \in [L]$:
\begin{itemize}
\item If $i \le j$, then $\phi_i' = \phi_i$.
\item If $i \in \{j + 1, j + 2\}$, then $\phi_i - 2 \le \phi_i' \le \phi_i$.
\item If $i \ge j + 3$, then $\phi_i - 1 \le \phi_i' \le \phi_i - 1 + 1 / L$.
\end{itemize}

\textbf{Case 1: } The first case is immediate, since changes to the positions/levels of balls with levels $\ge i$ do not affect which sequences of ball moves correspond to valid $ i $-stanzas. 

\textbf{Case 2: } Any valid $i$-stanza in $\Sigma'$ is also a valid stanza in $\Sigma$ (hence $\phi_i' \le \phi_i$), but there may be some $i$-stanzas in $\Sigma$ that are not valid in $\Sigma'$ (specifically, any $i$-stanza in $\Sigma$ that makes use of either ball $x$ to start a stanza, or slot $s_2$ to finish a stanza). For any set of disjoint $ i $-stanzas in $ \Sigma $, up to two of those $i$-stanzas might be invalid in $\Sigma'$ (but no more than two!). Thus $\phi_i' \ge \phi_i - 2$. 

\textbf{Case 3: }In the rest of the proof, we focus on the third case, where $i \ge j + 3$. Let $ C $ be a set of disjoint $i$-stanzas in $\Sigma$ that maximizes the sum of the potentials of the $i$-stanzas. Let $s_1 \circ c_1 $ (where $s_1$ is the slot defined earlier in the proof and $c_1$ is a sequence of slots) be the $i$-stanza in $C$ that uses $ x $ as its first ball (if such a stanza exists), and let $c_2 \circ s_2$ (where $c_2$ is a sequence of slots and $s_2$ is the slot defined earlier in the proof) be the $i$-stanza in $C$ that uses slot $s_2$ as its final slot (if such a stanza exists).

We begin by claiming that $c_1$ and $c_2$ exist without loss of generality. If $c_1$ does not exist, then we can modify $ C $ by removing any stanza that uses slot $s_2$, and inserting the stanza $\langle s_1, s_2\rangle$ instead (this replacement either keeps the total potential of $C$ the same or increases it). So $c_1$ exists without loss of generality. If $c_2$ does not exist, then we can modify $C$ by removing any stanza that uses $s_1$, and inserting the stanza $\langle s_1, s_2 \rangle$ instead (again, this cannot decrease the total potential of $ C $). Thus $ c_2 $ also exists without loss of generality. We can further observe that, if $s_1 \circ c_1$ and $c_2 \circ s_2$ happen to be the \emph{same} stanzas as one another, then that stanza is simply $\langle s_1, s_2\rangle$ (indeed, if that stanza were not $\langle s_1, s_2\rangle$, then we could replace it with $\langle s_1, s_2 \rangle$ in order to increase the potential of $C$, which would be a contradiction).

We will now argue that $\phi' \ge \phi - 1$. If $C$ contains the stanza $\langle s_1, s_2 \rangle$, then $C \setminus \{\langle s_1, s_2\rangle\}$ is a set of disjoint $i$-stanzas in $\Sigma'$ with potential exactly $1 - 1 /L$ smaller than that of $C$; thus $\phi' \ge \phi - (1 - 1 / L) \ge \phi - 1$. On the other hand, if $C$ does not contain the stanza $\langle s_1, s_2 \rangle$, then the stanzas $s_1 \circ c_1$ and $c_2 \circ s_2$ must be distinct. In this case, we claim that $c_3 = c_2 \circ s_2 \circ c_1$ is a valid $i$-stanza in $\Sigma'$. Indeed, slot $s_2$ in $\Sigma'$ contains ball $x$ at level $j \le i - 3$, so slot $s_2$ is allowed to be an internal slot in an $i$-stanza; and since, in $\Sigma$, the $i$-stanzas $s_1 \cdot c_1$ (which begins with the slot containing ball $x$) and $c_2 \circ s_2$ (which ends in slot $s_2$) are valid, it follows that, in $\Sigma'$, the $i$-stanza $c_3 = c_2 \circ s_2 \circ c_1$ is valid. Since $c_3$ is a valid $i$-stanza in $\Sigma'$, we have that $C' = C \setminus \{s_1 \circ c_1, c_2 \circ s_2\} \cup \{c_3\}$ is a set of disjoint $ i $-stanzas in $\Sigma'$. The potential of $ C' $ is exactly $ 1 $ smaller than that of $C$. So $\phi' \ge \phi - 1$. 

To complete the proof, we must also establish that $\phi \ge \phi' + 1 - 1/L$. Let $\overline{C}'$ be a set of disjoint $i$-stanzas in $\Sigma'$ that maximizes the sum of the potentials of the $i$-stanzas. If there is no stanza in $\overline{C}'$ that makes use of slot $s_2$, then $\overline{C}' \cup \{\langle s_1, s_2 \rangle\}$ is a valid set of disjoint $i$-stanzas in $\Sigma$, which would mean that $\phi \ge \phi' + 1 - 1/L$. Suppose, on the other hand that there is some $i$-stanza of the form $c_1 \circ s_2 \circ c_2$ in $\overline{C}'$. Then the stanzas $c_1 \circ s_2$ and $s_1 \circ c_2$ are valid in $\Sigma$, and thus $\overline{C}' \setminus \{c_1 \circ s_2 \circ c_2\} \cup \{c_1 \circ s_2, s_1 \circ c_2\}$ is a valid set of disjoint $i$-stanzas in $\Sigma$. This means that $\phi \ge \phi' + 1 \ge \phi' + 1 - 1 /L$, completing the proof.
\end{proof}

The previous two lemmas establish Properties 1 and 2 for $\phi$. Finally, the third property, which states that $0 \le \phi \le Ln$ is trivially true, since $\phi_i \in [0, n]$ for all $i \in [L]$. Thus Theorem \ref{thm:switchinglowerbound} is proven. 

\paragraph{Generalizing to other values of load factor and of probe complexity.}
So far we have assumed for simplicity that $\epsilon = 1/n$ and that the balls-to-slots scheme being analyzed achieves expected average probe complexity $O(1)$. We now generalize our lower bound to consider $\epsilon \ge 1/n$ and probe complexity $\omega(1)$.

\begin{theorem}
Let $L = \lceil (\log^* n) / 2 \rceil $. Consider a universe $U$ of sufficiently large polynomial size. Consider any balls-to-slots scheme that uses nearly uniform probe sequences, that achieves expected average probe complexity $O(\Tower(a))$ (across all balls in the system at any given moment), and that supports some $\epsilon = 1 / \log^{(b)} n$ where $b \le (\log^* n) / 4$.\footnote{In this notation, if $b = 0$, then $\epsilon = 1/n$.} The expected amortized switching cost per insertion/deletion must be at least
$$\Omega(\log^* n - a - b).$$
\label{thm:switchinglowerbound2}
\end{theorem}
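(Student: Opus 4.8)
The plan is to reduce the general statement in Theorem~\ref{thm:switchinglowerbound2} to the already-established machinery of Theorem~\ref{thm:switchinglowerbound}, by showing that the potential-function argument is robust to (i) relaxing the target average probe complexity from $O(1)$ to $O(\Tower(a))$ and (ii) relaxing the load factor from $\epsilon = 1/n$ to $\epsilon = 1/\log^{(b)} n$. The key observation for (i) is that a probe complexity of $O(\Tower(a))$ corresponds (up to the $\Theta(1+\log i)$ convention) to an index $i = O(2^{\Tower(a)}) = O(\Tower(a+1))$, i.e.\ to a \emph{level} of roughly $a+1$ in the tower-indexed level scheme. So instead of starting the level hierarchy at $\Tower(L)$, I would start it at $\Tower(L + a + O(1))$: redefine level $0$ to consist of all balls in slots $h_i(x)$ with $i \le \Tower(L+a+c_0)$ for a suitable constant $c_0$, and level $j$ for $j \in [1, L]$ to consist of balls in slots $h_i(x)$ with $i \in (\Tower(L+a+c_0+j-1), \Tower(L+a+c_0+j)]$. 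With this shift, a scheme achieving expected average probe complexity $O(\Tower(a))$ places all but a $1/\Tower(\text{large})$ fraction of its mass in level $0$, so the analogue of Lemma~\ref{lem:impact} still gives $\E[\psi] = \Theta(ML')$ where $L'$ is the number of levels we keep; the only change is that the bound \eqref{eq:numhighlevel} on $\E[|X|]$ in Lemma~\ref{lem:lowd} now reads $\E[|X|] \le O(n \Tower(a))/\Tower(L+a+c_0+i+1)$, which is still super-polynomially small in $\Tower(\cdot)$ and hence the rest of Lemma~\ref{lem:lowd} and Lemma~\ref{lem:safe} go through verbatim. The number of usable levels drops from $\lceil (\log^* n)/2\rceil$ to $\lceil(\log^* n)/2\rceil - a - O(1)$, which is where the $-a$ in the bound comes from.

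For (ii), the only place where $\epsilon = 1/n$ was used in an essential way is in Lemma~\ref{lem:lowd}, where we argued that the number of \emph{empty} slots at high level is at most $1$ more than the number of \emph{balls} at high level. With $\epsilon = 1/\log^{(b)} n$, there are $\epsilon n = n/\log^{(b)} n$ empty slots in total, so now $|X|$ is bounded by (balls at level $\ge i+3$) plus $n/\log^{(b)} n$. I would handle this by absorbing the empty slots into the level hierarchy in the same tower-shifted fashion: an empty slot contributes to $|X|$ at level $i+3$ only if it is one of the finitely many slots we are willing to call ``high level,'' and by reindexing levels so that the bottom $b$ levels of the tower are ``wasted'' on accounting for the $n/\log^{(b)} n$ empty slots, the bound $|X| \le n\,\poly(L)/\Tower(\text{large})$ is restored — at the cost of discarding another $O(b)$ levels. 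Concretely, start level $0$ at index $\Tower(L+a+b+c_0)$; then the $\epsilon n$ empty slots, distributed among slots whose capacity-to-be-high-level threshold is $\Tower(L+a+b+c_0+i)$, number at most $n/\log^{(b)}n \le n/\Tower(\text{something} \ge b)$, which is again negligible against the $\Tower(L+a+\cdots)^{O(L)}$ neighborhood-expansion factor. This is the source of the $-b$ term.

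Having re-established Properties 1--3 for the shifted potential $\phi = \sum_{i=3}^{L'} \phi_i$ with $L' = \Omega(\log^* n - a - b)$, the final telescoping argument is unchanged: $\psi + \phi$ starts at most $nL' $, each operation raises it by $O(1)$ in expectation, each adversary move raises it by $O(1)$ deterministically, and the impact lemma forces $\E[\psi] = \Theta(ML')$, yielding $\E[k] = \Omega(ML') = \Omega(M(\log^* n - a - b))$ adversary moves over $M$ insertions, i.e.\ amortized switching cost $\Omega(\log^* n - a - b)$. I would also note the edge cases: the statement is only meaningful when $a + b$ is a bit less than $\log^* n$ (so that $L' \ge 3$, and in particular $\ge 1$), matching the hypothesis $b \le (\log^* n)/4$ and the implicit assumption that $\Tower(a) = O(\text{average probe complexity}) = O(\log n)$ forces $a \le \log^* n - O(1)$; and when $L' = O(1)$ the conclusion is vacuous.

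The main obstacle I anticipate is getting the constants and the level-shift bookkeeping exactly right in Lemma~\ref{lem:lowd} — specifically, verifying that after shifting the tower by $a + b + O(1)$ and folding the $\epsilon n$ empty slots into the ``high level'' count, the quantity $\E[|X|] \cdot \Tower(L+i+\cdots)^{O(L)}$ is still $\ll n/\poly(L)$. This requires that $\epsilon^{-1} = \log^{(b)} n$ grows fast enough relative to the expansion factor $\Tower(L+i+a+b+c_0)^{O(L)}$; since the latter is at most $2^{\poly(\Tower(L+i+a+b+c_0))}$ and we need $n/\log^{(b)} n$ times this to be $\ll n$, we are comparing $\log^{(b)} n$ against $2^{\poly(\Tower(L+\cdots))}$, and because $\Tower(L+\cdots) \ge \Tower(L) = $ a quantity far exceeding any fixed iterated log of $n$ (as $L = \Theta(\log^* n)$), this holds comfortably — but it must be checked that the $b$-fold shift doesn't push the tower index past $\log^* n$, which is exactly why we need $a + b + O(1) \le (\log^* n)/2$, consistent with the theorem's hypotheses. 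Everything else is a mechanical re-run of the three-property verification with the shifted indices.
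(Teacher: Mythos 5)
Your overall strategy --- rerun the potential-function argument with a modified level hierarchy and check that Lemmas~\ref{lem:safe}, \ref{lem:lowd} and \ref{lem:impact} survive --- is indeed how the paper proceeds, but the specific modification you make to handle $a$ loses the theorem in exactly the regime where it is needed. You shift the \emph{base} of the hierarchy from tower index $L$ up to $L+a+O(1)$ (and then $+b$), so the usable levels are only those with tower index in $[L+a+b+O(1),\ \log^* n]$, i.e.\ $L' \approx \log^* n/2 - a - b$ of them; your final step then asserts $L' = \Omega(\log^* n - a - b)$, and your closing remark even adds the hypothesis ``$a+b+O(1) \le (\log^* n)/2$.'' That hypothesis is not in the theorem: $a$ is unconstrained, and in the paper's main application (Corollary~\ref{cor:augmentlower}) one takes $a \ge \log^* n - k$, which can vastly exceed $\log^* n/2$. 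For any $a \gtrsim \log^* n/2$ your hierarchy has no levels at all, while the theorem still promises $\Omega(\log^* n - a - b)$, which is nontrivial up to $a \approx \log^* n$. The paper avoids this by \emph{not} moving the base: it keeps the levels anchored at tower indices $[L, 2L \approx \log^* n]$ and, when $a > L$, merely collapses the levels below $a-L$ (so that $\Tower(a) \le \Tower(L+i)$ for every surviving level, which is all that Lemma~\ref{lem:lowd} needs), retaining $\approx \log^* n - a$ levels rather than $\approx \log^* n/2 - a$; it then adjusts Lemma~\ref{lem:impact} because a deleted ball now has expected level $\max(0,a-L)+o(1)$.

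The treatment of $b$ also goes in the wrong direction. The $\Theta(n/\log^{(b)} n)$ empty slots sit at the \emph{top} level and contribute to the set $X$ in Lemma~\ref{lem:lowd} for \emph{every} level $i$, so the constraint is $\frac{n}{\log^{(b)} n}\cdot \Tower(L+i)^{O(L)} \ll n/\poly(L)$ for each usable $i$, i.e.\ every usable level must satisfy roughly $\Tower(L+i+1) \le \Tower(\log^* n - b) = \log^{(b)} n$. That forces one to truncate the \emph{top} of the hierarchy at $\ell_* \approx L-b$ (as the paper does, redefining everything above $\ell_*$ to be level $\ell_*$ and correspondingly reducing the $+L$ gain per insertion in Lemma~\ref{lem:impact} to $+\ell_*$); shifting the \emph{bottom} up by $b$ does nothing for the empty slots. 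Your own sanity check has the inequality backwards: you need $2^{\poly(\Tower(L+i+\cdots))} \ll \log^{(b)} n$ and argue it ``holds comfortably'' because $\Tower(L+\cdots)$ far exceeds any iterated logarithm of $n$ --- but that largeness is precisely what makes the inequality \emph{fail} at the upper levels of your hierarchy (where the tower index approaches $\log^* n$ and the expansion factor is $n^{\Omega(L)}$), so without capping the top, Lemma~\ref{lem:lowd}, and with it Property~1 of $\phi$, breaks.
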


Note that, when $a = O(1)$ and $b = 0$, Theorem \ref{thm:switchinglowerbound2} becomes Theorem \ref{thm:switchinglowerbound}, which we have already proven. And as we shall now see, the proof of Theorem \ref{thm:switchinglowerbound2} requires only a slight modification to the proof of Theorem \ref{thm:switchinglowerbound}. 

\begin{proof}
Let us begin by considering $a > 0$ and $b = 0$, so average probe complexity may be $\omega(1)$ but $\epsilon = 1 / n$. 

The only substantive modification to the proof is that, if $a - L \ge 0$, then we redefine any balls in levels less than $a - L$ to now be in level $a - L$ (so we eliminate levels $0, 1, \ldots, a - L - 1$). Intuitively, this is because, since the balls-to-slots scheme is allowed to have average probe complexity $O(\Tower(a))$, it is without loss of generality the case that every ball is in level $a - L$ or above.

Formally, the reason that we need to restrict to levels $i \ge a - L$ is to preserve Lemma \ref{lem:lowd}.
The bound \eqref{eq:numhighlevel} on the expected number of balls with probe complexity at least $\Tower(L + i + 1)$ now becomes
\begin{equation}
    O\left(\frac{n \Tower(a)}{\Tower(L + i + 1)}\right), \label{eq:numhighlevel2}
\end{equation}
instead of $O(n / \Tower(L + i + 1))$. In order for the $\Tower(a)$ term not to become significant in the proof of Lemma \ref{lem:lowd}, we need $\Tower(a) \le \Tower(L + i)$ (that way, in \eqref{eq:numy}, the newly introduced $\Tower(a)$ term can be absorbed into the $\Tower(L + i)^{O(L)}$ term). Since we restrict ourselves to levels $i$ satisfying $i \ge a - L$, Lemma \ref{lem:lowd} continues to be correct for every valid level $i$. 

We must also modify Lemma \ref{lem:impact} to accommodate the fact that each deletion now removes a ball with expected level $\max(0, a - L) + o(1)$ (rather than expected level $O(1)$). This changes our final lower bound on expected average switching cost to $\Omega(L - (a - L)) = \Omega(\log^* n - a)$.

Now suppose we also allow $b > 0$. To handle this, we again modify how we define the levels: we define $\ell_* = L - b - 1$, and we declare any ball or slot (including special slots) that was previously in some level $\ell' > \ell_*$ to now be in level $\ell_*$. The intuition for why we do this is that, once we get to level $\ell_*$, many of the slots that are in that level or above are actually empty slots, so it makes sense to treat that as the top level. 

Formally, the reason that we need to restrict to levels $i \le \ell_*$ is to again preserve \eqref{eq:numhighlevel} in Lemma \ref{lem:lowd}. In particular, \eqref{eq:numhighlevel} must count not just the balls that have probe complexity $\Tower(L + i + 1)$ but also any (non-special) \emph{empty slots} (since such slots represent maximum-level nodes in $G_i'$ and therefore contribute to $|X|$). There may be up to $O(n / \log^{(b)} n)$ such slots (in expectation), each of which is in the top level; to preserve \eqref{eq:numhighlevel}, we therefore need that 
\begin{equation}
    O(n / \log^{(b)} n) \le \frac{O(n \Tower(a))}{\Tower(L + i + 1)}.\label{eq:highlimit}
\end{equation}
Recall, however, that we have limited ourselves to levels $i$ satisfying $i \le l_*$, which implies $i \le (\log^* n) / 2 - b - 1$, and thus that $L + i + 1 \le \log^* n - b$, and therefore that
$$\log^{(b)} n =\Tower(\log^* n - b) \ge \Tower(L + i + 1).$$
Hence, as long as $i$ is a valid level, then \eqref{eq:numhighlevel2} still holds, which preserves the correctness of Lemma \ref{lem:lowd}. 

Since we restrict ourselves to $L - b - 1$ levels, we must also modify Lemma \ref{lem:impact} to accommodate the fact that each insertion now increases the sum of the levels of the balls $J$ by only $L - b - 1$ (instead of by $L$). In the case where $\max(0, a - L) \le L$, this reduces the final lower-bound that we achieve on expected average switching cost to $\Omega(L - b - 1) = \Omega(L)$ (here we are using that $b \le (\log^* n) / 4$), and in the case where $\max(0, a - L) > L$, this reduces the final lower bound to $\Omega(\log^* n - a - b)$. Both lower bounds are equivalent to $\Omega(\log^* n - a - b)$.
\end{proof}

We remark that the restriction $b \le (\log^* n) / 4$ can easily be reduced by defining $L$ to be much smaller than $(\log^* n) / 2$. Such values of $b$ are not relevant to hash-table design, however, since any augmented open-addressing hash table with load factor of at least, say, $1 - 1 / O(\log \log n)$ must use a balls-to-slots scheme that supports $b \le 2$.

\paragraph{Non-nearly-uniform probe sequences. }Finally, we extend our lower bound to non-nearly-uniform probe sequences. To do this, we formally reduce the non-nearly-uniform case to the nearly-uniform case.

For any assignment $A$ mapping some set of up to $n$ balls to slots, and for any function $h$ determining the probe sequences $h_1(x), h_2(x), \ldots$ for each ball, define $ c (A, h) $ to be the total probe complexity needed to implement assignment $ A $ using $h$.
\begin{lemma}
Consider any universe $ U $ and consider any function $h$ assigning a probe sequence to each ball $x \in U$. Then there exists a nearly uniform $ h' $ that has the following guarantee. For any assignment $A$ of $\Theta(n)$ balls to slots, $ c (A, h')\le O (c (A, h) + n) $.
\label{lem:uniform}
\end{lemma}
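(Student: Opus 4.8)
The plan is to construct $h'$ by ``truncating'' each probe sequence of $h$ to remove its pathological prefix, and then ``padding'' with a fresh, genuinely-random tail that is nearly uniform by construction. Concretely, fix a threshold $i_0 = \Theta(1)$ (a large enough constant) and, for each ball $x$, keep only the entries $h_1(x), \ldots, h_{i_0}(x)$ of the original sequence --- these are the $O(1)$ cheapest slots, and an assignment using only these contributes at most $O(n)$ total probe complexity regardless of how non-uniform $h$ is. For the rest of $h'(x)$, draw a fresh uniformly random permutation $\pi_x$ of $[n]$ and set $h'_{i_0 + t}(x)$ to be the $t$-th element of $\pi_x$ (skipping repeats of the first $i_0$ entries). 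Since each tail entry is (marginally) a uniformly random slot, we get $q(h', i, j) = O(i)$ for $i \ge i_0$, hence $q(h',i,j) < \poly(i)$ for all $i,j$, so $h'$ is nearly uniform.

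The key estimate is that $c(A, h') \le O(c(A, h) + n)$ for every assignment $A$ of $\Theta(n)$ balls. Split the balls of $A$ into those assigned, under $A$, to one of their first $i_0$ probe-slots according to $h$ (call this set $A_{\text{cheap}}$), and the rest ($A_{\text{expensive}}$). For $x \in A_{\text{cheap}}$, its slot $A(x)$ is among $h_1(x), \ldots, h_{i_0}(x)$, so $A(x)$ also appears among $h'_1(x), \ldots, h'_{i_0}(x)$, giving probe complexity $O(1 + \log i_0) = O(1)$ under $h'$; summed over $|A_{\text{cheap}}| \le \Theta(n)$ balls this is $O(n)$. For $x \in A_{\text{expensive}}$, under $h$ the ball already pays $\Omega(\log i_0) = \Omega(1)$ --- in fact its probe complexity under $h$ is $\Theta(1 + \log i_x)$ where $i_x > i_0$ --- so $\sum_{x \in A_{\text{expensive}}} 1 \le O(c(A,h))$, and it suffices to charge each such ball $O(1 + \log i_x)$ worth of probe complexity under $h'$ as well. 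Here is where the random tail is used: $A(x)$ is a fixed slot, and among the entries $h'_{i_0+1}(x), h'_{i_0+2}(x), \ldots$ (a uniformly random ordering of the remaining slots), the expected index at which $A(x)$ appears is $\Theta(n)$, but we cannot afford $\log n$ per ball. Instead we must argue that for a \emph{typical} assignment $A$ arising in the lower-bound construction --- or more carefully, for \emph{every} $A$ after a union bound over the $\le 2^{O(n \log n)}$ possible assignments --- only few expensive balls land deep in their tails.

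The main obstacle, then, is exactly this last point: a worst-case assignment $A$ could adversarially place every expensive ball at the very end of its random tail, costing $\Theta(n \log n)$ total and breaking the bound. The fix is to be less cavalier about the tail: rather than a single random permutation, build $h'$'s tail in geometrically-growing blocks --- the first block after the $i_0$ prefix consists of $\Theta(2^{i_0})$ fresh random slots, the next of $\Theta(2^{2 i_0})$, and so on --- so that a ball reaching index $i$ in $h'$ has $\Theta(i)$ random choices available among its first $i$ slots, and for any fixed target slot $A(x)$ the probability that $A(x)$ fails to appear within the first $i$ entries of $h'(x)$ is at most $(1 - \Theta(1/n))^{\Theta(i)}$. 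Combined with the fact that an assignment is determined by at most $n \log n$ bits, a Chernoff/union-bound argument over all assignments shows that with probability $1 - o(1)$ over the choice of $h'$, \emph{every} assignment $A$ satisfies: the number of balls whose $A$-slot first appears at index $\ge i$ in $h'$ is $O(n / i + \text{(contribution from } h))$, so that $\sum_x (\text{probe complexity of } x \text{ under } h') = O(c(A,h) + n)$. Since such an $h'$ exists, the lemma follows; I would present the block construction first, then the per-assignment tail-depth tail bound, then the union bound, then assemble the $c(A,h') \le O(c(A,h)+n)$ inequality.
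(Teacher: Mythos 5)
Your construction has two genuine failures, and they are not repairable by the union-bound patch you sketch. First, the near-uniformity claim is wrong: keeping the original prefix $h_1(x),\ldots,h_{i_0}(x)$ preserves whatever pathology $h$ has there. If, say, $h_1(x)=1$ for every $x\in U$, then $q(h',1,1)=n$, while near-uniformity demands $q(h',i,j)<\poly(i)$ for \emph{all} $i,j$, including constant $i$; since $i_0=\Theta(1)$, the prefix can violate the definition maximally. (And you cannot simply drop the prefix, because then even cheap assignments would become expensive.) Second, and more fundamentally, replacing the deep part of $h$ by fresh random slots destroys the cost guarantee for a \emph{single, fixed} assignment, so no Chernoff-plus-union-bound over assignments can help. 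Take an $h$ in which the slots $h_{i_0+1}(x)$ are distinct across some $n$ balls, and let $A$ assign each of these balls to $h_{i_0+1}(x)$; then $c(A,h)=\Theta(n)$, but under your $h'$ the slot $A(x)$ lives only in the random tail, and for any fixed target slot the probability that it appears among the first $i$ tail entries is $O(i/n)$ (your own bound $(1-\Theta(1/n))^{\Theta(i)}$ is only small once $i=\Omega(n)$). Hence almost every such ball has probe complexity $\Theta(\log n)$ under $h'$, giving $c(A,h')=\Theta(n\log n)\gg O(c(A,h)+n)$ with high probability over your randomness --- the adversarial assignment here does not even depend on the tails, so ``few expensive balls land deep in their tails'' is false. (The budget $c(A,h)$ only charges $O(\log i_x)$ per deep ball, so you cannot afford index $\Theta(n)$ for balls whose original index was, say, $i_0+1$.)

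The missing idea is that $h'$ must \emph{retain every slot of every ball's probe sequence}, merely re-indexed: the paper defines $h'$ deterministically by moving the slot $j$ from index $s(x,j)$ in $h$ to index $t\bigl(s(x,j),\lceil q(h,s(x,j),j)\rceil\bigr)$ for an injective pairing function $t$ with $\log t(a,b)=O(1+\log a+\log b)$. Inflating the index by the crowding factor $q$ is exactly what forces $q(h',i,j)\le i^2$ (near-uniformity), and the extra cost $\sum_x \log q(h,s(x,A(x)),A(x))$ is bounded over \emph{every} assignment by $O(c(A,h)+n)$ using only the counting identity $\sum_j q(h,s,j)\le ns$ together with Jensen's inequality --- no randomness, no union bound over assignments, and no dependence on how adversarial $A$ is. Your approach discards precisely the information (the deep probe positions of $h$) that the lemma must preserve, which is why it cannot be pushed through.
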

\begin{proof}

For each ball $ x\in U $ and each $ j\in [n] $, let $ s (x, j) = \argmin_k \{h_k(x) = j\} $ and let $s'(x, j) = \argmin_k \{h'_k(x) = j\}$ (we can assume without loss of generality that these quantities exist). 

We now describe how to construct $ h' $. Rather than specifying $h'_i(x)$ for all $i, x$, it suffices to specify $s'(x, j)$ for all $x, j$. Note that, in order for $s'(x, j)$ to be well defined, the only restriction is that the quantities $s(x, 1), s(x, 2), \ldots, s(x, n)$ must be \emph{distinct} natural numbers.

Define $t:\mathbb{N} \times \mathbb{N} \rightarrow \mathbb{N}$ to be an injective function satisfying $\log t(a, b) \le O(1 + \log a + \log b)$ for all $(a, b) \in \mathbb{N} \times \mathbb{N}$ and satisfying $t(a, b) \ge \max(a, b)$ for all $(a, b) \in \mathbb{N}$. Recall that, for a probe sequence function $h'$, if we have $n$ random balls, then $q(h', i, j)$ is the expected number of balls $x$ that are capable of residing in position $j$ using one of the first $i$ probe values $h'_1(x), \ldots, h'_i(x)$. We set
\begin{equation}
s' (x, j) = t(s (x, j), \lceil q (h, s (x, j), j) \rceil).
\label{sprime}
\end{equation}
We claim that $s'(x, j)$ is well defined. Indeed, if $s'(x, j_1) = s'(x, j_2)$ for some $j_1 \neq j_2$, then we must also have that $s(x, j_1) = s(x, j_2)$, which would be a contradiction. We also observe that $h'$ (constructed using $s'$) has strictly larger probe complexities than does $h$---if a ball $x$ is in a position $j$, then its probe complexity using $h$ would be $\Theta(1 + \log s(x, j))$ but its probe complexity using $h'$ would be $\Theta(1 + \log s(x, j) + \log \lceil q (h, s (x, j), j) \rceil)$. It may seem strange that we are defining $h'$ to be worse than $h$, but as we shall now prove, this allows for us to guarantee that $h'$ is nearly uniform. Once we establish this, then our only remaining task will be to bound how much worse $h'$ is than $h$, in the worst case.

We now argue that $ h' $ is nearly uniform, meaning that $q(h', i, j) \le \poly(i)$ for all $i$. Observe that
\begin{align*}
q(h', i, j) & = n \cdot \Pr_{ x\in U } [h'_k (x) = j \text{ for some }k \le i] \\
            & \le \frac{n \cdot  |\{x \in U \mid s(x, j) \le i \text{ and }  q(h, s(x, j), j) \le i\}|}{|U|},
\end{align*}
since in order to have $h'_k(x) = j$ for some $k \le i$, we must have that $t(s (x, j), \lceil q (h, s (x, j), j) \rceil)) \le i$ and thus that $s(x, j) \le i$ and $q(h, s(x, j), j) \le i$. By expanding out the definition of $q(h, s(x, j), j)$, we get
\begin{align*}
 q(h', i, j)           & \le \frac{n \cdot  |\{x \in U \mid s(x, j) \le i \text{ and }  \Pr_{y\in U } [h_{r}(y) = j \text{ for some }r \le s(x, j)] \le i / n\}|}{|U|}\\ 
            & \le \frac{n \cdot  |\{x \in U \mid s(x, j) \le i \text{ and }  \Pr_{y\in U } [h_{s(x, j)}(y) = j] \le i / n\}|}{|U|}\\ 
            &  \le \frac{n  \cdot |\{x \in U \mid \exists \, s \in [i] \text{ s.t. } h_s(x) = j \text{ and } \Pr_{y\in U } [h_{s}(y) = j] \le i / n\}|}{|U|} \\
            & \le  \frac{n}{|U|} \sum_{s \in [i] \text{ such that}  \Pr_{y\in U } [h_{s}(y) = j] \le i / n} |\{x \in U \mid h_s(x) = j\}| \\
            & = \frac{n}{|U|} \sum_{s \in [i] \text{ such that}  \Pr_{y\in U } [h_{s}(y) = j] \le i / n} |U| \Pr_{y\in U } [h_{s}(y) = j] \\
            & \le \frac{n}{|U|} \sum_{s \in [i] \text{ such that}  \Pr_{y\in U } [h_{s}(y) = j] \le i / n} \frac{i |U|}{n} \\
            &  \le \frac{n}{|U|} \cdot \sum_{s \in [i]} \frac{i |U|}{n} \\
            & = i^2.
\end{align*}
This establishes the near-uniformity of $h'$.

To complete the proof, we must argue that  $ c (A, h')\le O (c (A, h) + n) $. Consider a ball $ x $ that $ A $ assigns to some position $ j $. The probe complexity of $ x $ using $h$ is $1 + \log s(x, j)$, and the probe complexity of $x$ using $h'$ is $O(1 + \log s(x, j) + \log q (h, s (x, j), j))$. Thus, if $ A $ assigns addresses $ j_1, j_2,\ldots, j_m$ to balls $ x_1, x_2,\ldots, x_m$, respectively, for some $m = \Theta(n)$, then our goal is to show that
$$\sum_{i = 1}^m \log q(h, s(x_i, j_i), j_i) \le O\left(n + \sum_{i = 1}^m \log s(x_i, j_i)\right).$$
The cases where $q(h, s(x_i, j_i), j_i) \le \poly(s(x_i, j_i)) $ trivially satisfy $\log q(h, s(x_i, j_i), j_i) \le O(\log s(x_i, j_i))$, so it suffices to show 
$$\sum_{i = 1}^m \mathbb{I}_{q(h, s(x_i, j_i), j_i) > \poly(s(x_i, j_i))} \log q(h, s(x_i, j_i), j_i) \le O(n).$$
Each $j \in [n]$ appears as a $j_i$ at most once in the above sum. Thus 
$$\sum_{i = 1}^m \mathbb{I}_{q(h, s(x_i, j_i), j_i) > \poly(s(x_i, j_i))} \log q(h, s(x_i, j_i), j_i) \le  \sum_{s = 1}^\infty \sum_{j = 1}^n \mathbb{I}_{q(h, s, j) > \poly(s)} \log q(h, s, j).$$
We can therefore complete the proof by showing that
$$\sum_{s = 1}^\infty \sum_{j = 1}^n \mathbb{I}_{q(h, s, j) > \poly(s)} \log q(h, s, j) \le O(n).$$
Let $Q_s$ be the set of $j$ for which $q(h, s, j) > \poly(s)$. Then,
$$\sum_{s = 1}^\infty \sum_{j = 1}^n \mathbb{I}_{q(h, s, j) > \poly(s)} \log q(h, s, j) = \sum_{s = 1}^\infty \sum_{j \in Q_s} \log q(h, s, j).$$
For any fixed $s$, we have that
\begin{align*}
\sum_{j = 1}^n q(h, s, j) & = \sum_{i = 1}^s \sum_{j = 1}^n n\Pr_{x \in U}[h_i(x) = j] \\
                          & = n \sum_{i = 1}^s \sum_{j = 1}^n \Pr_{x \in U}[h_i(x) = j] \\
                          & = n \sum_{i = 1}^s \Pr_{x \in U} [h_i(x) = j \text{ for some } j \in [n]] \\
                          & \le n s. 
\end{align*}
Thus $\sum_{j \in Q_s} q(h, s, j)$ is also at most $sn$ and has at most $sn / \poly(s) = n / \poly(s)$ terms. 
By Jensen's inequality, this implies that 
$$\sum_{j \in Q_s} \log q(h, s, j) \le \frac{n}{\poly(s)} \log \frac{sn}{n / \poly(s)} = \frac{n}{\poly(s)}.$$
Summing over all $s$,
$$\sum_{s = 1}^\infty \sum_{j \in Q_s} \log q(h, s, j) \le \sum_{s = 1}^\infty \frac{n}{\poly(s)} = O(n).$$
This completes the proof.
\end{proof}

By the preceding lemma, the assumption in Theorem \ref{thm:switchinglowerbound2} that $h$ is nearly uniform is true without loss of generality, since we can substitute any non-nearly-uniform $ h $ with a nearly-uniform $ h' $ while having an asymptotically negligible effect on the probe complexity of any balls-to-slots assignment. Thus we arrive at the main theorem of the section:

\begin{theorem}
Let $L = \lceil (\log^* n) / 2 \rceil $. Suppose the universe $U$ has sufficiently large polynomial size. Consider any balls-to-slots scheme that achieves expected average probe complexity $O(\Tower(a))$ (across all balls in the system at any given moment) and supports some $\epsilon = 1 / \log^{b} n$ where $b \le (\log^* n) / 4$. The expected amortized switching cost per insertion/deletion must be at least
$$\Omega(\log^* n - a - b).$$
\label{thm:switchinglowerbound3}
\end{theorem}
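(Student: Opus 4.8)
The plan is to obtain Theorem~\ref{thm:switchinglowerbound3} as an essentially immediate corollary of Theorem~\ref{thm:switchinglowerbound2} together with Lemma~\ref{lem:uniform}: the latter lets us discharge the near-uniformity hypothesis of the former without loss of generality. I would begin with an arbitrary balls-to-slots scheme $\mathcal{A}$ satisfying the hypotheses of Theorem~\ref{thm:switchinglowerbound3}, with probe-sequence function $h$ that is possibly randomized and possibly far from uniform. For each outcome of $\mathcal{A}$'s internal randomness, apply Lemma~\ref{lem:uniform} to that (now fixed) $h$ to obtain a nearly-uniform probe-sequence function $h'$, and let $\mathcal{A}'$ be the scheme that runs $\mathcal{A}$ verbatim, placing ball $x$ in exactly the slot $\mathcal{A}$ would place it in, but measuring probe complexity with respect to $h'$ rather than $h$. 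Since $h'$ is a deterministic function of $h$ and the adversary is oblivious to $h$ to begin with, the adversary is also oblivious to $h'$, so $\mathcal{A}'$ is a legitimate balls-to-slots scheme on the same input distribution, and it uses nearly-uniform probe sequences with probability $1$.

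Two observations then finish the argument. First, $\mathcal{A}'$ has exactly the same switching cost as $\mathcal{A}$ on every operation, since switching cost counts the number of balls that change slots and is entirely insensitive to which probe index realizes a ball's slot. Second, $\mathcal{A}'$ inherits the expected average probe complexity $O(\Tower(a))$: fix any moment and let $A$ denote the current assignment of balls to slots; because $\epsilon$ is tiny ($\epsilon \le 1/\log^{(b)} n$ with $b \le (\log^* n)/4$, hence $\epsilon = o(1)$), the input sequence keeps $\Theta(n)$ balls present, so Lemma~\ref{lem:uniform} applies and gives $c(A, h') \le O(c(A, h) + n)$. Taking expectations and using that $\mathcal{A}$'s expected total probe complexity is $O(n \Tower(a))$ over its $\Theta(n)$ present balls, we get $\E[c(A, h')] \le O(n \Tower(a) + n) = O(n \Tower(a))$ (here $\Tower(a) \ge 1$), i.e.\ expected average probe complexity $O(\Tower(a))$ for $\mathcal{A}'$.

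Now $\mathcal{A}'$ meets every hypothesis of Theorem~\ref{thm:switchinglowerbound2} — nearly-uniform probe sequences, expected average probe complexity $O(\Tower(a))$, and load-factor parameter $\epsilon = 1/\log^{(b)} n$ with $b \le (\log^* n)/4$ — so that theorem yields expected amortized switching cost $\Omega(\log^* n - a - b)$ for $\mathcal{A}'$; by the first observation the same lower bound holds for $\mathcal{A}$, which is the claim. I do not expect a genuine obstacle here, since all the content lives in Theorem~\ref{thm:switchinglowerbound2} and Lemma~\ref{lem:uniform}. The one point that needs care is the bookkeeping in the second observation: verifying that the additive $+n$ in Lemma~\ref{lem:uniform} is harmless, which is exactly where the smallness of $\epsilon$ is used (it forces $\Theta(n)$ present balls, hence $n = O(n\Tower(a))$), and which is also the reason Lemma~\ref{lem:uniform} is stated for assignments of $\Theta(n)$ balls. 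One should likewise note that near-uniformity is a per-outcome deterministic property of the $h'$ produced by Lemma~\ref{lem:uniform}, so $\mathcal{A}'$ qualifies for Theorem~\ref{thm:switchinglowerbound2} almost surely; everything else is a mechanical composition of the two previously established results.
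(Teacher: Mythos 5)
Your proposal is correct and follows essentially the same route as the paper: the paper also derives Theorem~\ref{thm:switchinglowerbound3} by invoking Lemma~\ref{lem:uniform} to replace any probe-sequence function $h$ with a nearly-uniform $h'$ at an asymptotically negligible cost to probe complexity (and no change to switching cost), and then applying Theorem~\ref{thm:switchinglowerbound2}. Your write-up merely makes explicit the bookkeeping (obliviousness of the adversary to $h'$, absorption of the additive $O(n)$ term) that the paper leaves implicit.
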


\begin{corollary}
 Suppose the universe $U$ has sufficiently large polynomial size. Consider any balls-to-slots scheme that achieves expected average probe complexity $O(1)$ (across all balls in the system at any given moment) and supports $\epsilon = 1/n$. The expected amortized switching cost per insertion/deletion must be $\Omega(\log^* n)$.
\end{corollary}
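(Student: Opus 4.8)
The plan is to obtain this corollary as an immediate specialization of Theorem~\ref{thm:switchinglowerbound3}, which already handles arbitrary (not necessarily nearly-uniform) probe sequences, arbitrary target probe complexity $O(\Tower(a))$, and arbitrary load-factor parameter $\epsilon = 1/\log^{b} n$ with $b \le (\log^* n)/4$, giving an expected amortized switching cost of $\Omega(\log^* n - a - b)$.

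First I would fix the parameters. Since the hypothesis of the corollary is that the expected average probe complexity is $O(1)$, I would take $a$ to be a fixed constant large enough that $\Tower(a)$ dominates the implicit constant in that $O(1)$ bound (for instance $a = 2$ already gives $\Tower(2) = 4$, and one is always free to enlarge $a$ by a constant); then $O(1) \le O(\Tower(a))$, so the probe-complexity hypothesis of Theorem~\ref{thm:switchinglowerbound3} is met. Next, I would observe that the hypothesis $\epsilon = 1/n$ is exactly the $b = 0$ case of Theorem~\ref{thm:switchinglowerbound3}, using the convention recorded in the footnote of Theorem~\ref{thm:switchinglowerbound2} that $b = 0$ corresponds to $\epsilon = 1/n$; and $b = 0 \le (\log^* n)/4$ holds trivially. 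The remaining hypothesis — that $U$ has sufficiently large polynomial size — is assumed in both statements.

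Plugging $a = O(1)$ and $b = 0$ into the conclusion of Theorem~\ref{thm:switchinglowerbound3} then yields an expected amortized switching cost of $\Omega(\log^* n - a - b) = \Omega(\log^* n - O(1)) = \Omega(\log^* n)$, as claimed. I do not anticipate any genuine obstacle: all the substantive work has already been done in proving Theorem~\ref{thm:switchinglowerbound3} — namely the potential-function construction of $\phi$ together with Properties~1--3, the impact bound of Lemma~\ref{lem:impact}, and the reduction from arbitrary probe sequences to nearly-uniform ones via Lemma~\ref{lem:uniform}. The only point meriting a sentence of care is verifying that the additive $-a$ and $-b$ terms are absorbed into $\Omega(\log^* n)$; this is immediate since $a$ and $b$ are each $O(1)$ whereas $\log^* n \to \infty$.
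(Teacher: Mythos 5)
Your proposal is correct and matches the paper's approach: the corollary is stated immediately after Theorem~\ref{thm:switchinglowerbound3} precisely as the specialization $a = O(1)$, $b = 0$ (i.e., $\epsilon = 1/n$), yielding $\Omega(\log^* n - a - b) = \Omega(\log^* n)$. Your extra remark about absorbing the constant $-a$ term is the only verification needed, and it is handled just as you describe.
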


To conclude the section, we reinterpret our result as a lower bound on augmented open-addressing.

\begin{corollary}
Any augmented open-addressed hash table that stores quotiented $(1 + \Theta(1)) \log n$-bit elements in an array and incurs $O(\log^{(k)} n)$ expected wasted bits per key must have average insertion/deletion time $\Omega(k)$. 
\label{cor:augmentlower}
\end{corollary}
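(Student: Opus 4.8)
\emph{The plan} is to extract from the hash table a balls-to-slots scheme and then invoke Theorem~\ref{thm:switchinglowerbound3}. Let $H$ be an augmented open-addressed hash table that stores quotiented $(1+\Theta(1))\log n$-bit elements in an array of $m=(1+\epsilon)n$ slots with $O(\log^{(k)} n)$ expected wasted bits per key; I may assume $k\ge 1$, since the conclusion $\Omega(k)$ is vacuous for $k=0$ and trivially true for any constant $k$. First I would read off the balls-to-slots scheme: its probe sequences are those used by $H$, and its key-to-slot assignment is the one $H$ maintains over time. The $m$ ``special'' slots that the probe-complexity problem grants only help, and I would use them merely to hold a freshly inserted key for the instant before $H$ relocates it. Since a single insertion or deletion that runs in time $T$ relocates at most $O(T)$ keys (each relocation writes $\Omega(1)$ words), the amortized switching cost of the extracted scheme is $O(\overline T)$, where $\overline T$ is the amortized operation time of $H$.

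Next I would charge the wasted bits of $H$ against probe complexity and load. Run $H$ on the input that inserts a uniformly random $n$-subset $K\subseteq U$, with i.i.d.\ uniform values, in a uniformly random order, and look at the moment all $n$ keys are present. By the correspondence defining augmented open addressing, the space of $H$ splits into the array---which has $(1+\epsilon)n$ slots of width $\Theta(\log n)$ and so occupies $\mathcal B(U,V,n)+\Theta(\epsilon n\log n)$ bits up to $O(n)$, since $n$ occupied slots already encode $K$ and the values---plus the query router, which stores the probe index $i_x$ of each present key $x$ and hence occupies at least $\Omega\!\bigl(\sum_{x\in K}\log i_x\bigr)-O(n)$ bits. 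Taking expectations and comparing with $\E[\text{space}]\le \mathcal B(U,V,n)+O(n\log^{(k)} n)$ forces both
\[
  \E\Big[\tfrac1n\sum_{x\in K}\bigl(1+\log i_x\bigr)\Big]=O\!\bigl(\log^{(k)} n\bigr)
  \qquad\text{and}\qquad
  \epsilon=O\!\left(\frac{\log^{(k)} n}{\log n}\right).
\]
So the extracted scheme has expected average probe complexity $O(\log^{(k)} n)=O(\Tower(a))$ with $a=\log^* n-k+O(1)$, and, using $\log^{(k)} n\le\log\log n$ for $k\ge 1$, it supports $\epsilon=1/\log^{b} n$ with $b=\tfrac12\le(\log^* n)/4$ for large $n$; I would also invoke Lemma~\ref{lem:uniform} to assume the probe sequences are nearly uniform. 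Theorem~\ref{thm:switchinglowerbound3} then gives amortized switching cost $\Omega(\log^* n-a-b)=\Omega(k-O(1))$, which is $\Omega(k)$ when $k=\omega(1)$ and trivial when $k=O(1)$. Since $\overline T=\Omega(\text{switching cost})$ by the first step, this yields $\overline T=\Omega(k)$, as claimed.

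\emph{The hard part} is the middle step---making rigorous that $O(\log^{(k)} n)$ wasted bits per key really forces \emph{both} average probe complexity $O(\log^{(k)} n)$ \emph{and} empty-slot fraction $O(\log^{(k)} n/\log n)$. This is exactly where the hypotheses ``stored in an array'' and ``quotiented $(1+\Theta(1))\log n$-bit elements'' are needed: they pin the slot width, so that $\epsilon m$ empty slots must cost $\Omega(\epsilon n\log n)$ wasted bits, and they let me peel $\mathcal B(U,V,n)$ off of the array so that the whole $O(n\log^{(k)} n)$ budget is charged against the router's encoding of the probe indices. Two observations keep this from being more delicate: the conclusion is only needed up to constant factors, so the $O(n)$ additive slacks and the $\Theta(\cdot)$ inside the definition of probe complexity are irrelevant; and I never need a lower bound on $\log^{(k)} n$, since the corollary is uninteresting once $\log^{(k)} n=O(1)$. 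The remaining bookkeeping---rewriting $O(\log^{(k)} n)$ as $O(\Tower(a))$ for a suitable $a=\log^* n-k+O(1)$ and checking $b\le(\log^* n)/4$---is routine.
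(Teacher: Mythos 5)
Your proposal is correct and follows essentially the same route as the paper: it extracts the underlying balls-to-slots scheme, argues that the wasted-bits budget forces both expected average probe complexity $O(\log^{(k)} n)$ (so $a \approx \log^* n - k$) and load factor $1 - O(\log^{(k)} n/\log n)$ (so $b$ is a small constant), and then applies Theorem~\ref{thm:switchinglowerbound3}, with switching cost lower-bounding insertion/deletion time. The additional detail you supply in the middle step, and the appeal to Lemma~\ref{lem:uniform} (already folded into Theorem~\ref{thm:switchinglowerbound3}), are just expansions of assertions the paper makes tersely.
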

\begin{proof}
We can assume without loss of generality that $k \ge 2$. In order for the wasted bits per key to have expected value $O(\log^{(k)} n) \le O(\log \log n)$, the load factor $1 - \epsilon$ of the array must satisfy $1 -\epsilon \ge 1 - \frac{\log \log n}{\log n}$. That is, the balls-to-slots scheme used by the hash table must support $\epsilon \le \frac{\log \log n}{\log n}$. In the language of Theorem \ref{thm:switchinglowerbound3}, this means that $b < 2$. 

The bound of  $O(\log^{(k)} n)$ wasted bits per key also implies that the (expected) average probe complexity of the balls-to-slots scheme is $O(\log^{(k)} n)$. In the language of Theorem \ref{thm:switchinglowerbound3}, this means that $a \ge (\log^* n) - k$. 

Applying Theorem \ref{thm:switchinglowerbound3}, we get that the average switching cost of the balls-to-slots scheme is at least $\Omega(\log^* n - a - b) = \Omega(k)$. Thus the average insertion/deletion time of the hash table is $\Omega(k)$. 
\end{proof}

\section{Encoding Metadata in an Augmented Open-Addressed Hash Table}\label{sec:metadata}

So far, we have computed  tight bounds on the probe complexity of any balls-to-slots scheme. If the balls-to-slots scheme used by an augmented open-addressing hash table has total probe complexity $\ell$, then the hash table must store at least $\Omega(\ell)$ bits of metadata. In this section, we present general machinery for how to implement the metadata of the hash table to use exactly $O(\ell)$ bits, while also allowing for constant-time modifications to the metadata. The key difficulties here are that $\ell$ may differ for between elements (i.e., it is nonuniform) and that $\ell$ may be, on average, very small, meaning that we cannot afford a high space overhead per element.

To address these issues, we develop two fundamental building blocks: the first is a data structure that we call the \defn{mini-array}, which compactly stores a $\polylog n$-element dynamic array of items that are between $ 1 $ and $ O (\log n) $ bits each so that array entries can be queried and modified in constant time; the second is a data structure that we call the \defn{local query router}, which compactly stores routing information (i.e., information about where some element $ x $ can be found in the hash table) for up to $ O ((\log n)/\log\log n) $ elements at a time, while supporting queries/updates to the routing information in constant time. 

As foreshadowing, and to give some additional context, let us comment on how these building blocks will be used later. Ultimately, our approach to storing metadata in a hash table will be the following: we will hash keys to buckets of some expected size $K = \polylog n$; then, within each bucket, we will hash keys to $K$ smaller buckets of expected size $O(1)$; for each of these smaller buckets, we will use a local query router to store the metadata for the elements in that bucket; and for each of the larger buckets, we will use a mini-array to store the $ K $ local query routers for its $ K $ smaller buckets. In this section, however, our goal is simply to construct mini-arrays and local query routers.

\subsection{Preliminaries: Implementing machine-word operations with lookup tables}
Several of the data structures in this section will make use of the \defn{lookup-table technique} (sometimes also called the Method of Four Russians). This allows for us to implement potentially complicated operations on $(\log n)/2$-bit inputs in constant time.

More formally, call such a function $f(x_1,\ldots, x_j)$ \defn{lookup-table-compatible} if:  the input tuple $(x_1, \ldots, x_j)$ takes less than $(\log n) / 2$ bits; the output takes $O(\log n)$ bits; and $f$ can be evaluated in time $O(n^{1/4})$. 

If $ f $ is lookup-table-compatible, then, when we initialize a hash table of size $n$, we can pre-construct a lookup table $L$ of size $\sqrt{n}$ such that $L[x_1, \ldots, x_j] = f(x_1, \ldots, x_j)$ for each of the up to $\sqrt{n}$ input tuples $(x_1, \ldots, x_j)$. The lookup table allows for us to evaluate $f$ in constant time during hash-table operations. The lookup table $L$ consumes at most $\tilde{O}(\sqrt{n})$ bits of space and can be constructed in time at most $O(n^{3/4})$. 

We can also rebuild the lookup table (in a deamortized fashion) whenever the parameter $n$ changes by more than a constant factor, so the restriction that each input tuple $(x_1, \ldots, x_j)$ takes less than $(\log n) / 2$ bits is always a function of the current $n$. 

Finally, suppose that we have a function $f$ for which the input tuple $(x_1, \ldots, x_j)$ takes $\Theta(\log n) $ bits, rather than $ (\log n)/2$ bits. We say that $f$ is \defn{lookup-table-friendly} if for some positive constant $ c $, there exist lookup-table-compatible functions $f_1, \ldots, f_c$ such that: the input tuple $(x_1, \ldots, x_j)$ can be decomposed into $(\log n) / 2$-bit (or smaller) pieces $P_1, \ldots, P_c$, and $f(x_1, \ldots, x_j)$ can be computed in constant time given $f_1(P_1), \ldots, f_c(P_c)$. To implement $f$ in constant time, we can implement each $f_i$ using the lookup-table technique. So lookup-table-friendly functions can be evaluated in constant time without loss of generality.

\subsection{Storing a mini-array of variable-size values}
Consider the following basic data-structural problem, which we call the \defn{mini-array problem}. Let $c$ be a sufficiently large positive constant, and let $K = \log^c n$. We wish to store a $ K $-element \defn{mini-array} $A[1], \ldots, A[K]$, where each element $A[i]$ has some size $s_i \in [0, O(\log n)]$ bits. We wish to support queries (i.e., tell me $A[i]$) and updates (i.e., set $A[i]$ to a new value) in constant time, and we wish to use space $O(K + \sum_i s_i)$ bits. In our setting, we will have a large collection of mini-arrays, each a part of a larger data structure whose total size is $\Omega(n)$.  So we will allow for our solution to use lookup tables that are shared among all of the mini-arrays.

How should we implement a mini-array? The problem is that the sizes $s_i$ of the elements in the array are non-uniform and change over time. So we cannot implement $ A $ as a standard array. Instead, we take inspiration from the external-memory model \cite{vitter2001external}, and we implement $ A $ as a B-tree \cite{bayer2002organization} $T$. The basic idea is that, since we can implement (most) operations on $\Theta(\log n)$-bit machine words in constant time using the lookup-table approach, we can think of machine words as representing data blocks in the external-memory model.

The tree $T$ consists of $\polylog n$ nodes, each of which is $\Theta(\log n)$ bits (the only exception is the root node, which may contain fewer bits). Because the tree consists of only $\polylog n$ nodes, pointers within the tree need only be $\Theta(\log \log n)$ bits each.\footnote{For the applications in this paper, the amount of memory needed to implement $T$ will always be known (up to constant factors) up front, so we can preallocate the memory in a single contiguous array. Even if the size of $T$ is not known up front, however, it is still possible to implement pointers within the tree using $\Theta(\log \log n)$ bits per pointer. Indeed, we can assign the nodes distinct $\Theta(\log \log n)$-bit identifiers, and then we can maintain a dynamic fusion tree \cite{patrascu2014dynamic} mapping identifiers to true $\Theta(\log n)$-bit pointers---the fusion tree allows us to perform address translation in order to go from an identifier to the corresponding actual node. Note that the fusion tree introduces only a constant-factor space overhead overall, and introduces only on additive constant time overhead for each operation; so we can feel free to ignore the fusion tree, and treat pointers as each using $\Theta(\log \log n)$ bits.} 

Each internal node of $ T $ stores $\Theta(\log n / \log \log n)$ pointers to children (although, again, the root may contain fewer), and for each child the node stores two pivots $p_1, p_2 \in [K]$ indicating the range of indices that the child covers. Each leaf of $T$ stores $\Theta(\log n)$ bits of array entries (i.e., $A[i], \ldots, A[j]$ for some $i, j$ such that $\sum_{\ell = i}^j (s_\ell + 1) = \Theta(\log n)$). We call these the \defn{array bits}. Each leaf also stores a $\Theta (\log n) $-bit bitmap indicating where each $A[\ell]$ begins within the array bits.

Nodes are merged and split as in a standard B-tree: there is some positive constant $ d $ such that, whenever a node exceeds $d \log n$ bits, the node is split into two nodes, and whenever a node's size falls below $d (\log n) / 2$ bits, the node is merged with one of its neighbors (and then the new merged node may also need to be split). The only way that the height of the tree can increase is if the root splits into two nodes $a$ and $b$ (in which case a new root is created with $a$ and $b$ as children), and the only way that the height of the tree can decrease is if the root has only a single child, in which case the root is eliminated. Every node except the root has the property that it  always uses $\Theta(\log n)$ bits, but the root may be smaller (since it has no neighbors that it can merge with). Since the tree has fanout $\Theta((\log n) / \log \log n)$ (for all internal nodes except for the root), and since the tree consists of $O(K) \le \polylog n$ nodes, the depth is $O(1)$.

Using the lookup-table approach, we can implement both queries and updates in constant time. In particular, the tasks of navigating down the tree, finding where a given $A[i]$ resides in some leaf, modifying some $A_i$ in some leaf, and modifying internal nodes are all directly implementable using lookup-table-friendly functions. 

This concludes the description of how to implement a mini-array. Each operation is deterministically constant time. And, up to constant factors, the space-usage of the tree is dominated by the leaves, which in aggregate use $O(K + \sum_i s_i)$ space, as desired. The lookup tables used to implement the mini-array take a total of $\tilde{O}(\sqrt{n})$ space, but since these lookup tables can be shared across all instances of mini-arrays, that space is negligible. 

\subsection{Storing routing metadata for collections of $O((\log n) / \log \log n)$ elements}

We now describe a second data-structural problem, which we call the \defn{query-router problem}. To understand the query-router problem, it is helpful to understand how we will use local query routers in our hash tables. We will hash $\Theta(n)$ keys to $\Theta(n)$ different local query routers, and each local query router will be responsible for storing the probe-indices corresponding to those keys---that is, if a local query router stores a key $x$ that resides in slots $h_i(x)$ of the hash table, a query searching for key $x$ must be able to recover the value $i$ from the local query router. The way in which the local query router is used results in several interesting properties that we will make exploit in its construction: with high probability in $n$, each local query router will be storing information for at most $O(\log n / \log\log n)$ keys; additionally, if a local query router wishes to access one of the keys $x$ for which it is storing information, it can do so in constant time (without having to actually store $x$). 

With these properties in mind, we now formally define the query-routing problem: Consider a set $S$ of distinct keys, and a function $f:S \rightarrow \mathbb{N}$ that maps keys to distinct values. We wish to support modifications to $S$ and $f$ (i.e., delete $s$ from $S$, or insert $s$ into $S$ with $f(s) := u$) and $f$-evaluation queries (i.e., what is $f(s)$ for some specific $s \in S$?) in constant time (with high probability in $ n $). We are guaranteed that $|S|$ never exceeds $O((\log n) / \log \log n)$ and that $f(s)$  always takes $O(\log \log n)$ bits. Setting $r = |S| + \sum_{s \in S} \log f(s)$ to be the sum of the sizes of the $f(s)$'s, we wish to have a data structure of \emph{expected size} $O(r)$ bits, at any given moment, and of \emph{worst-case size} $O(\log n)$ bits, at any given moment with high probability in $ n $. Our data structure also has access to a constant-time oracle for the function $g = f^{-1}$. That is, if $f(s) = u$ for some $s$, then the oracle function satisfies $g(u) = s$. (If $f(s) \neq u$ for all $s \in S$, then $g(u)$ is not defined, and could return an arbitrary value.) The oracle makes it so that our data structure does not have to store keys---it can recover each key based on the corresponding $f$-value.

We now describe a data structure, which we call a \defn{local query router}, that solves the above problem. Although the precise specifications are slightly different, the design for the local query router is very similar to the querying mechanism used in past work on adaptive filters \cite{bender2018bloom} (as well as by other subsequent work on succinct filters \cite{liu2020succinct}). 

Before we continue, let us make some simplifications to the requirements of a local query router, and argue that these simplifications are without loss of generality. First, we may assume that the local query router has a lifespan of only $O(\log n / \log \log n)$ operations, since we can rebuild the local query router from scratch once every $O(\log n / \log \log n)$ operations (and in a deamortized fashion). Second, it suffices to construct a local query router with failure probability $1 - 1 / n^\epsilon$ on any given insertion/deletion, since we can amplify this failure probability to $1 / \poly(n)$ by storing $O(1)$ independent local query routers, and keeping track of which one(s) haven't yet failed---in any sequence of $O(\log n / \log \log n)$ operations, the probability of all $O(1)$ local query routers failing is $1 - 1 / \poly(n)$. We call a local query router that makes the above simplifications a \defn{simplified local query router}.

To construct a simplified local query router, we will need the following basic lemma about binary tries.
\begin{lemma}
Let $k = O(\log n / \log \log n)$ and let $r_1, \ldots, r_k$ be random binary strings. Let $T$ be the binary trie storing the smallest unique prefix of each $r_i$ (i.e., if the smallest unique prefix of $r_1$ is $01101$, then there is a path corresponding to $01101$ in the trie). Then $ T $ has expected size $ O (k) $, and for any constant $c > 1$ there exists a constant $ \epsilon > 0$ such that with probability $1 - 1/n^\epsilon$, $T$ has size that most $\log n / c$ bits. 
\label{lem:binarytrie}
\end{lemma}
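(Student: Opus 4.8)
The plan is to bound the number of nodes of $T$, since a binary trie with $N$ nodes can be stored in $\Theta(N)$ bits (two bits per node recording which children are present, one bit per node marking whether it is a terminal $p_i$), and $\Omega(N)$ bits are clearly necessary. Write the node count as $k + I$, where the $k$ leaves are the terminals $p_1,\dots,p_k$ and $I=\sum_{d\ge 1} N_d$ counts the internal nodes, with $N_d$ the number of length-$d$ binary strings that occur as a prefix of at least two of the $r_i$. For the expectation bound, I would use that $N_d\le 2^d$ deterministically while $\E[N_d]\le \binom{k}{2}2^{-d}$ by linearity over pairs, so $\E[I]\le\sum_{d\ge1}\min\!\bigl(2^d,\binom{k}{2}2^{-d}\bigr)=O(k)$ — the two geometric tails on either side of the crossover $d\approx\log k$ each sum to $O(k)$. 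Hence $\E[|T|]=O(k)$, which is the first claim.

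For the high-probability bound, fix a large constant $C=C(c)$ and set $D_0=\lceil\log k\rceil+C$. All nodes at depth $\le D_0$ number at most $2^{D_0+1}=O(k)=o(\log n)$, deterministically, so the task reduces to controlling the internal nodes at depth $>D_0$, which is $\sum_{d>D_0}N_d\le\sum_{i<j}(L_{ij}-D_0)^+$ where $L_{ij}$ is the longest common prefix length of $r_i,r_j$. I would control this by conditioning on the length-$D_0$ prefixes of the $r_i$: the strings then partition into groups by shared prefix, the suffixes are fresh and independent across groups, and the part of $T$ below a depth-$D_0$ node is exactly the internal part of the trie on that group's suffixes (singleton groups contributing nothing). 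The crucial sparsity estimate is that if $M$ denotes the total number of strings in groups of size $\ge 2$, then $\Pr[M\ge m]\le 2^{-m/2+1}$ for every $m\ge 2ek\,2^{-C}$: each group of size $s$ is a clique of colliding-at-depth-$\ge D_0$ pairs contributing $s-1$ edges to a spanning forest, so the colliding-pairs graph contains a forest with at least $M-(\#\text{groups})\ge M/2$ edges; a fixed forest with $f$ edges is realized (all its edges agreeing on the first $D_0$ bits) with probability $2^{-fD_0}$, and there are at most $(ek^2/(2f))^f$ forests with $f$ edges, so a union bound over $f\ge m/2$ gives $\sum_{f\ge m/2}(ek^2/(2f))^f 2^{-f\lceil\log k\rceil}2^{-fC}\le\sum_{f\ge m/2}\bigl(ek/(2f\,2^C)\bigr)^f\le 2^{-m/2+1}$.

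Applying this estimate at $D_0$ shows $M=O(\epsilon\log n)$ except with probability $n^{-\Omega(\epsilon)}$; recursing on the (independent) nontrivial subtries, which now involve only $M$ strings and sit below depth $D_0$, and re-splitting at a depth that is $\lceil\log M\rceil+C$ deeper, the surviving clumped populations shrink by a factor $2^{\Omega(C)}$ per scale, so that the deep-node count is dominated by the first scale. Choosing $C$ large relative to $c$ and $\epsilon=\epsilon(c)>0$ small enough, this forces the total number of nodes at depth $>D_0$ to be at most $\log n/(2c)$ except with probability $n^{-\epsilon}$; adding the $o(\log n)$ shallow part gives $|T|\le\log n/c$ with probability $1-n^{-\epsilon}$.

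The main obstacle is exactly this concentration of the deep part. The size of an individual subtrie is \emph{not} bounded by any function of its population (a group of strings that happens to agree on a long run contributes a long path of trie nodes), so the crude bound ``(number of colliding strings)$\times$(maximum collision depth)'' loses a polylogarithmic factor and must be replaced by the forest-union-bound combined with the multi-scale recursion above. Making the bookkeeping balance — so that the per-scale failure probabilities sum to $n^{-\epsilon}$ while the per-scale node counts sum to $o(\log n)$, in fact below $\log n/c$ — is the delicate part, and it is what forces the use of the sharp $O(k)$ expectation (via $\sum_d N_d$) rather than the easier but lossier bound $\#\text{nodes}\le 1+\sum_i|p_i|$.
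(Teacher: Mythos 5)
Your expectation bound is fine: $\E[I]\le\sum_{d\ge 1}\min\bigl(2^d,\binom{k}{2}2^{-d}\bigr)=O(k)$ is correct, and your first-scale estimate (spanning forest of the collision graph, $2^{-fD_0}$ per forest with $f$ edges, union bound over at most $(ek^2/(2f))^f$ forests) correctly shows that the number $M$ of strings in nontrivial groups at depth $D_0=\lceil\log k\rceil+C$ is $O(\epsilon\log n)$ except with probability $n^{-\Omega(\epsilon)}$. But the high-probability part has a genuine gap, and it is exactly at the point you flag as ``the delicate part'': your multi-scale recursion only ever bounds the \emph{populations} $M_0,M_1,M_2,\dots$ of strings still clumped at the successive splitting depths, and populations do not bound the number of trie nodes inside a band. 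By your own observation, a group can agree on a long run and contribute up to (band width) nodes while containing only two strings; in particular a size-$2$ group that separates just before the next splitting depth contributes nearly $\log M_i+C$ nodes to band $i$ yet contributes nothing to $M_{i+1}$, so ``per-scale node counts'' cannot be charged to later populations. The only bound your stated estimates actually give for band $i$ is (population)$\times$(band width), and already at scale $1$ this is $\Theta(\epsilon\log n\cdot\log\log n)\gg\log n/c$; you cannot shrink $\epsilon$ below a constant, because the failure probability of the first-scale estimate is $2^{-\Theta(M_1)}$ and the lemma requires a constant power of $n$. So the claim that ``the per-scale node counts sum to below $\log n/c$'' is asserted, not proved, and the argument as written does not close. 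A genuine fix along your lines would have to control the accumulated collision-depth \emph{excess} directly, e.g.\ augment the forest union bound so that each edge carries a prescribed excess $j_e\ge 0$ beyond $D_0$, pay $2^{-f D_0-\sum_e j_e}$ per labelled forest, and count compositions of $J=\sum_e j_e$ carefully (the $\binom{J+f-1}{f-1}$ factor nearly cancels the $2^{-J}$, so the bookkeeping must be done with the ratio $J/f$, not crudely) --- this bounds $\sum_{d>D_0}N_d$ itself, with no recursion needed.

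The paper avoids all of this with a much shorter argument: build $T$ by inserting $r_1,\dots,r_k$ one at a time; each insertion either diverges from the existing trie (adding $O(1)$ nodes) or collides with an existing terminal and extends it by a path of length $X$ with $\Pr[X\ge i]=2^{-i}$, where the $X$'s are independent across insertions because only previously unexamined bits are involved. Hence $|T|$ is dominated by $O(k)$ plus a sum of $k$ independent geometric variables, and a single Chernoff bound for geometric sums gives both $\E[|T|]=O(k)$ and $\Pr[|T|>\log n/c]\le n^{-\epsilon}$ (since $k=O(\log n/\log\log n)$ makes the combinatorial factor $\binom{\log n/c}{k}=2^{o(\log n)}$). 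You may either switch to that incremental argument or carry out the depth-excess union bound sketched above; as it stands, your proof establishes the expectation claim but not the high-probability claim.
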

\begin{proof}
Imagine constructing $ T $ by inserting each of $r_1, \ldots, r_k$ into the trie one after another. Inserting a new element into $T$ corresponds to performing a random walk down the tree $T$ to some leaf $\ell$, and then appending a path of some length $X$ below that leaf, and then adding two new leaves at the end of that path. Note that the random variable $X$ is independent between insertions and satisfies
$$\Pr[X \ge i] = 1/2^i.$$
Thus, once all of the $k$ insertions are performed, the size of $ T $ is simply a sum of independent geometric random variables. By a Chernoff bound for sums of independent geometric random variables, the lemma follows.
\end{proof}

We can now construct a simplified local query router. We hash of the elements of $S$ to random binary strings, and we place those binary strings in a trie $T$. For each leaf of $T$ corresponding to some $s \in S$, we also store the value $f(s)$ at that leaf. 

In more detail, we can encode the tree, along with the $f$-value for each of the leaves as follows. Perform a depth-first traversal through the tree, and write down the sequence of moves that the traversal performs (i.e., moves of the form "go to left child", "go to right child", "go up"); call this portion of the encoding $E_1$, and observe that $|E_1|$ is $\Theta(|T|)$ bits. Then write down the $f$-values for the leaves in the same order that they appear in the depth-first traversal of the tree (it is straightforward to encode the value in such a way that it can easily be determined where one value begins and another finishes); call this portion of the encoding $E_2$, and observe that $|E_2|$ is $\Theta(|S| + \sum_{s \in S} \log f(s))$ bits. 

The total number of bits in the encoding is $|E_1| + |E_2| = O(|T| +  |S| \sum_{s \in S} \log f(s))$ which, by Lemma \ref{lem:binarytrie}, has expected value $O(|S| +  \sum_{s \in S} \log f(s))$. Lemma \ref{lem:binarytrie} further tells us that, for any positive constant $c$, there exists a positive constant $\epsilon$ such that $|E_1| \le (\log n) / c$ with probability $1- 1/n^{\epsilon}$. Since, by assumption, we have that $|E_2| = O(\log n)$, it follows that the total encoding takes $O(\log n)$ bits. Finally, since $|E_1| \le (\log n) / c$, and since $E_2$ can be broken into $O(1)$ lists of $f$-values that are $(\log n) / c$ bits each, we can implement insertions/deletions/queries on the encoding in constant time using lookup-table-friendly functions. Thus we have constructed a constant-time simplified local query router, and since the reduction from a full local query router to a simplified local query router is without loss of generality, we have also completed the construction and analysis for the full local query router.

\section{An Optimal Augmented Open-Addressed Hash Table}

Using the techniques developed in the previous sections, we can now construct a dynamically-resized augmented open-addressed hash table that stores $\Theta(\log n)$-bit key-value pairs, that supports insertions/deletions in time $O(k)$, that supports queries in time $O(1)$, and that achieves $O(\log^{(k)} n)$ wasted bits per key. (The running-time and space guarantees are with high probability in $n$). By Corollary \ref{cor:augmentlower}, our data structure achieves the best possible tradeoff curve between time and space that any augmented open-addressed hash table can achieve.

We break the section into three parts:
\begin{itemize}
    \item Subsection \ref{sec:fixed} constructs a fixed-capacity hash table that uses $nw + O(n\log^{(k)} n)$ bits of space to store $n$ $w$-bit keys-value pairs. 
    \item Subsection \ref{sec:variable} shows how to make the hash table dynamically-resizable.
    \item And Subsection \ref{sec:quotient} reduces the space consumption to be within $O(n\log^{(k)} n)$ bits of the information-theoretic optimum.
\end{itemize}

\subsection{Turning the $k$-\kicktree into a hash table}\label{sec:fixed}

In this section, we construct a fixed-capacity hash table that uses $nw + O(n\log^{(k)} n)$ bits of space to store $n$ $w$-bit keys-value pairs. 

\paragraph{The layout.}
Let $K = \polylog n$ be a parameter.  We hash keys to $(1 + 1 / K^{1/3}) (n/K) $ bins, each of which we refer to as a \defn{\cubby}. With high probability in $ n $, each \cubby receives at most $K$ keys at any particular time. 

Each \cubby maintains a \defn{storage array} capable of storing up to $K$ keys/values. Keys are assigned a position in the storage array using the $k$-\kicktree from Theorem \ref{thm:probeupper} for some parameter $k$. (We will discuss how to do this time-efficiently later.)  The parameter $ k $ will determine the tradeoff between time and space efficiency in our data structure. 

Recall that the $k$-\kicktree associates each key $ x $ with a random sequence of hash functions $g_0(x), \ldots, g_k(x)$, where each $g_{i + 1}(x)$ is a child bin of $g_i(x)$. Of course, $g_k(x)$ determines all of $g_1, \ldots, g_{k - 1}(x)$, and one way to pick $g_k(x)$ is to select a random $g(x) \in [K]$, and set $g_k(x)$ to be the  depth-$k$ bin that contains position $g(x)$. We will refer to $ g (x) $ as $ x $'s \defn{preferred slot} (within the \cubby). 

For each \cubby, and for each $ i\in [K] $, we maintain a local query router that stores metadata for the keys who have preferred slot $g(x) = i$. For each such key $x$, the local query router stores the index $j$ such that $x$ is in position $h_j(x)$ of the \cubby---if $x$ is stored at depth $i$ by the $k$-\kicktree, then we can store $j$ using $O(\log^{(i + 1)} K) = O(\log^{(i + 2)} n)$ bits. As a slight abuse of notation, to simplify discussion throughout the rest of the paper, we shall redefine the probe complexity of $x$ to be exactly $\Theta(\log^{(i + 1)} K)$, even though technically the true probe complexity may be smaller. 

We store the $K$ local query routers in a mini-array $A$. The result is that any key $ x $ in the data structure can be recovered by (a) hashing to the appropriate \cubby; (b) finding the $g(x)$-th local query router in the mini-array; and (c) using that local query router to determine which slot of the storage array the key resides in. Note that the array $A$ is local to each individual \cubby.

\paragraph{Implementing insertions/deletions/queries in constant time.}
We have already seen how to implement queries in constant time using the mini-array $ A $ of local query routers. Deletions can also be implemented in constant time by simply removing the key/value pair.

Insertions are slightly more tricky, however. Recall that the balls-to-slots scheme has $ k +1 $ classes of bin sizes, where the sizes are denoted $ s_0,\ldots, s_k$. Note that, in this setting, $ s_0 = K = \polylog n$ and $s_i = \poly(\log^{(i + 1)} n)$ for each $i \in [k]$. 

Let us start by ignoring depth $ 0 $ and discuss how to implement depths $ 1,\ldots, k $. We maintain a second mini-array $M$ storing metadata for each of the $ K/s_1 $ depth-1 bins. For each such bin, the metadata that we store is the information of which slots are free in that bin, and for each slot that is not free in that bin, what the depth is for the element in that slot. In aggregate, this information comprises $\poly (\log\log n) $ bits. Using this metadata, along with the mini-array $ A $, we can use lookup-table-friendly functions to implement the portions of an insertion that occur in depths $ 1,\ldots, k $ in time $ O (k) $ (i.e., we can perform the entire insertion, except possibly the final step in which we must find a free slot to place some depth-0 element in). 

The only task that remains is to locate a free slot in depth-0 (i.e., in the entire \cubby). For this, we can simply maintain a $\log_{\log n} K = O(1)$-depth tree with uniform fanout $\log n $, in which each internal node stores a $\log n $-bit bitmap indicating which of its children contain at least one free slot, and each leaf stores a $\log n $-bit bitmap indicating which of the $\log n $ slots corresponding to that leaf are free. We refer to this as the \defn{free-slot tree}. The free-slot tree supports constant-time modifications and queries (where a query finds a free slot). 

\paragraph{Proving correctness. }
We now establish the correctness of our data structure. 

\begin{lemma}
The above data structure correctly implements insertions/deletions/queries, ensures that insertions/deletions take time $O(k)$ with high probability in $n$, and ensures that queries take time $O(1)$ deterministically.
\end{lemma}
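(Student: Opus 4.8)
The plan is to check correctness and the two running-time bounds separately, leaning on the guarantees already established for the $k$-\kicktree (Theorem~\ref{thm:probeupper}) and for the mini-array, the local query router, and the free-slot tree (Section~\ref{sec:metadata}).

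For correctness, I would first record the invariant maintained after every operation: (i) every present key $x$ occupies some slot $h_j(x)$ of the storage array of its \cubby, where $j$ is consistent with the depth at which the $k$-\kicktree currently places $x$; (ii) the local query router indexed by the preferred slot $g(x)$ of $x$, sitting inside the mini-array $A$ of that \cubby, stores the value $j$ for $x$; and (iii) the inverse oracle $g = f^{-1}$ that the local query router requires is implemented by noting that the pair $(g(x), j)$ pins down a single slot of the storage array (namely $h_j(x)$, since $g(x)$ fixes $g_0(x), \dots, g_k(x)$ and $j$ then fixes the depth and the offset within that bin) and reading the key that is stored there. Part (i) holds because the $k$-\kicktree only ever assigns $x$ to slots of $x$'s own probe sequence; parts (ii) and (iii) hold by construction as long as $A$ is updated whenever a key is moved. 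Given the invariant, a query for $x$ hashes to the \cubby, reads the $g(x)$-th local query router out of $A$, recovers $j$, computes $h_j(x)$, and compares the key stored there against $x$; a deletion removes the key/value pair, deletes $x$ from its local query router, and patches the free-slot tree and the depth-$\ge 1$ metadata array $M$; and an insertion runs the $k$-\kicktree insertion procedure, using $M$ (together with $A$) to identify the largest non-saturated prefix $g_0(x), \dots, g_i(x)$ and to carry out the chain of at most $k$ displacements confined to depths $1, \dots, k$, then using the free-slot tree to locate a free slot of the \cubby for the lowest-depth displaced element, and finally updating the at most $k+1$ affected local query routers. Because every key that is moved keeps its depth, its probe complexity --- and hence the number of bits its local query router stores for it --- is unchanged, so each such update is a single well-defined local query router operation.

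For the time bounds, a query touches only $O(1)$ substructures --- one hash evaluation, one mini-array lookup, one local query router lookup, and one storage-array read --- each deterministically constant time by Section~\ref{sec:metadata}, so queries run in $O(1)$ time deterministically. For insertions and deletions, I would condition on the high-probability events that every \cubby holds at most $K$ keys at all relevant times, that the $k$-\kicktree insertion terminates with switching cost at most $k+1$ (Theorem~\ref{thm:probeupper}), and that every local query router the operation touches currently has size $O(\log n)$, so that its lookup-table-friendly operations are valid; under these events an insertion performs $O(k)$ lookup-table-friendly operations on $M$ and $A$ (one per depth), $O(1)$ operations on the free-slot tree, and $O(k)$ local query router updates, for $O(k)$ time in total, while a deletion performs $O(1)$ such operations. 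When one of the conditioned-on events fails, the operation enters the rebuild-from-scratch failure mode described in Section~\ref{sec:prelim}, which violates the time and space bounds but not correctness.

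The hardest part, I expect, will be nailing down query-router correctness: checking that $(g(x), f(x))$ really does identify a unique storage slot, that the $f^{-1}$ oracle is well-defined and constant time, and --- crucially --- that distinct keys routed to the same local query router (same \cubby, same preferred slot) are never confused with one another, which is exactly where the fingerprinting inside the local query router and the oracle's ability to read back the actual stored key have to work together. A secondary subtlety is confirming that the $\poly(\log\log n)$-bit per-bin metadata stored in $M$, together with $A$, genuinely suffices to run the entire $k$-\kicktree insertion within depths $1, \dots, k$ (detecting saturated bins, following the displacement chain, and choosing displacement targets), so that the only step that needs global per-\cubby information is the final search for a free depth-$0$ slot, which the free-slot tree handles.
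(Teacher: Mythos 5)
Your overall strategy is the same as the paper's: check that each component (\cubbys, mini-arrays, local query routers, free-slot tree) is operating within its specification with high probability, and then let the component guarantees of Section~\ref{sec:metadata} deliver correctness, deterministic $O(1)$ queries, and $O(k)$-time insertions/deletions, falling back on a rebuild-from-scratch in the low-probability failure case. The operational details you spell out (the invariant, implementing the $f^{-1}$ oracle by reading the key in the slot determined by $(g(x), j)$, confining the displacement chain to depths $1,\dots,k$ via $M$ and $A$ plus one free-slot-tree lookup) are consistent with the construction.

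The gap is that you never carry out the probabilistic verification that is the actual content of the paper's proof. You propose to ``condition on the high-probability event'' that every local query router you touch has size $O(\log n)$, but that is not a primitive event you may simply assume: the query router of Section~\ref{sec:metadata} only guarantees $O(\log n)$-bit size and constant-time operations under two preconditions, namely that it serves at most $O(\log n/\log\log n)$ keys and that each stored $f$-value is $O(\log\log n)$ bits. The paper's proof establishes both: the number of keys hashing to a given \cubby with a given preferred slot $g(x)$ has expectation $1$ and is $O(\log n/\log\log n)$ with high probability by a Chernoff bound, and each routing index takes $O(\log K)=O(\log\log n)$ bits because the probe complexity inside a \cubby of size $K = \polylog n$ is $O(\log K)$. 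Without these checks (and the routine Chernoff bound that each \cubby receives at most $K$ keys, which you also assert without argument), your conditioning is circular. By contrast, the items you flag as the ``hardest part''---distinct keys being confused within one router, and the well-definedness of the $f^{-1}$ oracle---are already discharged by the query-router specification of Section~\ref{sec:metadata} together with the fact that, in this fixed-capacity construction, the storage array holds the full key at the recovered slot, so no additional work is needed there; the work you still owe is exactly the precondition verification above.
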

\begin{proof}
By a Chernoff bound, each \cubby has at most $ K $ keys at any specific time, so each insertion has a high probability of hashing to a \cubby that has room for it. This means that the $k$-\kicktree can operate correctly without overflowing.

We next verify that each of the local query routers operates correctly: each local query router requires that it store metadata for at most $ O (\log/\log\log n) $ keys, and that each key has at most $ O (\log\log n) $ bits of metadata. The first requirement follows by a Chernoff bound on the number of keys that hash to a given \cubby and have a given value of $ g (x) $. (The number of such keys has expected value $ 1 $, and is at most $ O (\log n/\log\log n) $ with high probability in $ n $.) The second requirement follows from the fact that each key has probe complexity $ O (\log K) = O (\log\log n) $ bits in the balls-to-slots scheme.

Next we verify that each mini-array operates correctly: each mini-array requires that its entries are each $ O (\log n) $ bits. This is immediate for $ M $, and for $ A $ it follows from the fact that each local query router takes $ O (\log n) $ bits (with high probability).

Since the requirements for correctness have been met for each mini-array and query-router, all of them will support constant-time operations with high probability in $n$. It follows that insertions and deletions are correct and take $O(k)$ time (with high probability)\footnote{In the low-probability event that an insertion fails to be implementable, either because a cubby overflows, or because a query-router overflows, we simply rebuild the entire data structure from scratch.}, and that queries are correct and take $O(1)$ time deterministically.
 \end{proof}

Finally, we analyze the space consumed by our data structure. We shall assume that the number $w$ of bits taken by each key/value pair satisfies $w = \Theta(\log n)$.

\begin{lemma}
With high probability in $n$, the size of the data structure is $nw + O(n \log^{(k)} n)$ bits.
\label{lem:space}
\end{lemma}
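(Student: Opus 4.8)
The plan is to bound the space component by component, reducing everything except one term to deterministic or routine estimates and then isolating the high-probability argument. Per cubby, the data structure stores (i) a storage array of $K$ slots of $w$ bits, (ii) the mini-array $A$ of the cubby's $K$ local query routers, (iii) the auxiliary mini-array $M$ with metadata for the $K/s_1$ depth-$1$ bins, and (iv) a free-slot tree; globally there are also (v) the shared lookup tables and the bookkeeping needed to lay the structure out. I expect the only non-trivial term to be the aggregate size of the local query routers, since that is where the $\Theta(\log^{(k)} n)$ bits-per-key of metadata live and where a worst-case-to-high-probability gap appears; everything else will come out to $nw + o(n)$ or $O(n)$, hence absorbed into $nw + O(n\log^{(k)}n)$ (recall $\log^{(k)} n \ge 1$ by convention).

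First I would handle (i), (iii), (iv), (v). There are $(1+1/K^{1/3})(n/K)$ cubbies, so the storage arrays use $(1+1/K^{1/3})nw = nw + O(nw/K^{1/3}) = nw + o(n)$ bits, using $w = \Theta(\log n)$ and $K = \polylog n$ with a sufficiently large exponent. The free-slot tree of a cubby has $O(K/\log n)$ nodes of $O(\log n)$ bits each, i.e.\ $O(K)$ bits; and, choosing the bin sizes $s_i$ to be sufficiently large powers of iterated logarithms, the metadata for one depth-$1$ bin fits in $O(s_1)$ bits, so $M$ uses $O((K/s_1)\cdot s_1) = O(K)$ bits. Finally, the mini-array data structure stores $A$ in $O(K + \sum_{i\in[K]}|\mathrm{LQR}_i|)$ bits, where $|\mathrm{LQR}_i|$ denotes the size of the $i$-th local query router. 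Summing the $O(K)$-per-cubby contributions over the $\Theta(n/K)$ cubbies gives $O(n)$ bits; the shared lookup tables contribute $\tilde{O}(\sqrt{n})$ bits, and the slab-based layout contributes only $o(n)$ bits of bookkeeping. So it remains to prove that $\sum_{\text{cubbies}}\sum_i |\mathrm{LQR}_i| = O(n\log^{(k)}n)$ with high probability in $n$.

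For this I would decompose each local query router's encoding into its trie part $E_1$ (of size $\Theta(|T_i|)$) and its value part $E_2$ (of size $\Theta(|S_i| + \sum_{s\in S_i}\log f(s))$), exactly as in its construction. The $|S_i|$ terms sum deterministically to $n$. Within a single cubby, $\sum_i |T_i|$ equals the number of keys in the cubby plus a sum of independent geometric-type extension lengths, so a Chernoff bound for sums of independent geometric random variables (a cubby holds $\Theta(K)=\polylog n$ keys) gives $\sum_i|T_i| = O(K)$ with probability $1-1/\poly(n)$, and a union bound over cubbies gives $O(n)$ in total. The remaining, and main, quantity is $\sum_{\text{keys }x}(\text{probe complexity of }x)$: its expectation is $O(n\log^{(k)}n)$ by Theorem~\ref{thm:probeupper} applied within each cubby with $K$ in place of $n$ (choosing the \kicktree parameter, which is $\Theta(k)$ and pins down the insertion time, so that the average probe complexity is $O(\log^{(k)}n)$, and using $\sum_c|S_c| = n$). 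To upgrade this to high probability, observe that, conditioned on the cubby loads, the per-cubby total probe complexities are independent, each deterministically $O(K\log K)=\polylog n$ and with expectation $O(K\log^{(k)}n)$; a Chernoff/Bernstein bound over the $\Theta(n/K)=n/\polylog n$ cubbies then concentrates their sum around its mean with an $n^{-\omega(1)}$ tail. (Alternatively, one can run the Theorem~\ref{thm:probeupper2} variant of the \kicktree inside each cubby, which already carries a per-cubby $1-1/2^{\polylog n}$ guarantee, and union-bound.) Adding up, $\sum_{\text{cubbies}}\sum_i|\mathrm{LQR}_i| = O(n) + O(n\log^{(k)}n) = O(n\log^{(k)}n)$ with high probability, and a final union bound over the constantly many high-probability events in the analysis finishes the proof.

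The main obstacle is precisely this last concentration step. Individually, a local query router is only guaranteed to be $O(\log n)$ bits, and only \emph{with high probability}, so a naive union bound over the $\Theta(n)$ of them yields the useless bound $O(n\log n)$; the argument must instead exploit independence across cubbies (after conditioning on the loads) together with the convention that an overflow of any cubby or query router triggers a rebuild, so rare bad events cost only time and space, never correctness. Everything else is routine accounting.
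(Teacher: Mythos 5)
Your proposal is correct and follows essentially the same route as the paper: exact accounting of the storage arrays giving $nw + o(n)$, an expectation bound of $O(K\log^{(k)}n)$ per cubby for the metadata via Theorem~\ref{thm:probeupper}, and an upgrade to high probability by exploiting independence of the per-cubby contributions together with the deterministic $\polylog n$ cap guaranteed by the rebuild-on-overflow convention, followed by a Chernoff bound across the $\Theta(n/K)$ cubbies. The paper simply treats each cubby's entire mini-array footprint as a single bounded random variable rather than splitting the query routers into trie and value parts, but the argument is the same.
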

\begin{proof}
We start by bounding the space consumed by storage arrays. There are $ (1+1/K ^ { 1/3 }) n/K $ \cubbys each of which has a storage array of size $ Kw$ bits. This reduces to 
$$(1+1/K ^ { 1/3 }) nw = nw + o(n)$$
bits.

Next we bound the expected space consumed by the mini-arrays $A$ and $M$ in a \cubby. If a \cubby has $j \le K$ keys and their probe complexities are $ a_1, \ldots, a_{j}$, then the expected space consumed by $A$ and $M$ is $ O (K + \sum_{i } a_i)$ bits (this is expected rather than worst-case because the local query routers may add more bits in the worst case). On the other hand, by Theorem \ref{thm:probeupper}, 
$$\E\left[K +\sum_{i } a_i\right] = O(K \log^{(k + 2)} n) = O(K \log^{(k)} n).$$
Thus the expected amount of space used by the mini-arrays in a given \cubby is $O(K \log^{(k)} n) $. Summing over the \cubbys, the total amount of space used by mini-arrays in the data structure is 
$$O(n \log^{(k)} n)$$
bits in expectation.

We can turn this into a high-probability bound as follows. Define $r_1, \ldots, r_{(1 + 1/K^{1/3})n/K}$ so that $r_i$ is the number of bits consumed by the mini-arrays in the $i$-th \cubby. Note that each $r_i$ is deterministically at most $O(K \log n)$, since the data structure is rebuilt whenever either (1) more than $K$ elements simultaneously hash to some \cubby, or (2) $\omega(\log n)$ bits are needed for some local query router in some \cubby. Moreover, regardless of the outcomes of $\{r_j \mid j \neq i\}$, we have that $\E[r_i] = O(K \log^{(k)} n)$. Thus we can apply a Chernoff bound to deduce that $\sum_i r_i$ is tightly concentrated around its mean, so the total space used by mini-arrays is $O(n \log^{(k)} n)$ bits with high probability in $n$. \end{proof}

Putting the pieces together, we have the following theorem:
\begin{theorem}
Let $w = \Theta(\log n)$ and $k\in [\log^* n]$. One can construct a dictionary that stores up to $n$ $w$-bit key/value pairs, while supporting insertions/deletions in time $ O (k) $, supporting queries in time $O(1)$, and using total space $wn + O(n \log^{(k)} n)$ bits, with high probability in $n$. 
\end{theorem}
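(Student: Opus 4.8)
The plan is to simply assemble the three results established in this subsection, since all of the technical work has been pushed into the $k$-\kicktree analysis (Theorem~\ref{thm:probeupper}) and the mini-array/local-query-router constructions of Section~\ref{sec:metadata}. First I would fix $K = \polylog n$ (say $K = \log^c n$ for a sufficiently large constant $c$) and instantiate the layout described above: hash keys into $(1 + 1/K^{1/3})(n/K)$ \cubbys; within each \cubby run the $k$-\kicktree of Theorem~\ref{thm:probeupper} on a storage array of $K$ slots; for each of the $K$ preferred-slot classes maintain a local query router; pack those $K$ routers into the mini-array $A$; and keep the auxiliary depth-$1$-bin mini-array $M$ together with the free-slot tree. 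Nothing new needs to be built here --- these are exactly the ingredients from Sections~\ref{sec:probeupper} and~\ref{sec:metadata}.

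Second, I would invoke the correctness lemma: a Chernoff bound on \cubby occupancies shows each \cubby holds at most $K$ keys w.h.p., so the $k$-\kicktree never overflows; a Chernoff bound on the number of keys sharing a preferred slot shows each local query router stores $O(\log n/\log\log n)$ keys with $O(\log\log n)$ bits of metadata each, which is the regime local query routers support; and each mini-array entry is $O(\log n)$ bits w.h.p., so $A$ and $M$ operate correctly. Hence queries run in $O(1)$ time deterministically (navigate to the \cubby, index into $A$, read off the probe index from the local query router), deletions in $O(1)$, and an insertion touches at most one ball per depth, each move implemented in $O(1)$ time by lookup-table-friendly operations on $M$, $A$, and the free-slot tree, for a total of $O(k)$ time w.h.p.; a rare overflow triggers a full rebuild, which preserves correctness.

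Third, I would invoke Lemma~\ref{lem:space}: the storage arrays contribute $(1 + 1/K^{1/3}) n w = nw + o(n)$ bits; per \cubby the mini-arrays contribute $O\!\big(K + \sum_i a_i\big)$ bits in expectation, and Theorem~\ref{thm:probeupper} bounds $\E[\sum_i a_i] = O(K\log^{(k+2)} n) = O(K\log^{(k)} n)$; and $M$ plus the free-slot tree are $\polylog n$ bits per \cubby, hence $o(n)$ overall. Summing the $(1+o(1))n/K$ \cubbys gives $O(n\log^{(k)} n)$ expected overhead, and since each \cubby's contribution is deterministically $O(K\log n)$ (guaranteed by the rebuild mechanism) and has the stated expectation regardless of the other \cubbys' outcomes, a Chernoff bound upgrades this to a high-probability bound. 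Adding the pieces yields total space $wn + O(n\log^{(k)} n)$ bits w.h.p., as claimed.

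The step that carries the real content --- already handled by the preceding lemmas --- is the space accounting: driving the per-\cubby overhead down to $O(K\log^{(k)} n)$ requires both the kicktree's expected-probe-complexity guarantee \emph{and} the fact that the query routers and mini-arrays waste space only proportional to the \emph{total} probe complexity rather than $\Omega(\log\log n)$ per key, and converting the expectation into a high-probability statement across \cubbys relies on the deterministic worst-case cap that the rebuild mechanism furnishes. Everything else is bookkeeping over the already-established lemmas.
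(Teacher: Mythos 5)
Your proposal assembles exactly the same ingredients, in the same order, as the paper's own argument: the cubby/kicktree layout with the mini-arrays $A$, $M$ and the free-slot tree, the correctness lemma via Chernoff bounds on cubby and query-router occupancies, and the space bound of Lemma~\ref{lem:space} upgraded from expectation to high probability using the deterministic per-cubby cap supplied by the rebuild mechanism. This matches the paper's proof essentially step for step, so it is correct and there is nothing further to add.
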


The preceding theorem has several limitations that we will remove in the coming sections. The first limitation is that our hash table does not yet support dynamic resizing (i.e., it has a fixed capacity). The second limitation is that our hash table stores each key/value pair in its entirety, even though information-theoretically, only $ w -\log n + O (1) $ bits are needed per key/value pair. Each of the next two sections will remove one of these constraints.

We conclude the section by proving a simple technical lemma about \cubbys that will be useful later. The lemma says that, even though modifying a \cubby takes time $ O (k) $, we can build a \cubby from scratch in linear time $ O (K) $.
\begin{lemma}
Let $ S $ be a set of at most $ K $ key/value pairs. We can construct a \cubby storing $ S $ in time $ O (K) $ with high probability in $ n $
\label{lem:buildicecube}
\end{lemma}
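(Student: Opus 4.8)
The plan is to simulate the online $k$-\kicktree insertion procedure of Theorem~\ref{thm:probeupper} on the $|S| \le K$ key/value pairs, inserted in an arbitrary order, and then materialize the auxiliary structures of the \cubby (the local query routers, the mini-arrays $A$ and $M$, and the free-slot tree) from the resulting configuration. Since we are literally running the \kicktree, correctness and the $O(K\log^{(k)} n)$-bit space guarantee are immediate from the earlier analysis; the only thing to prove is that the whole process takes $O(K)$ time (with high probability in $n$) rather than the $O(Kk)$ that results from running the $\le K$ insertions naively. The factor of $k$ comes from the displacement chain triggered by each insertion, which may touch one ball at each of the $k+1$ depths. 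The observation that lets us remove it is that, since $|S| \le K = s_0$, the top-level bin $g_0(x)$ (the whole \cubby) is never saturated, so every inserted ball $x$ is either placed at depth $0$ (directly into a free slot of the \cubby) or placed at some depth $\ge 1$ entirely inside the single depth-$1$ bin $b = g_1(x)$; and in the latter case the entire displacement chain stays inside $b$, with the sole exception that it may terminate by evicting a depth-$0$ ball out of $b$ into some other free \cubby slot.

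I would therefore handle each simulated insertion in two parts. The ``depth-$\ge 1$ part'' takes place inside $b$, and the portion of $b$'s state that is relevant to it --- for each of the $s_1$ slots of $b$, whether the slot is free and, if not, the depth of the resident ball --- occupies only $O(s_1 \log k) = \poly(\log\log n) = o(\log n)$ bits (the finer hash values needed to route balls inside $b$ are recoverable from slot indices). Hence ``given the current state of $b$ and the pair $(g(x)\bmod s_1,\,s(x))$, return the new state of $b$, the slot and depth assigned to $x$, and whether a depth-$0$ ball was evicted from $b$'' is a lookup-table-compatible function (it brute-force simulates a single chain of length $\le k$, which is $\poly(\log\log n) = o(n^{1/4})$ work), so it runs in $O(1)$ time; we keep the state of every depth-$1$ bin throughout the build --- this is $o(K)$ bits in aggregate and is exactly what $M$ needs at the end. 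The ``depth-$0$ part'' places each depth-$0$ ball (whether the inserted ball itself or a ball evicted out of some bin) into a free \cubby slot using the free-slot tree in $O(1)$ time; since each of the $\le K$ insertions performs at most one depth-$0$ placement, only $O(K)$ such placements occur overall. Updating the relevant local query router (and hence the mini-array $A$) on each of the $O(K)$ placements, initializing and maintaining the free-slot tree ($O(K)$ bits of bitmaps), and finally writing the $\le K$ key/value pairs into the storage array are all $O(K)$ as well, since the whole \cubby occupies only $O(K)$ machine words.

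The step I expect to require the most care is verifying that these lookup-table invocations faithfully reproduce the online \kicktree: one must confirm that a depth-$\ge 1$ insertion's displacement chain really does remain within $b$ until it possibly reaches depth $0$, and that the otherwise-arbitrary choices the construction makes (which free \cubby slot an evicted depth-$0$ ball lands in, and which lower-priority ball a chain displaces when there is a choice) are consistent with \emph{some} valid execution of the \kicktree, so that the per-insertion probe-complexity bound of Lemma~\ref{lem:insertionprobe} still applies. This should go through because that bound is insensitive to exactly those choices: its only probabilistic input is the number of balls that \emph{hash} into a given bin with a given cap, not where low-priority balls happen to be parked. Finally, the ``with high probability'' qualifier enters only through the auxiliary structures: the local query routers require each preferred slot to receive $O(\log n/\log\log n)$ keys (a Chernoff bound, since $|S| \le K$ and each preferred slot is expected to get $\le 1$ key) and each to have a trie encoding of $O(\log n)$ bits (Lemma~\ref{lem:binarytrie}); both hold with probability $1 - 1/\poly(n)$, and if either fails we declare failure and rebuild with fresh randomness.
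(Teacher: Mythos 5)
Your route is genuinely different from the paper's, and it has a real gap in the bookkeeping, which is exactly where the factor of $k$ you are trying to eliminate can sneak back in. The per-bin state you feed to the lookup table records, for each slot of $b$, only whether it is free and the depth of its occupant. That suffices to route displacement chains, but it loses ball identities: after the simulation you do not know \emph{which} key sits in which slot, so you cannot write the storage array, you cannot identify the depth-$0$ key that gets evicted from $b$ (you must physically place a specific key elsewhere), and you cannot produce the local query routers, since a key's router is indexed by its preferred slot $g(\cdot)$, which is not recoverable from the slot it currently occupies. Relatedly, your accounting ``update the relevant local query router on each of the $O(K)$ placements'' is wrong for an online simulation: every ball displaced inside $b$ changes slots, so its router entry (the probe index $j$ with the key in $h_j$) and its physical position change too, and the total number of displacements over the build is bounded only by $O(Kk)$, not $O(K)$ --- this is precisely the cost the lookup table was supposed to hide. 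The repair is not hard: carry the identity of the occupant of each slot (an index into $S$, $O(\log K) = O(\log\log n)$ bits) inside the per-bin state, which keeps it at $\poly(\log\log n) = o(\log n)$ bits and hence lookup-table-compatible, and defer \emph{all} materialization --- storage array, mini-arrays $A$ and $M$, router entries --- to a single $O(K)$ pass over the final configuration (one router insertion per key). With that patch your argument goes through, and the rest of what you say (the chain stays inside $b$ except for at most one depth-$0$ eviction per insertion; the tie-breaking freedom does not affect Lemma~\ref{lem:insertionprobe}; the w.h.p.\ conditions are only the router conditions) is sound.

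For comparison, the paper avoids displacement chains altogether rather than accelerating them: it inserts the keys in decreasing order of depth --- first try to place every key at depth $k$, then the leftovers at depth $k-1$, and so on --- so no ball is ever relocated, and a key that ends up $j_x$ levels below $k$ costs $O(j_x)$ time. It then charges $\sum_x j_x$ to the probe complexities via $j_x \le O(1) + r_x/\log^{(k)} n$ and $\E[\sum_x r_x] = O(K\log^{(k)} n)$, with a partition-into-$\sqrt{K}$-parts Chernoff argument for the high-probability bound. That approach needs no identity tracking or new lookup tables, at the price of relying on probe-complexity concentration; your (repaired) simulation gives $O(K)$ time per build with each insertion costing $O(1)$ outright, but only by pushing all the state into lookup-table-sized metadata and a final materialization pass.
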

\begin{proof}
Recall that keys are stored in one of $ k +1 $ depths. Inserting a key into depth $ i $ takes up to $ O (i) $ time, since we may have to relocate one key in each of depths $ i -1,\ldots, 0 $. To get around this issue, we build the \cubby as follows: we first try to place each key into depth $ k $, and if a key cannot be placed in depth $ k $ (either because there is no room, or because the key has hash $s(x) < k$, then we do not insert the key); we then try to place the remaining keys into depth $ k -1 $, and again if a key cannot be placed into depth $ k -1 $, then we do not insert the key; we continue like this for each of depths $ k -2, k -3,\ldots, 0 $ one after another.

For each key $ x $, define $ j_x $ so that $ k - j_x +1 $ is the depth at which $ x $ ends up being inserted. The total time to build the \cubby is $ O (\sum_{x }^{} j_x) $. Define $ r_x $ to be the probe complexity of $ x $. Then $j_x \le O(1) + r_x / \log^{(k)} n$. Thus the total time to build the \cubby is
$$O(K) + O(\sum_x r_x / \log^{(k)} n).$$
We know from the analysis in Theorem \ref{thm:probeupper} that $\E[\sum_x r_x] \le O(K \log^{(k)} n)$, so the expected time to build the \cubby is $ O (K) $.

To turn this into a high-probability bound, we must obtain a high-probability bound on  $\sum_x j_x$. For this, we can perform a similar analysis as in Theorem \ref{thm:probeupper2}. Break the \cubby into $\sqrt{K}$ parts. Since there are at most $K = \polylog n$ elements total, the number of elements that hash to any given part is at most $\sqrt{K} (1 + 1 / \polylog n)$ with high probability in $n$. If a part receives more than $\sqrt{K}$ keys, then call the remaining $K / \polylog n$ keys that it receives \defn{extra keys}. Modify the construction of the \cubby so that we first find places for all of the non-extra keys, and then we insert the extra keys---since there are so few extra keys, they add a negligible total amount to the running time. Define $J_1, \ldots, J_{\sqrt{K}}$ so that $J_i$ is the sum of the depths of the up to $K$ non-extra keys that map to the $i$-th part. By the same analysis as above, we have that $\E[J_i] = O(\sqrt{K})$ for each $i$, regardless of the outcomes of the outcomes of $\{J_r \mid r \neq i\}$. We also have that $J_i \le O(k \sqrt{K})$ deterministically. Thus we can apply a Chernoff bound to $J = \sum_{i = 1}^{\sqrt{K}} J_i$ to determine that $J = O(K)$ with high probability in $n$. This implies that the total construction time for the \cubby is $O(K)$, as desired. 
\end{proof}

\subsection{Supporting dynamic resizing}\label{sec:variable}

In this section, we adapt the hash table from the previous section in order to support dynamic resizing: the amount of space that the hash table consumes will now be a function of the current number $ n $ of elements in the table, rather than some maximum capacity $ n $.

To begin, we will focus on supporting $n$ in a fixed range $[N, 2N]$, and we shall assume that the size of a key-value pair is $w = \Theta(\log N)$ bits. At the end of the section, we will generalize to allow for $n$ to vary over a polynomial range (i.e., it is subject only to the constraint that $\log n = \Theta(w)$). (And, in fact, later in Section \ref{sec:largekeys}, we will show how fully generalize for arbitrary values of $n$.)

\paragraph{The basic layout.}
Let $K = \polylog N$ and let $k \in [\log^* K]$ be a parameter. Our hash table will consist of $ N/K $ \defn{\facilitys}, where each \facility contains $\Theta(K)$ elements. When an element is inserted, it is hashed to a random \facility.

Each \facility is composed of many \cubbys (implemented as in the previous section) of different sizes. More specifically, at any given moment, we will always maintain a \defn{distribution invariant}, which guarantees that for each \facility there are:
\begin{itemize}
\item $\Theta((\log^{(k)} n)^2)$ \cubbys of capacity $K / (\log^{(k)} n)^2$;
\item and $\Theta\left(\frac{(\log^{(j)} n)^2}{(\log^{(j + 1)} n)^2}\right)$ \cubbys of size  $K / (\log^{(j)} n)^2$, for each $j \in [k-1]$.
\end{itemize}
We say that an \cubby is \defn{$j$-tiered} if its size is $K / (\log^{(j)} n)^2$. The way to think about the distribution of \cubby sizes is that, for each $j < k$, the \emph{total size} of the $ j $-tiered \cubbys is asymptotically equal to the size of a \emph{single} $ j +1 $-tiered \cubby. That is, for $j < k$, there are at most $O(K / (\log^{(j + 1)} n)^2)$ elements in $ j $-tiered \cubbys at a time.

At any given moment, one of the $ 1 $-tiered \cubbys is designated as the \defn{tail}. The second invariant that we will maintain is that, at any given moment, \emph{all} of the \cubbys except for the tail are completely full. We call this the \defn{saturation invariant}.

We will describe how to efficiently maintain the distribution and saturation invariants shortly, but first we finish describing the layout of a \facility. Each \facility must always store the following: (a) pointers to all of the \cubbys stored in the \facility; and (b) metadata allowing for queries to determine which \cubby the key they are looking for is in. Since each \cubby has size $\polylog n $, the pointers to \cubbys take negligible space. The metadata for queries can be stored as follows: we maintain a mini-array $ D $ with $ K $ entries; we hash each key $ x $ to one of the entries of the mini-array, and each entry stores a local query router that maps each key $ x $ to the appropriate \cubby. Note that, if a key $x$ is in a $k$-tiered \cubby, then we can can indicate which $k$-tiered \cubby it is in using $$O(\log (\log^{(k)} n)^2) = O(\log^{(k + 1)} n)$$ bits; and if a key $x$ is in a $j$-tiered \cubby for some $j < k$, then we can indicate which \cubby $x$ is in using $$O\left(\log (k - j) + \log \frac{(\log^{(j)} n)^2}{(\log^{(j + 1)} n)^2}\right) = O\left(\log (k - j) + \log^{(j + 1)} n\right) $$
bits. Since $k - j \le \log^* N - j = \Theta(\log^* \log^{(j)} n) = O(\log^{(j + 1)} n)$, the number of bits needed to indicate which \cubby $x$ is in can be upper-bounded by
$$O(\log^{(j + 1)} n).$$
In general, keys that are in lower-tiered \cubbys require more bits of metadata than those that are in higher tiers; but since there aren't very many low-tier keys, the total amount of metadata will remain small. (We'll see the full analysis of space consumption later in the section.)

\paragraph{Enforcing the invariants.}
We enforce the saturation invariant as follows. Whenever a deletion occurs in some non-tail \cubby $s$, we move one of the elements from the tail to that \cubby $s$. Whenever an insertion occurs in the \facility, we place the new element into the tail. Whenever the tail fills up, we create a new tail, and whenever the tail empties out, we eliminate that \cubby, and declare another one of the $ 1 $-tiered \cubbys to be the new tail. 

For each $ j\in [k -1] $, define  $t_j = \frac{(\log^{(j)} n)^2}{(\log^{(j + 1)} n)^2}$ to be the target number of $ j $-tiered \cubbys. This means that $ t_j $ is also the number of $ j $-tiered \cubbys whose aggregate size equals one $ j +1 $-tiered \cubby. Let $ r_j = K / (\log^{(j)} n)^2$ be the size of a $ j $-tiered \cubby. 

To enforce the distribution invariant, we must accommodate for the fact that new $ 1 $-tiered \cubbys are being added and removed over time. In general, for each $ j  \in [k - 1]$, whenever the number of $ j $-tiered \cubbys falls below $ t_j/ 2$, we take one of the $ j +1 $-tiered \cubbys and rebuild it as $ t_j $ \cubbys with tier $ j $ (this is a \defn{$ j $-creation rebuild}). And whenever the number of $ j $-tiered \cubbys rises above $3 t_j$, we take $ t_j $ \cubbys with tier $ j $ and rebuild them as a single $ j +1 $-tiered \cubby (this is a \defn{$j$-destruction rebuild}). 

We will describe how to deamortize these rebuilds (without compromising time or space efficiency) shortly. For now, let us simply observe that we only need to perform at most one $ j $-creation rebuild for every $\Theta(r_{j + 1})$ insertions that occur and we only need to perform at most one $ j $-destruction rebuild for every $\Theta(r_{j + 1})$ deletions that occur. By Lemma \ref{lem:buildicecube}, each $ j $-creation rebuild and each $ j $-destruction rebuild can be performed in time $\Theta (r_{j + 1}) $, with high probability in $ n $. It follows that, for each $ j $, the amortized time cost of the $ j $-rebuilds is $ O (1) $ per insertion/deletion. Since there are $k$ levels, the amortized time cost of all rebuilds is $ O (k) $ per operation.

When enforcing the saturation and distribution invariants, there is one technical subtlety that we must be careful about. Whenever we move an element from the tail \cubby to another \cubby, we should always choose that element at random\footnote{Note the it takes constant time to choose a random key in the tail \cubby, since the tail \cubby consists of only a $ O (1/\log n) $-fraction of the keys in the \facility, so we can afford to use an extra $\log n $ bits per key in the tail in order to maintain an auxiliary random-choice data structure that lets us select random keys. In fact, at any given moment, we should maintain a random-choice data structure for both the tail and $\Theta(1)$ other \cubbys in the same tier; this ensures that when one tail gets eliminated, another is ready to use. The constructions of the random-choice data structures are straightforward to deamortize to take $O(1)$ time per insertion/deletion.}; and whenever we perform a $ j $-destruction rebuild, we should partition the elements in the $ j +1 $-tiered \cubby randomly across the $ t_j $ new $ j $-tiered \cubbys being created. Call the random bits used to perform these choices  \defn{the non-hash randomness}.
Importantly, our use of non-hash randomness ensures that that for any given key $ x $, the choice of which \cubby it is currently in (within the \facility that it hashes to) is always a function exclusively of the sequence of operations that has been performed and of non-hash randomness, and it is \emph{not} a function of the hash functions used to perform insertions/deletions in \cubbys.

Finally, we must describe how to deamortize the $ j $-creation and $ j $-destruction rebuilds for all $ j \in [k - 1] $. A critical observation here is that all of the \facilitys have almost exactly the same sizes as each other at any given moment. Indeed, assuming that $ K $ is sufficiently large in $\polylog n $, then a Chernoff bound tells us that all of the \facilitys have the same number of elements as each other up to a factor of $ 1 \pm 1 / \polylog n$. Thus, we can synchronize the rebuilds for the \facilitys, so that whenever we perform a $ j $-creation or $ j $-destruction rebuild, we are actually performing a rebuild on all of the \facilitys at once over the course of $\Theta(r_{j + 1} N / K)$ operations. When this happens, we perform the rebuild on one \facility at a time (so it doesn't matter whether we perform the rebuild space efficiently); when we are performing a rebuild on a \facility, some insertions/deletions on that \facility may occur concurrently, but with high probability in $ n $ those operations will affect a total of $ O (1) $ distinct keys, and thus can easily be incorporated into the rebuild. Each $ j $-creation and $ j $-destruction takes total time $\Theta(r_{j + 1} N / K)$ across all \facilitys, and we only have to perform such a $ j $-creation or $ j $-destruction once every $\Theta(r_{j + 1} N / K)$ insertion/deletions on the hash table. Thus, for each $j$, we can spread out the work of performing $j$-creations/destructions to be $O(1)$ time per insertion/deletion. Summing over $ j \in [k - 1]$, this amounts to $ O (k) $ work per insertion/deletion.

This concludes the discussion of how to correctly enforce the two invariants without affecting space efficiency, and with only $ O (k) $ extra time being spent per insertion/deletion.

\paragraph{Analyzing space efficiency.} We first analyze the total space consumed by \cubbys, and then we analyze the total space consumed by the mini-arrays $D$ in each \facility.

Within a given \facility, all of the \cubbys are completely full except for the tail. The tail \cubby takes total space at most $O(K/\log n)$ machine words, which equals $ O (K) $ bits; thus the total space consumed by tails adds only $ O (1) $ bits per key in the hash table. The \cubbys that are full can be analyzed exactly as in Lemma \ref{lem:space}, allowing us to conclude that their total space consumption is $nw + O(n\log ^{(k)} n)$ bits.

Now let us analyze the space consumption of the mini-array $ D $ within a given \facility. For each key $ x $ we store which \cubby it is in. As discussed earlier in the section, if the key is in a $ j $-tiered \cubby for some $ j $, then it takes only $\Theta(\log^{(j + 1)} n)$ bits to encode which \cubby the key is in (note that this is always at most $ (\log\log n) $ bits, which means that it can be encoded as an $f$-value in a local query router). On the other hand, for each $j \in [k - 1]$, the fraction of keys that are in a $ j $-tiered \cubby is  $O(1 / (\log^{(j + 1)} n)^2)$. Thus the average size of the metadata in $D$ that we are storing for each key $ x $ is
\begin{align*}
 & O(\log^{(k + 1)} n) + \sum_{j < k} O\left(\frac{\log^{(j + 1)} n}{(\log^{(j + 1)} n)^2}\right) \\
& = O(\log^{(k + 1)} n) + O(1) \\
& = O(\log^{(k)} n)
\end{align*}
bits. By the same argument as in Lemma \ref{lem:space} (for bounding the total amount of space used by local query routers in \cubbys), we can deduce that, across the entire data structure, the total space used by mini-arrays $D$ is $O(n\log^{(k)} n)$ bits, with high probability in $ n $.

Thus, across the entire data structure, the total space consumption is 
$$nw + O(n \log^{(k)} n)$$
bits, as desired. 

\paragraph{Supporting large changes in size.}
So far we have focused exclusively on the case where the average size of each \cubby stays within the range $ [K, 2K]$.

We can generalize this to support a larger range of sizes with the following approach. Every time that the hash table's size changes by a constant factor, we move all of the elements from the current hash table $H_1$ into a new hash table $H_2$ whose capacity is twice as large (resp. small) as that of $H_1$. This means that each \facility (resp. pair of adjacent \facilitys) in $H_1$ becomes a pair of adjacent \facilitys (resp. single \facility) in $H_2$. The transformation from $H_1$ to $H_2$ takes $O(nk)$ time, and can be spread across $\Theta(n)$ insertions/deletions to take $O(k)$ time each. The transformation can also be performed space efficiently, by transforming one \facility (resp. one pair of \facilitys) at a time, so that each key/value pair only takes up space in one of the two hash tables.

Putting the pieces together, we arrive at the following theorem:
\begin{theorem}
One can construct a dictionary storing $w$-bit key/value pairs so that if $n$ is the current number of keys and $k\in [\log^*n]$, insertions/deletions take time $ O (k) $, queries take time $O(1)$, and, if $\log n = \Theta(w)$, the total space consumption is $wn + O(n \log^{(k)} n)$ bits. The running-time and space guarantees are with high probability in $n$.  
\end{theorem}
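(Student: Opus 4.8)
The plan is to assemble the pieces already developed in this section. The heavy lifting is done: for $n$ confined to a dyadic window $[N,2N]$ with $w=\Theta(\log N)$, the \facility/\cubby architecture together with the distribution and saturation invariants yields a dictionary with $O(k)$-time insertions/deletions (w.h.p.), $O(1)$-time queries, and space $wn+O(n\log^{(k)}n)$ (w.h.p.), exactly as argued in the paragraph on space efficiency above. So the only remaining task is to remove the restriction that $n$ stay within a constant factor of a fixed $N$, weakening it to the hypothesis $\log n=\Theta(w)$ (equivalently, $n$ ranges over a polynomial window determined by $w$).

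For this I would use the incremental-rehashing approach sketched under ``Supporting large changes in size.'' Maintain a current table $H_1$ built for the dyadic window containing the present $n$. As soon as $n$ crosses a constant-factor threshold, begin migrating into a fresh table $H_2$ whose target capacity is twice (resp.\ half) that of $H_1$; structurally each \facility of $H_1$ splits into two adjacent \facilitys of $H_2$ (resp.\ each adjacent pair merges into one), and by Lemma \ref{lem:buildicecube} each new \facility can be built from scratch in time linear in its size. The migration costs $O(nk)$ in total, so spreading it across the next $\Theta(n)$ insertions/deletions gives $O(k)$ extra amortized time per operation, which we then deamortize so that the bound is worst-case w.h.p. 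During the migration, a query is routed to whichever of $H_1,H_2$ currently owns the \facility that the key hashes to, and insertions/deletions are directed to the appropriate table, with the migration pointer advanced so that no key is ever live in both. The lookup tables and the simulated $\poly(n)$-independent (permutation) hash functions are likewise rebuilt in a deamortized fashion whenever $n$ moves by a constant factor, at negligible additional time and space cost.

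The space bound then follows from the observation that the migration proceeds one \facility at a time and each key/value pair resides in exactly one of $H_1,H_2$ at any instant; hence the total occupancy is always $n$ across the two tables, and applying the per-table analysis separately to $H_1$ and $H_2$ (each with its own current capacity, which is $\Theta(n)$) gives total space $wn+O(n\log^{(k)}n)$ w.h.p. The auxiliary structures---the mini-arrays $D$, the free-slot trees, the random-choice structures attached to the tails, and the fusion trees backing the \cubby pointers---all contribute lower-order terms already absorbed into the $O(n\log^{(k)}n)$ slack.

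I expect the main obstacle to be the deamortization bookkeeping rather than any new idea: one must simultaneously ensure that (i) the migration finishes before $n$ changes by another constant factor, (ii) the insertions/deletions that land on a \facility mid-migration affect only $O(1)$ keys w.h.p.\ (using the same concentration of \facility sizes that underlies the synchronized $j$-creation/$j$-destruction rebuilds), and (iii) the transient duplication of bookkeeping structures does not break the space bound---which is precisely why the migration must be carried out \facility-by-\facility rather than globally. Each of these reuses the deamortization and Chernoff arguments already established for the $j$-rebuilds, so the work is in stitching them together carefully, not in inventing new machinery.
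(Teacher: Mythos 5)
Your proposal matches the paper's own argument: the fixed dyadic-window case is exactly what the body of the section establishes, and the paper removes the restriction precisely by the incremental, \facility-by-\facility migration into a table of doubled (or halved) capacity, spread over $\Theta(n)$ operations at $O(k)$ extra time each, with space controlled because every key/value pair resides in only one of the two tables at any instant. So your proof is correct and takes essentially the same route as the paper.
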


\subsection{Succinctness through quotienting}\label{sec:quotient}

In this section, we modify our hash table so that it can store $n$ keys from a polynomial-size universe $U$ in total space
$$\log\binom{|U|}{n} + \Theta\left(n \log^{(k)} n\right) = n \log \binom{|U|}{n} +  \Theta\left(n \log^{(k)} n\right)$$
bits, where $\log \binom{|U|}{n}$ is the information-theoretical lower bound on the number of bits needed to store $ n $ elements from a universe of size $|U|$.
If we are also storing $\lambda$-bit values for some $\lambda = O(\log n)$, then our total space consumption becomes
$$n \log \binom{|U|}{n} + n \lambda +  \Theta\left(n \log^{(k)}\right)$$
bits.

\paragraph{A recap: the current structure of our hash table.}
Before we begin, let us briefly recap the structure of our hash table, and give names to the components and hash functions that are used.
At any given moment, use $N$ to denote the power-of-two range $[N, 2N]$ that the current table-size $n$ is in. 

The layout of our hash table is as follows: for some parameter $K = \polylog N$, there are $N / K$ \facilitys, each of which contains a metadata mini-array $D$ of size $K$. Call the metadata array in the $i$-th \facility $D_i$, and think of the $D_i$s as partitioning a larger array $D^*$ with $N$ entries. Each key $x$ hashes to a random \facility $i$, and then to a random slot in that \facility's array $D_i$---equivalently, this means that each key $x$ hashes to a random slot $g^*(x)$ in the array $D^*$. 

Once a key $x$ selects a \facility, the key $ x $ is then assigned to one of the \cubbys in the $ i $-th \facility (and the local query router $D^*[g^*(x)]$ stores which \cubby $x$ is assigned to). 

Each \cubby $I$, capable of storing $|I|$ keys, has the following layout. The \cubby maintains two metadata mini-arrays $A_I$ and $M_I$ and a storage array $S_I$ of size $|I|$; the mini-array $A_I$ stores local query routers, the mini-array $M_I$ stores metadata used to implement insertions/deletions in constant time, and the array $S_I$ stores the actual key/value pairs. Each key $x$ stored in the \cubby hashes to some target position $g_I(x) \in [|I|]$,  and this target position is used both to determine (a) which query-router in $A_I$ handles key $x$, and (b) what $x$'s target position is in the $k$-\kicktree used to determine where keys go in $S_I$. 

It turns out that, in this section, it will be important that the hash function $g_I$ reuses the random bits from $g^*$, so $g_I(x)$ is the lowest-order $\log |I|$ bits of $g^*(x)$.
As a convention, when discussing the bits of a number $a$, we will use $a[i, j]$ to refer to bits ranging from the $i$-th least significant bit to the $j$ least significant bit, for $i \le j$. Thus $g_I(x) = g^*(x)[1, \log|I|]$. Also, using the same notation, if $r$ is the facility that a key $x$ is in, then $r = g^*(x)[\log K + 1, \log N]$. 

\paragraph{Modifying the data structure to use quotienting.}
We can make our data structure more space efficient by using the \defn{quotienting} technique \cite{KnuthVol3}. Let $\pi$ be a random permutation hash function on the universe $U$.\footnote{Recall that in Section \ref{sec:prelim}, we discussed how to simulate permutation hash functions that are $\poly(n)$-independent.} By applying $\pi$ to each key that we are storing, we can assume that the keys being stored form a random subset of $U$.

Since the keys $x$ are random, we can define $g^*(x)$ to simply be the first $\log N$ bits of $x$, that is, $g^*(x) = x[1, \log N]$. We then make two modifications to our data structure: first, we modify each \cubby so that its storage array stores only the final $\log U - \log N$ bits $x[\log N + 1, \log |U|]$ of each key $x$ (as well as the $\lambda$-bit value stored with the key); second we add additional metadata (which we will describe in a moment) so that, for each key $x$, we can recover the first $\log N$ bits of $x$ despite the fact that we are not explicitly storing them.

The metadata that we add is the following. For each \cubby $ I $, we add a new mini-array $ B_I $ of size $|I|$. If $S_I[i]$ is empty (there is no key stored there), then $B_I[i]$ stores nothing. Otherwise, if $x$ is the key being stored in $S_I[i]$, then $B[i]$ stores the following two quantities: (a) the difference $g_I(x) - i$ and (b) the $\log (K / |I|)$ bits $x[\log |I| + 1, \log K]$ of $g^*(x)$. The first quantity can be added to $i$ to reconstruct $g_I(x) = x[1, \log |I|]$. The second quantity can be appended to $g_I(x)$ to obtain $g(x)[1, \log K] = x[1, \log K]$. And finally, the \facility number can be used to recover the final $\log N - \log K$ bits of $g^*(x)$ (i.e., if we are in facility $r$ then $x[\log K + 1, \log N] = g^*(x)[\log K + 1, \log N] = r$). This means that we can reconstruct the entire quantity $g^*(x) = x[1, \log N]$, which are the bits of $x$ that we do not explicitly store.

To analyze our new data structure, there are two tasks that we must complete. The first is to analyze the amount of space used by the $B_I$'s. 
The second is to handle the fact that the hash function $g^*(x)$ is no longer a fully independent hash function, and instead contains small negative correlations between keys (by virtue of being a permutation hash function).

\paragraph{Analyzing space consumption.}
Let $ R $ be the total number of bits used to store the metadata mini-arrays $B_I$ for each \cubby $I$. Then the total space consumed by our data structure is
$$ n (\log N - \log U + \lambda) + O(R + n \log^{(k)} n) = n \log \binom{|U|}{n} + n \lambda + O\left(R + n \log^{(k)} n\right).$$
Thus, we wish to show that $R \le O( n \log^{(k)} n)$. To prove this, we will show that the metadata stored by the $B_I$'s takes at most as much total space as the metadata stored in other mini-arrays in the data structure. Since we have already bounded the space consumption of those other mini-arrays by $O(n \log^{(k)} n)$ bits (with high probability), it follows that the same bound applies to the $B_I$'s.

For each key $ x $ in some position $i$, $B_I$ stores the quantity $g_I(x) - i$. Notice, however, that the number of bits needed to store this quantity is simply the probe complexity of $x$ in the \cubby. The same number of bits are already stored in $A_I$ in order to route queries.

The second quantity that $ B_I $ stores for $x$ is $\log (K / |I|)$ bits of $g^*(x)$. If $|I|$ has size $K / \log^{(j)} n$, then this quantity is $\Theta(\log^{(j + 1)} n)$ bits. But that's the same number of bits that are already being stored for $x$ in the metadata mini-array $D^*$. 

Thus the total space consumed by our hash table is
$$n \log \binom{|U|}{n} + n \lambda + O\left(n \log^{(k)} n\right)$$
bits.

\paragraph{Proving correctness in the face of negative dependencies.}
Our final task is to handle the fact that the hash function $ g ^*$ is not a fully independent hash function. Recall that we generate $g^*$ by first performing a random permutation $\pi$ on the universe (so each $x \in U$ is mapped to $\pi(x)$), and then setting $g^*(x)$ to be $\pi(x) \mod N$. The problem is that the $\pi(x)$'s form a random subset of $U$ \emph{without replacement}. This means that the quantities $\pi(x)$ are necessarily unique, which prevents them from being totally independent from one another.

There are three places in our analysis where we must be careful:
\begin{itemize}
\item For the $k$-\kicktree analysis from Theorem \ref{thm:probeupper} to apply to each \cubby, we need to be able to apply a Chernoff bound to the number of keys $x$ in the \cubby that have $g_I(x)$ in any given sub-interval $J \subseteq [|I|]$. 
\item In order so that each local query router in $D^*$ (and each local query router in each $A_I$) stores metadata for at most $O(\log n / \log \log n)$ keys, we need to be able to apply a Chernoff bound to the number of keys that hash to a given local query router.
\item So that the \facilitys can all be resized simultaneously, we need each \facility to contain at most $(1 + 1 / \polylog n) K/n$ keys, at any given moment, with high probability in $n$. In other words, we need to be able to apply Chernoff bounced the number of keys $x$ that hash to a given \facility.
\end{itemize}

In summary, for any subset $Q \subseteq [N]$, we need to be able to apply a Chernoff bound to the number of keys $x$ satisfying $g^*(x) \in Q$. Or, more generally, for any subset $Q \subseteq U$, we need to be able to apply a Chernoff bound to the number of keys $x$ that satisfy $\pi(x) \in Q$.\footnote{Technically, we also want to be able add conditions to this as follows: we want to be able to apply a Chernoff bound to the number of keys $\pi(x) \in Q$, having already conditioned on which keys $x \in U$ satisfy $\pi(x) \in Q'$ for some $Q' \supset Q$. (This lets us split each \cubby into $\polylog n$ chunks that are analyzed independently.) These conditioned Chernoff bounds can be achieved in the same way as the condition-free Chernoff bounds.} 

Fortunately, the indicator random variables $Q_x$ indicating whether $\pi(x) \in Q$ are negatively correlated. That is, for any subset $R \subseteq [U]$ we have
$$\Pr\left[\prod_{x \in R} Q_x = 1\right] \le \prod_{x \in R} \Pr[Q_x = 1] = \left(\frac{|Q|}{|U|}\right) ^{|R|}.$$
Chernoff bound are known to hold for indicator random variables satisfying this type of negative correlation (see, e.g., Section 1.3.1 of \cite{impagliazzo2010constructive}). Thus, our high-probability analysis from the previous sections continues to hold without modification.

Putting the pieces together, we arrive at the following theorem:
\begin{theorem}
Suppose we wish to store key/value pairs where keys are from a universe $U$, and values are $\lambda \le O(\log |U|)$ bits. Assume a machine-word size of $\Omega(\log |U|)$ bits and let $k \ge 0$.

One can construct a dictionary that supports insertions/deletions in time $ O (k) $, that supports queries in time $O(1)$, and that offers the following guarantee on space: if the current number of keys is $n$, and $\log |U| = \Theta(\log n)$, then the total space consumption is
$$n \log \binom{|U|}{n} + n \lambda + O\left(n \log^{(k)} n\right)$$
bits. The running-time and space guarantees are with high probability in $n$.  
\label{thm:dictionary}
\end{theorem}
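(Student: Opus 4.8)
The plan is to assemble the construction and analysis developed throughout Section~\ref{sec:quotient}. The starting point is the dynamically-resizable hash table of the previous subsection, which stores $w$-bit key/value pairs in $wn + O(n\log^{(k)} n)$ bits with $O(k)$-time insertions/deletions and $O(1)$-time queries. I would first apply a $\poly(n)$-independent random permutation $\pi$ to every key, so that the stored keys form a uniformly random (without-replacement) subset of $U$; this justifies setting $g^*(x) = \pi(x)[1,\log N]$, and — more importantly — lets us drop those $\log N$ high-order bits from each stored key, re-deriving them at query time from the facility index together with the per-\cubby metadata array $B_I$. Under the hypothesis $\log|U| = \Theta(\log n)$ the surviving per-key record (quotient $x[\log N+1,\log|U|]$ plus the $\lambda$-bit value) has $\Theta(\log n)$ bits, so the previous subsection's hash table applies to it directly.

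The first thing to verify is the space bound. Summing the $(\log|U| - \log N + \lambda)$-bit records over all $n$ keys, and using $\log N = \log n + O(1)$ together with $\log\binom{|U|}{n} = n\log|U| - n\log n + O(n)$, the storage arrays contribute $\log\binom{|U|}{n} + n\lambda + O(n)$ bits. To this I add the metadata mini-arrays. The arrays $A_I$, $M_I$ (Section~\ref{sec:fixed}) and $D$ (Section~\ref{sec:variable}) were already bounded by $O(n\log^{(k)} n)$ bits, so the only new quantity is the total size $R$ of the arrays $B_I$. The key observation — spelled out just before the theorem — is that $B_I[i]$ stores exactly two fields: the offset $g_I(x)-i$, whose bit-length is the probe complexity of $x$ that $A_I$ already pays for, and $\log(K/|I|)$ bits of $g^*(x)$, which for a $j$-tiered \cubby is $\Theta(\log^{(j+1)} n)$ bits — precisely the per-key cost that $D^*$ already pays for. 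Charging each $B_I$-field against the matching pre-existing allocation gives $R = O(n\log^{(k)} n)$, hence the claimed total space bound, all with high probability in $n$ (the concentration being inherited from the high-probability space bounds of the earlier sections).

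Next I would confirm the time bounds. A query reconstructs $x$ from the stored quotient, the two fields of $B_I[i]$, and the facility index — a constant number of word operations on top of the $O(1)$-time navigation already established — so queries stay $O(1)$; maintaining $B_I$ in lockstep with $A_I$ and the storage array costs $O(1)$ extra time per update, so insertions/deletions stay $O(k)$.

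The main obstacle is that $g^*$ is no longer fully independent: since $\pi$ is a permutation, the events $\{\pi(x)\in Q\}$ are correlated across keys, whereas the high-probability arguments of the earlier sections all rested on Chernoff bounds applied to counts $|\{x : g^*(x)\in Q\}|$ — the $k$-\kicktree analysis inside each \cubby, the load on each local query router, and the near-uniform split of keys across \facilitys. I would handle this exactly as in the ``negative dependencies'' paragraph above: the indicators $Q_x = \mathbb{I}[\pi(x)\in Q]$ are negatively correlated, so $\Pr[\prod_{x\in R}Q_x=1] \le (|Q|/|U|)^{|R|}$, which suffices for Chernoff-type concentration, and this survives conditioning on a larger event $\pi(x)\in Q'\supseteq Q$ — the property that lets each \cubby be cut into $\polylog n$ independently-analyzed chunks. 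Hence every high-probability guarantee from Sections~\ref{sec:probe}, \ref{sec:metadata}, \ref{sec:fixed} and \ref{sec:variable} carries over unchanged, completing the proof.
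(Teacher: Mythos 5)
Your proposal is correct and follows essentially the same route as the paper: quotient via a $\poly(n)$-independent permutation, store only the low-order bits plus value in the (previously built) resizable table, add the per-\cubby arrays $B_I$ and charge their two fields against the probe-complexity bits in $A_I$ and the tier-identification bits in $D^*$ to get $R = O(n\log^{(k)} n)$, and repair the loss of full independence of $g^*$ via negative-correlation Chernoff bounds at exactly the three places you identify. No gaps worth noting.
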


\section{Large Keys, Small Keys, and Filters}

In this section, we give three extensions of Theorem \ref{thm:dictionary}. In Subsection \ref{sec:largekeys}, we consider the setting in which key/value pairs are $w = \omega(\log n)$ bits, and we show that the guarantee of $O(\log^{(k)} n)$ wasted bits per key can be extended to any $w \in n^{o(1)}$.  In Subsection \ref{sec:small}, we consider the setting in which key/value pairs are very small, taking a total of $\log n + s$ bits for some $s = o(\log n)$. We show that, if $s$ is even slightly sublogarithmic, then it is possible to reduce the number of wasted bits per key all the way to $o(1)$. Finally, in Subsection \ref{sec:filters}, we apply our results to the problem of constructing space-efficient dynamic filters.

Our results on very-large and very-small keys both hinge on novel reductions that transform the the very-large/very-small case into the standard $\Theta(\log n)$-bit case. These reductions may be independently useful in future work.

\subsection{Supporting Large Keys/Values}\label{sec:largekeys}

So far we have restricted ourselves to the case where keys/values are $ O (\log n) $ bits each. We prove the following theorem:

\begin{theorem}
Suppose we wish to store key/value pairs where keys are from a universe $U = [2^a]$, values are $b$ bits, and $a + b \le n^{o(1)}$. Assume a machine-word size of $\Omega(a + b)$ bits and let $k \ge 0$.

One can construct a dictionary that supports insertions/deletions in time $ O (k) $, that supports queries in time $O(1)$, and that offers the following guarantee on space: if the current number of keys is $n$, then the total space consumption is
$$n \log \binom{|U|}{n} + n b + O\left(n \log^{(k)} n\right)$$
bits. The running-time and space guarantees are with high probability in $n$.  
\label{thm:largekeys}
\end{theorem}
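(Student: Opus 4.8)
The plan is to give a general-purpose reduction from the large-key/large-value setting ($a + b \le n^{o(1)}$) to the standard-sized setting where keys are $\Theta(\log n)$ bits, which is handled by Theorem~\ref{thm:dictionary}. The reduction proceeds in two stages. First I would handle large \emph{values}: I claim that values can be made essentially free. Here I would exploit the fact (emphasized in Section~\ref{sec:overview}) that the $k$-\kicktree supports a load factor of $1$, so the underlying dictionary can be arranged to have \emph{no empty slots}. Concretely, maintain a dictionary on just the keys (plus a pointer or index), and store the $b$-bit values in a separate contiguous array indexed so that the $i$-th key (in some canonical order maintained by the structure) uses the $i$-th value-slot; because there are no empty slots, this value-array has exactly $n$ entries of $b$ bits each, contributing exactly $nb$ bits with no wastage. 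Deletions are handled by swapping the deleted key's value-slot with the last one and updating a single back-pointer, all in $O(1)$ extra time; dynamic resizing of the value-array is trivial since it is a plain array. The key point is that the value-array introduces \emph{no} additional $O(\log^{(k)} n)$ or even $O(1)$ overhead per key, so it suffices to reduce to the case $b = 0$, i.e., storing $n$ keys from $U = [2^a]$ in $n\log\binom{|U|}{n} + O(n\log^{(k)} n)$ bits.

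Second, I would reduce large \emph{keys} ($a = \omega(\log n)$) to the case $a = O(\log n)$. The idea is to ``move bits from the key-length to the value-length.'' Apply a $\poly(n)$-independent hash function (or permutation) $\sigma : U \to [2^a]$ and, for each key $x$, split $\sigma(x)$ into a high-order part $x_{\mathrm{hi}} = \sigma(x)[a - c\log n + 1, a]$ of $\Theta(\log n)$ bits and a low-order part $x_{\mathrm{lo}} = \sigma(x)[1, a - c\log n]$ of the remaining $a - \Theta(\log n)$ bits. With high probability in $n$, the $x_{\mathrm{hi}}$'s are all distinct (since there are $n$ of them in a universe of size $\poly(n)$ for $c$ large enough), so we may treat $x_{\mathrm{hi}}$ as a key in a universe of size $\poly(n)$ and treat $x_{\mathrm{lo}}$ as an associated ``value'' of $a - \Theta(\log n)$ bits. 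Now invoke the first stage: the $x_{\mathrm{lo}}$-values are stored for free in a contiguous array, and the $x_{\mathrm{hi}}$'s are stored in a standard-sized dictionary via Theorem~\ref{thm:dictionary} using
$$n\log\binom{2^{c\log n}}{n} + O(n\log^{(k)} n) = cn\log n - n\log n + O(n) + O(n\log^{(k)}n)$$
bits (using $\log\binom{M}{n} = n\log M - n\log n + O(n)$). Adding the $n(a - c\log n)$ bits for the $x_{\mathrm{lo}}$-array, the total is $na - n\log n + O(n) + O(n\log^{(k)} n)$, which matches $n\log\binom{|U|}{n} + O(n\log^{(k)} n)$ since $\log\binom{2^a}{n} = na - n\log n + O(n)$. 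Queries, insertions, and deletions just split/reassemble $\sigma(x)$ and invoke the underlying operations, preserving the $O(1)$-query, $O(k)$-update bounds. In the low-probability event that two keys collide on $x_{\mathrm{hi}}$, rebuild with fresh randomness.

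The main obstacle I anticipate is the ``no-empty-slots'' construction in the first stage: I need the dictionary whose keys live in it to be dynamically resized \emph{exactly} to the current count $n$, with no slack, while still supporting the $k$-\kicktree's operations efficiently and while the separate value-array stays consistent with the canonical key order through insertions, deletions, and the resizing rebuilds of Section~\ref{sec:variable}. The subtlety is that the resizing machinery of Section~\ref{sec:variable} moves keys between \cubbys and \facilitys, so the ``index of the $i$-th key'' must be a stable handle that the value-array can reference; I expect to resolve this by having each key in the dictionary store (as part of its $O(\log^{(k)}n)$-bit metadata budget, or as an explicit $\Theta(\log n)$-bit field whose cost is charged against the value size $b$ when $b = \omega(\log n)$, and handled separately when $b = O(\log n)$ via Theorem~\ref{thm:dictionary} directly) a pointer into the value-array, maintaining the back-pointer invariant under the swap-with-last deletion rule. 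A second, more minor point to check is that all the Chernoff bounds in the reduction go through under $\poly(n)$-wise independence / the negative-dependence of permutation hashing, exactly as in Section~\ref{sec:quotient}; this is routine. Everything else—the arithmetic with $\log\binom{|U|}{n}$, the constant-time splitting of machine words of width $\Omega(a+b)$—is straightforward bookkeeping.
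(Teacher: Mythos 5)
There is a genuine gap in your second stage. Your reduction from large keys to $\Theta(\log n)$-bit keys hinges on a map $\sigma:U\to[2^a]$ that is simultaneously (i) evaluable in constant time, (ii) representable in $\poly(n)$ bits, and (iii) injective (or at least information-preserving), so that storing only the split $\sigma(x)=x_{\mathrm{hi}}\circ x_{\mathrm{lo}}$ never creates false positives. For superpolynomial universes ($a=\omega(\log n)$) no such family of $\poly(n)$-wise (almost) independent permutations is known: as the paper notes in Section~\ref{sec:largekeys}, the known constructions \cite{naor1999construction,luby1988construct,kaplan2009derandomized} require $|U|^{\Omega(1)}$ bits per permutation, which is exactly why prior work could not quotient large universes. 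If you instead take $\sigma$ to be a non-injective hash, then a query for an absent key $y$ with $\sigma(y)=\sigma(x)$ for a stored $x$ silently returns a false positive; this cannot be detected or fixed by rebuilding, and the paper explicitly disallows correctness failures for dictionaries (only time/space failures are tolerated). Dropping $\sigma$ altogether does not work either, since the raw high-order bits are adversarial and may all coincide. The paper's actual reduction (Theorem~\ref{thm:reduction}) sidesteps randomness entirely: it splits the \emph{raw} key into a $\gamma\log N$-bit core $\phi(x)$ and remainder $\psi(x)$, stores $(\phi(x),\psi(x)\circ y\circ\ell)$ in the small-key dictionary with an ``abundance bit'' $\ell$, and handles arbitrary (even total) collisions on cores via the ``rabid key'' machinery and two auxiliary dictionaries $D_1,D_2$, choosing $\gamma$ large enough that rabid keys still cost at most $a+b-\log n$ bits each. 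No distinctness assumption is needed and correctness is exact.

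Your first stage is the right idea (it matches the paper's exploitation of load factor $1$), but the bookkeeping as written breaks the space bound. A pointer into a global $n$-entry value array costs $\Theta(\log n)$ bits per key; it does not fit in the $O(\log^{(k)}n)$-bit metadata budget, and ``charging it against $b$'' is not valid, since the theorem promises $nb+O(n\log^{(k)}n)$ and $n\Theta(\log n)\neq O(n\log^{(k)}n)$ for $k\ge 2$. The paper avoids per-key pointers entirely: within each \cubby (all of which are completely full except the tail) the value of the key in position $j$ occupies bits $(j-1)b+1,\ldots,jb$ of a per-\cubby value block, so the location of the key already determines the value's location; only the tail, an $O(1/\log n)$ fraction of keys, uses explicit $\Theta(\log n)$-bit pointers, contributing $O(1)$ amortized bits per key. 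Your ``canonical order'' idea can be repaired into exactly this positional scheme, but the swap-with-last global array with back-pointers, as proposed, does not meet the stated space guarantee.
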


Historically, the task of supporting universes $ U $ of superpolynomial size has proven to be quite difficult. Indeed, the best known guarantees for worst-case constant-time hash tables \cite{arbitman2010backyard, liu2020succinct} use
$$\log \binom{|U|}{n} + \Omega(\min(|U|^{\Omega(1)}, n \log n))$$
bits of space,\footnote{The result of Liu et al. \cite{liu2020succinct} uses $\log \binom{|U|}{n} + |U|^{\Omega(1)}$ bits of space and the result of Arbitman et al. \cite{arbitman2010backyard} uses $\log \binom{|U|}{n}  + \Omega(n \log n)$ bits of space. The solution that Arbitman et al. use is to just hash elements to a polynomial-size universe---if one is willing to allow for query-correctness to be violated a $1 / \poly n$ fraction of the time, then it suffices to store only these hashes, but otherwise one must also store the entire original element.} and the best known guarantee for constant expected-time hash tables \cite{raman2003succinct} uses  
$$\log \binom{|U|}{n} + \Omega(n \log (a + b))$$
bits of space.  In contrast, Theorem \ref{thm:largekeys} says that, as the size of the universe grows, the number of wasted bits per key does not---it remains $O(\log^{(k)} n)$ bits per key even for very large keys/values.

There are several reasons that large universes are difficult to handle. One difficulty is that it is not known how to efficiently perform quotienting on keys from large universes. For any universe $U$, it is known how to construct an efficient family of $n^{\epsilon}$-wise $(1 / \poly(n))$-dependent permutations \cite{naor1999construction, luby1988construct, kaplan2009derandomized}---but each member $\pi \in \Pi$ requires $|U|^{\Omega(1)}$ bits to store, so if $|U|$ is super-polynomial, then we cannot store $\pi$ in polynomial space. This has prevented past work \cite{liu2020succinct, arbitman2010backyard, bender2021all} from using quotienting on large universes. Another difficulty \cite{liu2020succinct, arbitman2010backyard, bender2021all} with large universes is that we can no longer afford to store lookup tables of size $|U|^{\Omega(1)}$---this is a serious bottleneck for any hash table that uses the Method of Four Russians (including the hash tables in this paper) as a path to worst-case constant-time operations. Finally, even if both of these issues were eliminated, the known techniques \cite{arbitman2010backyard, liu2020succinct, raman2003succinct} for constructing succinct hash tables would still incur an asymptotic blow-up in wasted-bits-per-key as the universe size grows.

In this section, we present a new approach for handling large universes that lets us get around all of the above issues at once. Define an \defn{$(a, b)$-dictionary} to be a dictionary that stores $a$-bit keys with $ b $-bit values. We prove that there is an efficient (high probability) reduction from the problem of constructing a succinct $(a, b)$-dictionary to the problem of constructing a succinct $(\Theta(\log n), a + b - \Theta(\log n))$-bit dictionary. This means that we can always assume that keys are $\Theta(\log n)$ bits. Once we have proven this reduction, our only challenge will be to handle large values, which it turns out will be relatively straightforward using the techniques we have already developed.

The reduction is captured in the following theorem. 
\begin{theorem}
Let $a, b, N, \gamma \in \mathbb{N}$ be parameters, and let $\gamma > 0$ be a sufficiently large constant. Suppose that $a \ge \gamma \log N$, and suppose that the machine-word size $w$ satisfies $w \ge \Omega(a + b)$. Finally let $f_N(n)$ be a non-negative non-decreasing function.

Suppose we have a dynamically resizable $(\gamma \log N, a + b - \gamma \log N + 1)$-dictionary $S$ that is capable of storing $n$ key-value pairs in space $n (a + b - \log n) + f_N(n)$ bits for any $n \in [N]$, while offering running-time guarantees that are high-probability in $N$. Then we can construct a dynamically resizable $(a, b)$-dictionary $L$ that is capable of storing $n$ key-value pairs in space $n (a + b - \log n) + f(n) + O(\sqrt{N})$ bits, for any $n \in [N]$, and that, with high probability in $N$, takes at most $O(1)$ more time per operation than does $S$. 
\label{thm:reduction}
\end{theorem}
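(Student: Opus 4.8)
The plan is to implement the reduction as a cheap ``quotienting'' step: randomize each $a$-bit key with an inexpensive pairwise-independent permutation, then cut the randomized key into a $\gamma\log N$-bit \emph{index part}, which becomes the key handed to $S$, and a short \emph{data part}, which is prepended to the original $b$-bit value and handed to $S$ as a value. Concretely, fix a prime $p\in[2^a,2^{a+1})$ (which exists by Bertrand's postulate and can be found in time $n^{o(1)}$), draw $\alpha$ uniformly from $\mathbb{Z}_p^*$ and $\beta$ uniformly from $\mathbb{Z}_p$, and set $\rho(x)=\alpha x+\beta \bmod p$. Since $[2^a]\subseteq\mathbb{Z}_p$ and $\rho$ is a bijection of $\mathbb{Z}_p$, $\rho$ is injective on $[2^a]$, its image lies in $[0,p)\subseteq[0,2^{a+1})$, it is pairwise independent in the sense that $(\rho(x),\rho(x'))$ is uniform over ordered pairs of distinct elements of $\mathbb{Z}_p$ whenever $x\neq x'$, and (together with $\rho^{-1}$) it is evaluable in $O(1)$ word operations because $a=O(w)$. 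Writing the $(a+1)$-bit number $\rho(x)$ as $\rho(x)=\mathrm{idx}(x)\cdot 2^{a+1-\gamma\log N}+\mathrm{data}(x)$, we get $\mathrm{idx}(x)\in[N^\gamma]$ (exactly $\gamma\log N$ bits, using that $N$ is a power of two) and $\mathrm{data}(x)$ of exactly $a+1-\gamma\log N$ bits; hence the pair $(\mathrm{data}(x),v)$ is $a+b-\gamma\log N+1$ bits, matching the value-size of $S$. The lone ``$+1$'' is precisely the single bit of slack introduced by working modulo a prime just above $2^a$.

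The dictionary $L$ stores, for each present $(x,v)$, the $S$-entry with key $\mathrm{idx}(x)$ and value $(\mathrm{data}(x),v)$, maintaining the invariant that $\mathrm{idx}$ is injective on the current key set (so these are legal, distinct $S$-keys). A query for $x'$ computes $\rho(x')$, looks up $\mathrm{idx}(x')$ in $S$, and, if present with value $(d,v)$, reports $v$ exactly when $d=\mathrm{data}(x')$: since $\rho$ is injective, having both $\mathrm{idx}$ and $\mathrm{data}$ match forces the stored key to equal $x'$, so queries are \emph{deterministically} correct whenever the invariant holds. An insertion of $(x,v)$ inserts $(\mathrm{idx}(x),(\mathrm{data}(x),v))$ into $S$; if this reveals that $\mathrm{idx}(x)$ was already an $S$-key with a \emph{different} data part, we have detected a (rare) $\mathrm{idx}$-collision among stored keys and rebuild $L$ from scratch with fresh $\rho$ (recovering the present pairs by iterating $S$ and applying $\rho^{-1}$), deamortized in the usual way; deletions are handled symmetrically after checking the data part. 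Each $L$-operation thus performs $O(1)$ word operations and $O(1)$ operations on $S$, so (using that the instantiations of $S$ in this paper support $O(1)$-time queries, as in Theorem~\ref{thm:dictionary}) its time exceeds $S$'s by $O(1)$, w.h.p. For space, $L$ consists of $S$ (storing the same number $n$ of pairs, hence $n(a+b-\log n)+f_N(n)$ bits), the $O(a)$-bit description of $\rho$, and any $O(\sqrt N)$-bit lookup tables; in every application $a\le n^{o(1)}$, so this overhead is $O(\sqrt N)$ and the total is $n(a+b-\log n)+f_N(n)+O(\sqrt N)$ bits. Dynamic resizability is inherited from $S$, since neither $\rho$ nor the split depends on $n$.

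It remains to show the invariant holds with high probability, i.e. that $\mathrm{idx}$ is injective on the set of all keys ever inserted over a $\poly(N)$-length operation sequence. By pairwise independence, for fixed $x\neq x'$ the values $\rho(x),\rho(x')$ are a uniform distinct pair in $\mathbb{Z}_p$, so they agree on their top $\gamma\log N$ bits with probability $O(1/N^\gamma)$ (each of the $\le 2N^\gamma$ length-$2^{a+1-\gamma\log N}$ blocks receives at most $2^{a+1-\gamma\log N}$ of the $p\ge 2^a$ elements); union-bounding over the $\poly(N)$ pairs of keys gives failure probability $\poly(N)/N^\gamma=1/\poly(N)$ once $\gamma$ is a sufficiently large constant (larger than twice the exponent bounding the operation-sequence length). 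I expect the crux---the step requiring an actual idea---to be exactly this tension: losslessness forces us to ``spend'' only $\gamma\log N$ key-bits on the $S$-key, so the map $x\mapsto(\mathrm{idx}(x),\mathrm{data}(x))$ must be injective into an $(a+1)$-bit codomain and is hence essentially a permutation of $[2^a]$; yet an adversarial (even oblivious) key set can make the naive ``top bits of $x$'' split collide everywhere, while a truly random permutation on the super-polynomial universe $[2^a]$ cannot be stored. The resolution is to notice that a \emph{pairwise}-independent permutation already suffices---index collisions are a pairwise event---and that no further randomness of $\rho$ is needed because $S$ performs its own internal randomization on whatever distinct key set it receives; a $\poly(N)$-space affine map modulo a prime delivers exactly this, at the sole cost of detecting and rebuilding on the $1/\poly(N)$-probability event of a collision between two stored keys.
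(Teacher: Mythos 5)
Your proposal is correct under the paper's stated conventions (oblivious operation sequences, rebuild-from-scratch on $1/\poly(N)$-probability failure events, constant-time queries on the inner dictionary), but it takes a genuinely different route from the paper's proof. The paper does not randomize the keys at all: it takes the ``core'' to be the literal first $\gamma \log N$ bits of $x$, spends the extra value bit as an \emph{abundance bit}, and handles core collisions structurally---for each core, one key stays in $S$ and every additional (``rabid'') key is exiled to a pair of auxiliary dictionaries ($D_1$ holding reference counters keyed by a hashed core, $D_2$ holding the exiled keys under short hashed identifiers), so the space bound holds \emph{regardless} of how many keys share a core, with rebuilds triggered only by $1/\poly(N)$-probability hash collisions inside that backyard. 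You instead make index collisions themselves rare by pre-composing with an explicit pairwise-independent permutation (affine map modulo a prime just above $2^a$), spend the ``$+1$'' bit on the modulus slack rather than on an abundance bit, and rebuild with a fresh $\rho$ on the $O(n/N^{\gamma})$-per-insertion event that two stored keys collide on the $\gamma\log N$-bit index. Your scheme is simpler (no backyard, no abundance bit, deterministically correct queries under the maintained injectivity invariant), and its failure mode is of exactly the kind the paper already tolerates and itself uses; the paper's scheme buys robustness---the main structure never has to be re-keyed because of collisions among the stored keys, since randomness is confined to the small backyard, and its space accounting is insensitive to an arbitrary collision pattern on the cores. Both arguments share the same minor unstated slacks (an $O(a)$-bit seed absorbed into the $O(\sqrt N)$ term, a constant number of $S$-queries per $L$-operation, and enumeration of stored contents during a rebuild), so none of these is a gap specific to your approach.
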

\begin{proof}
To capture the fact that $\gamma$ is at least a sufficiently large positive constant, we shall treat $\gamma$ as an asymptotic variable. We implement $L$ with three data structures: 
\begin{itemize}
\item The first data structure is the $(\gamma \log N, a + b - \gamma \log N + 1)$-dictionary $S$;
\item The second data structure is a dynamically resizable $(\sqrt{\gamma} \log N, \log N)$-dictionary $D_1$ that can store $r$ keys/value pairs in $O(r \sqrt{\gamma} \log N + \sqrt{N})$ bits of space for any $r \le N$, and that supports constant-time operations with high probability in $N$;
\item The third data structure is a $(\sqrt{\gamma} \log N, a + b - \gamma \log N + \sqrt{\gamma}\log N)$-dictionary $D_2$ that can store $r$ key/value pairs in $\Theta(r \sqrt{\gamma} \log N + \sqrt{N}) + r(a + b  - \gamma \log N + \sqrt{\gamma}\log N)$ bits of space for any $r \le N$, and that supports constant-time operations with high probability in $N$.
\end{itemize}

Let us briefly comment on how to implement $D_1$ and $D_2$. Notice that $D_1$ does not need to be succinct (or even compact)---it can be implemented using any standard constant-time dictionary that supports load factor $\Omega(1)$. To implement $D_2$, we store values with a layer of indirection, meaning that the dictionary allocates separate $(a + b  - \gamma \log N + \sqrt{\gamma}\log N)$-bit chunks of memory for each individual value, and stores a $\Theta(\log N)$-bit pointer to that value. The $(\sqrt{\gamma} \log N, O(\log N))$-dictionary that stores the keys and the pointers can again be implemented with any standard constant-time dictionary that supports load factor $\Omega(1)$.\footnote{We remark that the $\sqrt{N}$ term in the space-usage for $D_1$ and $D_2$ stems from the fact that, in order for the probabilistic guarantees of $D_1$ and $D_2$ to be high-probability in $N$, we need $D_1$ and $D_2$ to be size at least $\poly N$.}

For any key $x \in [2^a]$, define the \defn{core} $\phi(x)$ to be the first $\gamma \log N$ bits of $x$, and define $\psi(x)$ to be the final $|x| - \gamma \log N$ bits of $x$. So $\phi(x) \circ \psi(x) = x$. 

We will use $S$ to store key/value pairs of the form $(\phi(x), \psi(x) \circ y \circ \ell)$ where $\ell \in \{0, 1\}$ is an extra bit of information that we call the \defn{abundance bit}. If there is exactly one key $ x \in L$ with a given core $\phi (x) $, then $ S $ stores $(\phi (x),\psi(x) \circ y \circ 0)$. If there is more than one key $x$ with a given core $j = \phi(x)$, then $S$ stores $(\phi(x), \psi(x) \circ y \circ 1)$ for \emph{one} such key/value pair $(x, y)$. The fact that the abundance bit is set to $1$ in this latter case indicates that there are also other key/value pairs $(x', y')$ satisfying $\phi(x') = \phi(x)$. In this case, we say that the core $\phi(x)$ is \defn{rabid}, and any key $x' \neq x$ that has core $\phi(x') = \phi(x)$ is also said to be a \defn{rabid key} (regardless of whether $x'$ is a member of the dictionary $L$ that we are constructing).

Rabid keys $z \in L$ (and their corresponding values $w$) are stored as follows. Let $h (\phi(z)) \in [\sqrt{\gamma} \log N]$ be a pairwise-independent hash, and let $g(z) \in [\sqrt{\gamma} \log N]$ also be a pairwise-independent hash. We store the pair $(g(z), h(\phi(z)) \circ \psi(z) \circ w)$ in $D_2$. If we ever attempt to insert a key $g(z)$ into $D_2$ that is already there (i.e., we have a collision), then we rebuild the entire data structure with new random bits (this only occurs with probability $1 / \poly N$ per insertion).  Finally, for each rabid core $\phi(z)$, we store the pair $(h(\phi(z)), q)$ in $D_1$, where $q$ is a reference counter keeping track of the number of rabid keys in $L$ have that core. 

The data structure $D_1$ has two purposes. The first (and less important) purpose is to maintain the reference counter $q$ so that, on deletions, we know when there are no longer any rabid elements with a given core $\phi(z)$, at which point we can both remove  $h(\phi(z))$ from $D_1$ and we can set the abundance bit for $\phi(z)$ in $S$ to be $0$. 

The more important purpose of $ D_1 $, however, is to detect collisions between $h(\phi(z))$ and $h(\phi(z'))$ for rabid keys $z, z'$ satisfying $\phi(z) \neq \phi(z')$. Whenever we insert some rabid key $z$, we can tell based on the abundance bit for $\phi(z)$ (prior to the insertion) whether this is the first rabid key in $L$ to have core $\phi(z)$; if it is the first such rabid key, but there is already a pair of the form $(h(\phi(z)), q)$ in $D_1$, then that means a collision on $h$ has occurred, and we rebuild the entire data structure from scratch. Such collisions occur with probability only $1 / \poly N$ per insertion.

What $ D_1 $ ensures is that $h$ is injective on the set of rabid cores---that is, if $\phi(z)$ and $\phi(z')$ are two different rabid cores, then $h(\phi(z)) \neq h(\phi(z'))$. It follows that there is a bijection between rabid keys $z \in U$ and pairs $(h(\phi(z)), \psi(z))$. Thus $D_2$ can be used to perform queries on rabid keys $z$ as follows: check if there is a key-value pair of the form $(g(z), h(\phi(z)) \circ \psi(z) \circ w)$; if there is, then return value $w$, and otherwise declare that $z$ is not present. 

It is straightforward to perform insertion/deletion/queries using $S, D_1, D_2$. Note that, if a non-rabid key $x$ is deleted from $L$, but there is a rabid key $x'$ in $L$ that has the same core $\phi(x') = \phi(x)$, then we must move one such rabid key $x'$ out of $(D_1, D_2)$ and into $S$ (so that $x'$ is no longer rabid). 

Our final task is to analyze the space consumption of our data structure $L$. We will use $r$ to denote the number of rabid keys in $L$, and we will use $k$ to denote the number of non-rabid keys in $L$ (so $n = k + r$). Since $\gamma$ is a sufficiently large constant, the data structures $ D_1 $ and $ D_2 $ collectively use
$$O(\sqrt{N}) + O(r \sqrt{\gamma} \log N) + r (a + b - \gamma \log N + \sqrt{\gamma}\log N)$$
bits of space, where the final term accounts for the space consumed by the values of $D_2$, and the first two terms account for the space consumed by the rest of $D_1, D_2$. Using the fact that $\gamma $ is a sufficiently large positive constant, it follows that $ D_1 $ and $ D_2 $ collectively use
$$O(\sqrt{N}) + r (a + b - \log n)$$
bits. The data structure $ S $, on the other hand, uses
$$k (a + b - \log n) + f(k) \le k (a + b - \log n) + f(n)$$
bits. The total number of bits used is therefore 
$$n (a + b - \log n) + f(n) + O(\sqrt{N}).$$
\end{proof}

The result of Theorem \ref{thm:reduction} is that we can always assume without loss of generality that our keys are size $O(\log n)$ bits. Thus, in order to prove Theorem \ref{thm:largekeys}, it suffices to construct an $(O(\log n), b)$-dictionary, where the only constraint on $b$ is $b \le n^{o(1)}$ (and where the machine-word size $w$ satisfies $w \ge \Omega(\log n + b)$). 

\paragraph{Storing large values space efficiently.}
To store large values, we exploit an interesting feature of the dynamically resizable dictionary that we constructed for the proof of Theorem \ref{thm:dictionary}: in each \facility, all of the \cubbys except for the tail are \emph{completely full}. Thus, for each \cubby $I$ (except for the tail), we can allocate $bI$ bits of space $V_I$ to store values---the key stored in the $j$-th position of $I$ has its value stored in bits $(j - 1)b + 1, \ldots, jb$ of $V_I$. Importantly, $V_I$ is fully saturated, so it wastes no space.

To handle the keys/values in the tail, recall that the tail consists of less than an $O(\frac{1}{\log n})$-fraction of the keys in the \cubby. Thus, we can individually allocate space for each value in the tail, and we can store a $\Theta(\log n)$-bit pointer to that value. The $\Theta(\log n)$-bit pointers contribute only $O(1)$ amortized bits of space overhead per key in the data structure.

One technical detail that we must be careful about is that, whenever an \cubby toggles between being/not-being the tail, we must change how the values are stored in that \cubby. This is straightforward to do in a deamortized fashion using the same deamortized-rebuilding techniques as in Section \ref{sec:variable}.

In summary, we have the following lemma:

\begin{lemma}
Suppose we wish to store key/value pairs where keys are from a universe $U = [2^a]$, values are $b$ bits. Suppose $a = O(\log n)$ and $b \le n^{o(1)}$. Assume a machine-word size of $\Omega(a + b)$ bits. Finally, let $k \ge 0$.

One can construct a dictionary that supports insertions/deletions in time $ O (k) $, that supports queries in time $O(1)$, and that offers the following guarantee on space: if the current number of keys is $n$, then the total space consumption is
$$n \log \binom{|U|}{n} + n b + O\left(n \log^{(k)} n\right)$$
bits. The running-time and space guarantees are with high probability in $n$.  
\label{lem:largevalues}
\end{lemma}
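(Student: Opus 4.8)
The plan is to reuse, essentially verbatim, the dynamically-resized quotiented hash table built for Theorem~\ref{thm:dictionary}, and to graft onto it a separate, fully-packed storage region for the (now large) values. Concretely, I would run the construction of Section~\ref{sec:quotient} with keys from $U=[2^a]$ (note $\log|U| = \Theta(\log n)$, as required, since we need $|U|\ge n$), but set the value-length parameter $\lambda$ of that construction to $0$, so that each \cubby's storage array holds only the $\Theta(\log n)$ residual key bits. This gives, for free, all of the machinery of Sections~\ref{sec:variable}--\ref{sec:quotient}: the partition into \facilitys and \cubbys, the distribution and saturation invariants, the metadata mini-arrays and local query routers, the $O(k)$-time insert/delete and $O(1)$-time query bounds, and the space bound $n\log\binom{|U|}{n}+O(n\log^{(k)} n)$ for everything except the values.

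The key observation --- exactly as flagged in the discussion preceding the lemma --- is that the saturation invariant guarantees that every \cubby except the tail of its \facility is completely full. So for each non-tail \cubby $I$, of fixed capacity $|I|$, I would allocate a contiguous array $V_I$ of exactly $b|I|$ bits and store the value of the key occupying slot $j$ of $I$ in bits $[(j-1)b+1, jb]$ of $V_I$; since $I$ is full this wastes no space, and summing over all non-tail \cubbys contributes exactly $b$ bits per non-tail key. For the single tail \cubby of each \facility this is impossible (its occupancy fluctuates), so instead each such value gets its own $b$-bit chunk --- packed into a per-\facility slab with a free list to avoid per-allocation overhead --- together with a $\Theta(\log n)$-bit pointer to it stored in that key's metadata entry. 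Because a tail is a $1$-tiered \cubby of size $K/(\log n)^2$ and there are $N/K$ \facilitys, at most $O(n/\polylog n)$ keys live in tails at any moment, so the tail pointers contribute only $o(n)$ bits overall, absorbed into the $O(n\log^{(k)} n)$ term. Adding the $nb$ bits of actual value storage yields total space $n\log\binom{|U|}{n}+nb+O(n\log^{(k)} n)$.

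For running time, queries and deletions are unchanged except for one extra $O(1)$-time step that reads (resp.\ clears) the value via $V_I$ or the stored pointer --- here we use $w=\Omega(a+b)$, so a $b$-bit value is an $O(1)$-word object and can be read, written, or copied in constant time. The one genuinely new bookkeeping task is that whenever a \cubby toggles between being and not being the tail, or is created/destroyed by a $j$-creation or $j$-destruction rebuild (Section~\ref{sec:variable}), its value storage must be converted between the packed form $V_I$ and the pointer-indirection form; converting a \cubby holding $r$ keys costs $O(r)$ time (one $O(1)$-word copy per value), the same asymptotic cost as rebuilding the \cubby itself, so it is absorbed into the existing amortized bounds --- $O(1)$ per operation for tail toggles, and $O(1)$ per level, i.e.\ $O(k)$ total, for the $j$-rebuilds --- and is deamortized by the identical technique already used in Section~\ref{sec:variable}. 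Insertions place the new key/value in the tail (writing the value into a fresh chunk), and whenever the saturation invariant moves a key from the tail into a non-tail \cubby we copy its value into the appropriate slot of $V_I$ in $O(1)$ time.

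I expect the main obstacle to be the accounting and deamortization surrounding the tails: verifying that the pointer overhead is genuinely $o(n)$, and, more delicately, that the value-rebuilds triggered by tail toggles and by $j$-creation/$j$-destruction rebuilds can be synchronized with --- and charged against --- the corresponding key-rebuilds of Section~\ref{sec:variable}, so that no operation ever spends more than $O(k)$ time. Everything else follows immediately from the structural invariants of the table of Theorem~\ref{thm:dictionary}; this is precisely why large values, historically the hardest case, become essentially routine once one has a dynamically-resized hash table with no empty slots outside the tails.
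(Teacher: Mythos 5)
Your proposal is correct and follows essentially the same route as the paper: it reuses the dynamically-resized table of Theorem~\ref{thm:dictionary}, packs values contiguously in the fully-saturated non-tail \cubbys ($b$ bits per slot, no waste), handles tail values by indirection with $\Theta(\log n)$-bit pointers whose total cost is negligible, and deamortizes the packed-vs-pointer conversion on tail toggles and tier rebuilds exactly as in Section~\ref{sec:variable}. No gaps.
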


Combined, Theorem \ref{thm:reduction} and Lemma \ref{lem:largevalues} imply Theorem \ref{thm:largekeys}.

\subsection{Optimizing for very small keys}\label{sec:small}

In this section we consider the case of very small keys, that is keys of size $\log n + o(\log n)$ bits. For most of the section we shall focus exclusively on dictionaries that store keys without values, but at the end of the section we will also generalize to the case where the dictionary also stores very small values.

We begin with the fixed-capacity case. We show that, if keys are of size $\log n + s$ for some $s \le \log n / \log \log \cdots \log n$ (where there are a constant number of logarithms), then it is possible to construct a constant-time dictionary with $o(1)$ wasted bits per key.

\begin{theorem}
Let $k, n$ be parameters, where $k\in [\log^* n]$. Let $\phi = \Theta(\log^{(k)} n)$. Let $ U = [2^{\log n + s}]$ for some $s$ satisfying $s \in \omega(1) \cap o(\log n)$ and suppose that $s \le O(\log n / \phi)$.

There exists a fixed-capacity dictionary that stores up to $n$ keys from $U$ at a time, that supports insertions/deletions in time $O(k)$ (with high probability in $n$), that supports queries in time $O(1)$, and that uses a total of 
$$\log \binom{|U|}{n} + o(n)$$
bits of space (with high probability in $n$).
\label{thm:small}
\end{theorem}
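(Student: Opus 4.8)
The plan is to give a reduction from the very-small-key case to the standard-key case, exploiting the steep time/space tradeoff of Theorem~\ref{thm:dictionary} together with dynamic resizing. The key observation is that a universe $U = [2^{\log n + s}]$ with $s = o(\log n)$ has size only $2^{\log n + s} = n \cdot 2^s$, so the keys are so crowded that the information-theoretic lower bound $\log\binom{|U|}{n}$ is within $o(n)$ of $n s + O(n)$; in particular, storing the $s$ ``low bits'' of each key explicitly costs essentially nothing. So the real content is to store the ``high part'' of each key---roughly $\log n$ bits---succinctly.

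\medskip

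\noindent\textbf{The reduction.} First I would split each key $x \in U$ into its top $\log(n/\phi')$ bits $\phi(x)$ and its remaining bits $\rho(x)$ (for a suitable $\phi' = \Theta(\phi)$, chosen so that $n/\phi'$ is a clean power of two). Hash each key to a bucket indexed by $\phi(x)$; there are $n/\phi'$ buckets, and since $s \le O(\log n/\phi)$, with high probability every bucket receives $O(\phi' + \text{poly}(s))= \text{poly}(\log n)$ keys. Within a bucket, we need to store, for each of the (say) $t$ keys present, the pair $(\rho(x), \text{``collision tag''})$ where the collision tag disambiguates keys sharing a $\phi$-value---this is a dictionary problem on $\text{poly}(\log n)$-bit-universe keys. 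The crucial point: the universe of $\phi$-values restricted to one bucket has size roughly $\phi' \cdot 2^s = \text{poly}(\log n)$, and we are storing $t = \text{poly}(\log n)$ of them, so this is a \emph{dense} small-universe dictionary. To store these inner dictionaries space-efficiently across all $n/\phi'$ buckets, I would invoke Theorem~\ref{thm:dictionary} (or rather its dynamically resizable small-keys variant) at the ``$\Theta(\log n)$-bit'' scale: treat each bucket's local data as the payload. Concretely, the global structure is a dynamically-resizable dictionary over $\phi$-values (universe size $n/\phi'$, which is $\omega(n/\phi')$... actually $=n/\phi'$, still $\text{poly}(n)$), storing for each distinct $\phi$-value a variable-length record of the $\rho$'s and tags; use the resizing machinery of Section~\ref{sec:variable} so the record lengths can be non-uniform, and the metadata routing to find a key's record costs $O(\log^{(k)} n)$ bits per key by Theorem~\ref{thm:dictionary}. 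Summing up: the $\rho$-bits contribute $n \cdot (s + O(1))$ bits total; the routing/metadata contributes $O(n \log^{(k)} n) = O(n\phi)$ bits; and since $s = \omega(1)$ with $s \le O(\log n / \phi)$, we have $\phi \le O(\log n / s)$, so $n\phi \le O(n \log n / s)$. Hmm---this is not obviously $o(n)$ unless $\phi = o(1)$-scaled; the right move is to note $\log\binom{|U|}{n} = n \log(|U|/n) + n\log e + O(\cdot) \ge n(s + \log e) - O(n)$, which already has slack $n\log e$, and then show the overhead $O(n\phi) = O(n\log^{(k)}n)$ plus the quotienting loss is absorbed because... actually we want the total to be $\log\binom{|U|}{n} + o(n)$, so we genuinely need the per-key overhead to be $o(1)$, not $O(\log^{(k)}n)$.

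\medskip

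\noindent\textbf{Getting to $o(1)$ wasted bits.} Here is where the heart of the argument lies, and the main obstacle. The $O(\log^{(k)} n)$ overhead from Theorem~\ref{thm:dictionary} is per-key at the \emph{bucket} level, but each bucket holds $\Theta(\phi')$ keys from the original dictionary, and we only pay the routing cost once per bucket-record, not once per original key---wait, no, we do pay per key inside the bucket. The resolution, which the paper hints at (``this reduction relies heavily on the ability to efficiently support dynamic-resizing and on the steep tradeoff curve''), is to recurse or to choose $\phi'$ adaptively: make each bucket large enough ($\text{poly}(\log n)$ keys) that the \emph{inner} dictionary can itself be made succinct-to-within-$o(1)$ by a direct argument---e.g. storing each bucket's $\rho$-values in a near-full array plus a $\Theta(\text{poly}\log\log n)$-bit header, using the saturation-invariant trick from Section~\ref{sec:variable} so there are no empty slots---while the \emph{cross-bucket} overhead ($O(\log^{(k)} n)$ bits per distinct $\phi$-value, of which there are at most $n$) is amortized against the $\Theta(\phi')$ keys sharing near-identical $\phi$-prefixes: no wait, distinct $\phi$-values can number up to $n$. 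I think the correct plan is: choose the bucket size $B = \text{poly}(\log n)$ and the top-split so that there are $n/B$ buckets; store one $O(\log^{(k)} n)$-bit... no. Let me instead say the plan is to apply Theorem~\ref{thm:dictionary} with parameter $k$ to a dictionary whose keys are the $\phi$-values \emph{but whose count is small}: since many original keys can share a $\phi$-value only if $|U|/2^{|\phi|}$ is large, pick $|\phi| = \log n - (\log^* n$-scaled amount$)$ so that the number of \emph{distinct} $\phi$-values is $\le n / \psi$ for $\psi = \omega(\log^{(k)} n)$, whence the global dictionary's $O(\psi')$-type overhead, spread over $n$ keys, is $o(1)$ per key; and handle the within-$\phi$-class collisions (at most $O(\psi \cdot 2^s)$ of them, all tiny) by explicit tagging whose total length is $o(n)$ since $s$ is small. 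The main obstacle is making these two requirements---(a) few enough distinct $\phi$-classes that global metadata is $o(n)$, and (b) small enough collision multiplicities that within-class storage is $o(n)$---simultaneously satisfiable, which is exactly where the hypothesis $s \le O(\log n/\phi)$ with $\phi = \Theta(\log^{(k)} n)$ gets used, and where I expect the calculation (balancing $2^s$ against $\log^{(k)}n$ against the quotienting savings $n\log e$) to be the delicate part. After the fixed-capacity version, the dynamic version at the end follows by the same resizing wrapper used in Sections~\ref{sec:variable}--\ref{sec:quotient}.
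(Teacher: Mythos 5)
Your high-level skeleton matches the paper's: bucket keys by a $\log(n/\phi)$-bit prefix so that the expensive dictionary machinery (Theorem~\ref{thm:dictionary}, with $\Theta(\log^{(k)} n)$ or $\Theta(\log^{(k+1)} n)$ wasted bits per stored record) is paid only once per bucket rather than once per key, and then handle the suffixes within each bucket separately. That amortization idea is correct and is exactly how the paper gets the routing cost down to $o(1)$ per key. But the proposal has a genuine gap precisely at the step you yourself flag as ``the delicate part,'' and the concrete plan you do commit to would fail. Storing the within-class suffixes by ``explicit tagging'' loses too much: with buckets of expected occupancy $\psi=\omega(1)$, each key's suffix is $s+\log\psi$ bits, so explicit storage costs $n(s+\log\psi)$ bits, whereas $\log\binom{|U|}{n}=ns+n\log e\pm o(n)$; the slack $n\log e$ cannot absorb $n(\log\psi-\log e)=\omega(n)$ bits. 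Your claim that the tagging totals $o(n)$ conflates the number of colliding suffixes with the cost of identifying every key inside its class. The paper's fix is the missing idea: encode the \emph{entire set} of suffixes in a bin as a single combinatorial index of length $\log\binom{\phi 2^s}{A_i}+O(1)$ (Lemma~\ref{lem:Ri}), which recovers the $-A_i\log A_i+A_i\log e$ term that explicit tags throw away; the final bound then comes from a Jensen/Stirling computation over the bin occupancies (Lemmas~\ref{lem:binom} and~\ref{lem:boundspacelittle}).

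A second missing mechanism is how to store these variable-length per-bin records inside a fixed-size-value dictionary while supporting constant-time updates. The paper keeps an array of $O(\log\phi)$-bit occupancy counters $A_i$, caps each bin at $100\phi$ ``standard'' keys by diverting overflow keys to a backyard table built from Theorem~\ref{thm:dictionary} (only $O(n/2^\phi)$ such keys exist w.h.p., Lemma~\ref{lem:backyardbound}, so its larger per-key waste is harmless), and stores the pair $(i,E_i)$ in one of $100\phi$ hash tables $H_{A_i}$ indexed by the bin's occupancy, so that each table has keys and values of a fixed length and the encodings stay $o(\log n)$ bits, which is what lets the Method-of-Four-Russians lookup tables manipulate $E_i$ in constant time. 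Your proposal gestures at ``resizing machinery'' and ``headers'' for non-uniform records but never supplies a mechanism that simultaneously keeps the records fixed-size from the dictionary's point of view, bounds their length for constant-time updates, and avoids paying $\omega(1)$ extra bits per key; without the occupancy-indexed tables plus backyard cap (or an equivalent device), the argument does not go through.
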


We will assume without loss of generality that $\phi \le \log \log n$ and that $\phi$ is rounded to the nearest power of two. We will also assume without $\phi = \omega(1)$, since the fact that $s = o(\log n)$ ensures that the theorem requirements hold for some $\phi = \omega(1)$. And finally, we will assume without loss of generality that $s \le o(\log n / \phi)$, since we otherwise can replace $\phi$ with $\phi' = \Theta(\log^{(k + 1)} n)$ and prove the theorem for the new $\phi'$. 

We begin by describing the data structure. First, we assume that the keys form a random subset of $U$; as noted in Section \ref{sec:prelim}, it is known how to construct permutation hash functions that simulate this assumption while preserving time and space guarantees.

We use the first $\log (n / \phi)$ bits of each key to assign it to a random one of $n / \phi$ bins $R_1, \ldots, R_{n / \phi}$. We maintain an array $ A $ of $n / \phi$ $O(\log \phi)$-bit counters $A_1, \ldots, A_{n / \phi}$, where each $A_i$ is always in the range $[0, 100\phi]$. Whenever we insert an element $x$ that maps to some bin $R_i$, we examine the counter $A_i$. If $A_i < 100 \phi$, then we declare $x$ to be \defn{standard} and we increment $A_i$; otherwise, we declare $x$ to be \defn{non-standard}, and we leave $A_i$ unchanged. Similarly, we decrement $A_i$ whenever we delete a standard element $x$ that belongs to bin $R_i$, but we do not decrement $A_i$ when we delete a non-standard element.

We take different approaches to storing standard versus non-standard elements. Non-standard elements are stored in a secondary \defn{backyard hash table} $B$, constructed via Theorem \ref{thm:dictionary} to incur $O(\log^{(k + 1)} n)$ wasted bits per key. The large number of wasted bits per key is okay because only a small number of elements will reside in $B$.

Standard elements, on the other hand, are stored as follows. For any given bin $R_i$, we encode the set of standard elements that reside in that bin using a single $o(\log n)$-bit integer $E_i$ (we will describe how to construct $E_i$ later). Importantly, the number of bits used for $E_i$ is a strict function of the number $A_i$ of elements being encoded. 

Since different $E_i$s have different sizes, we cannot store them contiguously in an array. Instead, we again make use of Theorem \ref{thm:dictionary}. We maintain $100 \phi$ dynamically-resized hash tables $ H_1,\ldots, H_{100 \phi}$, each of which is parameterized to incur $O(\log^{(k + 1)} n)$ wasted bits per key. For each bin $ R_i $, we store the key-value pair $(i, E_i)$ in hash-table $H_{A_i}$. Notice that this construction ensures that each hash table $ H_i $ storing fixed-size keys and values. Also notice that, even though the $H_i$s incur a relatively large number of wasted bits per key, there are only $O(n / \phi)$ total elements in the $H_i$s, one for each of the $n / \phi$ bins. 

To complete the description of the data structure, we show that it is possible to encode each $E_i$ space efficiently.
\begin{lemma}
For any bin $R_i$, the set of standard keys in that bin can be encoded using $\log \binom{\phi 2^s}{A_i} + O(1) = o(\log n)$ bits. Moreover, the encodings can be updated/queried in constant time using $O(\sqrt{n})$ bits of metadata.
\label{lem:Ri}
\end{lemma}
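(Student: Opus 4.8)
The plan is to encode the contents of each bin with the combinatorial number system (``combinadics''). After the first $\log(n/\phi)$ bits of a key have selected its bin, the remaining $\log\phi + s$ bits identify an element of a universe of size $\phi 2^s$, so the standard keys residing in $R_i$ are exactly a size-$A_i$ subset of $[\phi 2^s]$. Writing such a subset as $c_{A_i} > c_{A_i-1} > \cdots > c_1 \ge 0$, I would set $E_i = \sum_{j=1}^{A_i}\binom{c_j}{j}$, which is a bijection between size-$A_i$ subsets of $[\phi 2^s]$ and the integers $\{0,1,\ldots,\binom{\phi 2^s}{A_i}-1\}$; hence $E_i$ can be stored in $\lceil\log\binom{\phi 2^s}{A_i}\rceil = \log\binom{\phi 2^s}{A_i} + O(1)$ bits. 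The point to stress is that this length is a function of $A_i$ (and of the fixed parameters $\phi,s,n$) alone — which is exactly why, in the outer construction, the pair $(i,E_i)$ is filed into the hash table $H_{A_i}$, so that all values stored in a given $H_j$ have the same width.

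The size bound is immediate: $\log\binom{\phi 2^s}{A_i} \le A_i\log(\phi 2^s) \le 100\phi(\log\phi + s)$. Since $\phi \le \log\log n$ we have $\phi\log\phi = O((\log\log n)(\log\log\log n)) = o(\log n)$, and since the theorem's hypothesis may be taken to be $s = o(\log n/\phi)$, we have $\phi s = o(\log n)$; therefore $|E_i| = o(\log n)$, which is the claimed bound.

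For constant-time updates and queries, I would observe that the only operations we need on an encoded bin are: (i) given a suffix $y\in[\phi 2^s]$, decide whether $y$ belongs to the set decoded from $E_i$; (ii) given such a $y$ not in the set and with $A_i < 100\phi$, output the combinadic encoding of the set with $y$ inserted (this is what an insertion into $R_i$ computes before moving $(i,E_i)$ from $H_{A_i}$ to $H_{A_i+1}$); and (iii) symmetrically, output the encoding after deleting $y$. Each of these is a function of the triple $(A_i,E_i,y)$, whose total bit-length is $O(\log\log n) + o(\log n) + o(\log n) = o(\log n)$, hence at most $(\log n)/2$ for all sufficiently large $n$; and each is computed by decoding a combinadic, editing a set of at most $100\phi$ elements of $[\phi 2^s]$, and re-encoding, all in $\polylog n = O(n^{1/4})$ time. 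Thus (i)--(iii) are lookup-table-compatible, so by the lookup-table technique we precompute their tables once and share them across all bins; as the inputs have only $o(\log n)$ bits, the tables together occupy $2^{o(\log n)}\cdot O(\log n) = n^{o(1)} = O(\sqrt n)$ bits, and with them every operation on an $E_i$ runs in $O(1)$ time. (Since Theorem~\ref{thm:small} is a fixed-capacity result, $n$, $\phi$, and $s$ are all fixed, so no table rebuilding is needed.)

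I do not expect a genuine obstacle here: the lemma is really the statement that the per-bin universe $\phi 2^s$ and per-bin occupancy $A_i\le 100\phi$ are both small enough that an information-theoretically optimal description of a bin fits inside a small fraction of a machine word. The one thing that must be checked with care is that the hypothesis $s = o(\log n/\phi)$ (together with $\phi\le\log\log n$) is exactly what forces $\log\binom{\phi 2^s}{A_i} = o(\log n)$, and that the combinadic encoding's length depends on $A_i$ only — both of which the paragraphs above handle directly.
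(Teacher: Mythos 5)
Your proposal is correct and follows essentially the same route as the paper: identify the standard keys of $R_i$ with an $A_i$-element subset of a universe of size $\phi 2^s$ (the keys agree on their first $\log(n/\phi)$ bits), encode it in $\log\binom{\phi 2^s}{A_i}+O(1)$ bits, verify via $\phi\le\log\log n$ and $s=o(\log n/\phi)$ that this is $o(\log n)$, and invoke the lookup-table (Method of Four Russians) technique for constant-time updates/queries within $O(\sqrt n)$ shared metadata. Your explicit combinadic encoding and enumeration of the needed operations merely make concrete what the paper leaves implicit, so there is no substantive difference.
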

\begin{proof}
Recall that keys are $\log n + s$ bits. All of the keys in $R_i$ agree on their first $\log (n / \phi)$ bits, so they differ in only their final $s + \log \phi$ bits. The number of possibilities for the $A_i$ keys encoded by $E_i$ is 
$$\binom{\phi 2^s}{A_i}.$$

If we can show that this is $2^{o(\log n)}$, then the lemma follows from the Method of Four Russians (see discussion in Section \ref{sec:metadata}). To complete the proof, observe that
\begin{align*}
     \binom{\phi 2^s}{A_i} & \le  \binom{(\log \log n) 2^{o((\log n / \phi))}}{O(\phi)} \\
                                         & \le \log \left((\log \log n) 2^{o(\log n / \phi)}\right)^{O(\phi)} \\
                                         & \le (\log \log n)^{O(\log \log n)} 2^{o(\log n)} \\
					&  \le \log 2^{o(\log n)}.
\end{align*}
\end{proof}

We now proceed to bound the space consumption of the hash table. We begin with a simple approximation for binomial coefficients.

\begin{lemma}
For all $a = \omega(b)$, we have
$$\log \binom{a}{b} = b \log a - b \log b + b \log e \pm o(b),$$
and for all $a \ge b$, we have
$$\log \binom{a}{b} \le b \log a - b \log b + b \log e + o(b).$$
\label{lem:binom}
\end{lemma}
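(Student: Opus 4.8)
The plan is to derive both bounds from the standard estimates $\left(\frac{a}{b}\right)^b \le \binom{a}{b} \le \left(\frac{ea}{b}\right)^b$ together with a more refined lower bound coming from Stirling's formula. For the upper bound, I would simply take logarithms of $\binom{a}{b} \le (ea/b)^b$, which immediately yields $\log\binom{a}{b} \le b\log a - b\log b + b\log e$; this already beats the claimed $b\log a - b\log b + b\log e + o(b)$ with room to spare (no error term is even needed on this side), so the second displayed inequality is the easy half and holds for all $a \ge b$ with no lower bound on $a/b$.

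For the first (two-sided) estimate, the lower bound is the substantive direction. Here I would invoke Stirling in the form $m! = \sqrt{2\pi m}\,(m/e)^m e^{\theta_m}$ with $|\theta_m| \le 1/(12m)$, applied to $a!$, $b!$, and $(a-b)!$. Writing $\binom{a}{b} = \frac{a!}{b!\,(a-b)!}$ and taking logarithms gives
\[
\log\binom{a}{b} = a\log a - b\log b - (a-b)\log(a-b) + b\log e + \tfrac12\log\frac{a}{2\pi b(a-b)} \pm O(1).
\]
(All logs here can be taken to an arbitrary fixed base; the $\log e$ factor is exactly what converts the natural-log Stirling statement to the stated base, and I will be consistent throughout.) It then remains to show that $a\log a - (a-b)\log(a-b) = b\log a + o(b)$ under the hypothesis $a = \omega(b)$. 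Factoring, $a\log a - (a-b)\log(a-b) = b\log a + (a-b)\log\frac{a}{a-b}$, and since $a/(a-b) = 1 + b/(a-b)$ with $b/(a-b) = o(1)$, we get $(a-b)\log\frac{a}{a-b} = (a-b)\cdot \Theta(b/(a-b)) = \Theta(b) \cdot \frac{a-b}{a-b}$—wait, more carefully: $(a-b)\log(1 + b/(a-b)) \le (a-b)\cdot \frac{b}{a-b}\log e = b\log e = O(b)$, which is only $O(b)$, not $o(b)$. This is the one place requiring care: I would instead bound it as $(a-b)\log(1+x) \le (a-b)x\log e$ with $x = b/(a-b)$, giving $\le b\log e$, and lower-bound via $\log(1+x) \ge x - x^2/2$ (in natural log), so $(a-b)\log(1+x) \ge b\log e - O(b^2/(a-b)) = b\log e - o(b)$ since $b/(a-b) = o(1)$. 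Hmm—this contributes a $b\log e$ term, not $o(b)$.

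I need to reconcile this: the correct identity is that $a\log a - b\log b - (a-b)\log(a-b)$ equals $b\log\frac{a}{b} + (a-b)\log\frac{a}{a-b}$, and the second summand is $b\log e + o(b)$ when $b = o(a)$ — wait, that would give $\log\binom ab = b\log(a/b) + b\log e + \ldots$, i.e. two factors of $\log e$. Let me recheck against $\binom ab \approx (ea/b)^b$: that says $\log\binom ab \approx b\log(a/b) + b\log e$, a single $\log e$. So in fact $(a-b)\log\frac{a}{a-b} \to 0 \cdot$ something — precisely, with $x = b/(a-b) \to 0$, $(a-b)\log(1+x) = (a-b)(x - x^2/2 + \cdots)\log e = b\log e - \frac{b^2}{2(a-b)}\log e + \cdots$. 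The leading term $b\log e$ is genuinely there. So the Stirling expansion's explicit $b\log e$ from $(m/e)^m$ must be spurious — let me retrace: $\log(a!/(b!(a-b)!))$ in natural log is $[a\ln a - a] - [b\ln b - b] - [(a-b)\ln(a-b) - (a-b)] + O(\ln(\cdots)) = a\ln a - b\ln b - (a-b)\ln(a-b) + O(\ln \cdots)$, since $-a + b + (a-b) = 0$. Good — so there is NO explicit $\log e$ from Stirling; the $\log e$ in the statement comes entirely from the $(a-b)\log\frac{a}{a-b} \approx b\log e$ term. So the clean path: write $\log\binom ab = b\log\frac ab + (a-b)\log\frac{a}{a-b} + O(\log(ab))$, and show $(a-b)\log\frac a{a-b} = b\log e \pm o(b)$ via the expansion $\log(1+x) = x\log e \pm O(x^2)$ with $x = b/(a-b) = o(1)$, and absorb $O(\log(ab))$ into $o(b)$ using $b = \omega(1)$ (which follows since the lemma is only invoked with $b = \omega(1)$ in context; if $b = O(1)$ the whole statement is trivial as both sides are $O(1)$).

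The main obstacle is thus purely bookkeeping: tracking exactly where the $\log e$ originates and confirming no double-counting, and verifying the $O(\log(ab))$ Stirling/rounding error is $o(b)$ — which needs $b \to \infty$, so I should state that hypothesis or note the $b = O(1)$ case is trivial. Once that is pinned down, both inequalities follow in a few lines.
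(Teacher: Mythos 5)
Your proposal is correct, and it reaches the lemma by a slightly different route than the paper. The paper writes $\binom{a}{b}=\frac{a(a-1)\cdots(a-b+1)}{b!}$, replaces the falling factorial by $a^b$ at a cost of $o(b)$ (using $a=\omega(b)$), and then applies Stirling only to $b!$, so the $b\log e$ comes directly from $\log b!\approx b\log b-b\log e$; the upper bound is the same computation with the $\le a^b$ step giving one-sided error. You instead get the upper bound in one line from $\binom{a}{b}\le (ea/b)^b$ (valid for all $a\ge b$, with no error term at all, which is cleaner than the paper's version), and for the two-sided estimate you apply Stirling to all three factorials and extract the $b\log e$ from $(a-b)\log\frac{a}{a-b}=b\log e\pm o(b)$ via $\log(1+x)=x\log e\pm O(x^2)$ with $x=b/(a-b)=o(1)$. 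Both arguments are elementary Stirling computations of comparable length; your upper-bound half is tidier, while the paper's two-sided half avoids the bookkeeping about where $\log e$ originates that you had to untangle.

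Two small points of care. First, your summary of the leftover Stirling error as ``$O(\log(ab))$, absorbed into $o(b)$ using $b=\omega(1)$'' is loose: $\log a$ need not be $o(b)$ under the hypothesis $a=\omega(b)$ (e.g.\ $a=2^{b^2}$), so an $O(\log a)$ error could not be absorbed that way. What saves you is that the term you actually derived, $\frac12\log\frac{a}{2\pi b(a-b)}+O(1)$, has the $\log a$ and $\log(a-b)$ contributions cancel to $O(1)$ when $a=\omega(b)$, leaving magnitude $O(\log b)$, which is $o(b)$ once $b=\omega(1)$; you should state the error as $O(\log b)$ rather than $O(\log(ab))$. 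Second, the need for $b\to\infty$ is a real (if implicit) hypothesis shared by the paper's own proof, whose $\Theta(\log b)$ Stirling error is likewise only $o(b)$ for growing $b$; however, your remark that the $b=O(1)$ case is ``trivial as both sides are $O(1)$'' is not accurate for the two-sided claim, since for fixed $b$ the two sides differ by a nonzero constant, which is not $o(1)$. The honest resolution is simply that the two-sided statement is used (and should be read) with $b=\omega(1)$, as in all of the paper's applications.
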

\begin{proof}
If $a = \omega(b)$, then
\begin{align*}
    \log \binom{a}{b} & = \log \left( \frac{a \cdot (a - 1) \cdots (a - b + 1)}{b!}\right) \\
                        & = \log \left( \frac{a^{b}}{b!}\right) \pm o(b) \\
                        & = b \log a - \log (b!) \pm o(b) \\
                        & = b \log a - b \log b + b \log e \pm o(b),
\end{align*}
where the final step follows from Stirling's inequality.  By a similar sequence of arguments, if $a \ge b$, then  
\begin{align*}
    \log \binom{a}{b} & = \log \left( \frac{a \cdot (a - 1) \cdots (a - b + 1)}{b!}\right) \\
                        & \le \log \left( \frac{a^{b}}{b!}\right) \\
                        & = b \log a - \log (b!) \pm o(b) \\
                        & = b \log a - b \log b + b \log e \pm o(b).
\end{align*}
\end{proof}

Next we bound the number of elements in the backyard, at any given moment.
\begin{lemma}
With high probability in $n $, the number of non-standard elements is $O(n / 2^\phi)$ at any given moment.
\label{lem:backyardbound}
\end{lemma}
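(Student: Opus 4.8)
Fix a moment $t$, and let $P_t$ (with $|P_t|\le n+1$) be the set of keys present at $t$; since the adversary is oblivious, $P_t$ and the entire operation sequence are fixed independently of the hash functions. For each $x\in P_t$ let $t_x$ be the time of $x$'s most recent insertion and let $Z_x$ be the indicator that, just before $t_x$, the counter $A_i$ of the bin $R_i$ that $x$ hashes to already equalled its cap $100\phi$; by construction $x$ is non-standard exactly when $Z_x=1$, so the number of non-standard elements is $\mathrm{NS}(t)=\sum_{x\in P_t}Z_x$. The first step is to bound $\E[\mathrm{NS}(t)]$: if $Z_x=1$ then at least $100\phi$ keys of the fixed set $P_{t_x^-}\setminus\{x\}$ hash to the same bin as $x$; conditioning on $x$'s bin and using that the keys form a (simulated) random subset of $U$ — so that by the negative association of the permutation hash the number of those keys landing in any fixed bin is stochastically dominated by $\Bin(n+1,\phi/n)$, of mean $\le 2\phi$ — a Chernoff bound gives $\Pr[Z_x=1]\le 2^{-\Omega(\phi)}$, uniformly in the bin. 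Replacing the cap $100\phi$ by $C\phi$ for a large enough constant $C$ (which leaves the $E_i$-encodings of size $2^{o(\log n)}$, since $C\phi = o(\log n/s)$) pushes this to $2^{-10\phi}$, so $\E[\mathrm{NS}(t)]\le (n+1)\,2^{-10\phi}$.

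The second step is to turn this expectation bound into a high-probability one. Markov alone is insufficient, since $2^{-\phi}$ need not be polynomially small. Instead I would use that $\mathrm{NS}(t)$ is an $O(1)$-Lipschitz function of the per-key bin assignments: changing which bin a single key hashes to perturbs the counter trajectory of each of the two affected bins by $\le 1$ at every time (queue-length monotonicity under adding/removing one arrival), hence changes each bin's non-standard count, and so $\mathrm{NS}(t)$, by $O(1)$; moreover $\mathrm{NS}(t)$ is self-bounding (the per-coordinate changes sum to $O(\mathrm{NS}(t))$). A Bernstein/Efron–Stein-type concentration inequality for such functions of negatively associated variables then gives $\mathrm{Var}(\mathrm{NS}(t)) = O(\E[\mathrm{NS}(t)])$ and $\Pr[\mathrm{NS}(t)\ge 2n/2^{\phi}]\le \exp(-\Omega(n/2^{\phi})) \le \exp(-\Omega(n/\log n)) = 1/\poly(n)$, using $\phi\le\log\log n$. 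A union bound over the $\poly(n)$ moments in the operation sequence then yields the claim "at any given moment" with high probability.

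\textbf{Main obstacle.} The delicate point is the concentration step, and specifically the fact that the $Z_x$ are \emph{not} negatively associated: keys in the same bin have positively correlated non-standard-ness (crowding persists in time), and a bin can be drained of its standard keys long after it overflows while still holding many non-standard keys. Consequently there is no clean pointwise bound of $\mathrm{NS}(t)$ by $\sum_i (c_i(t)-100\phi)^+$ in terms of the \emph{current} occupancies $c_i(t)$ (such a bound would require non-standard keys to live only in bins whose counter is currently maxed). One must therefore either push the concentration through the hash randomness directly via the Lipschitz/self-bounding route above, or split $\mathrm{NS}(t)$ into the contribution of currently-overflowing bins — easily controlled, since the $c_i(t)$ are sums of negatively associated indicators — plus a genuinely lower-order "drained-bin" term; making that drained-bin term rigorous without losing an extra $\log n$ or $\phi$ factor is the technical crux.
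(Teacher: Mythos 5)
Your first step (the expectation bound) is sound and essentially matches the per-key estimate in the paper's proof; note that you do not need to raise the cap from $100\phi$ to $C\phi$ — with mean $\le\phi$ keys per bin, a Chernoff bound for $100\phi$ arrivals already gives probability far below $2^{-\phi}$, and in any case you are not free to modify the data structure when proving a lemma about it. The genuine gap is the concentration step, which you yourself flag as the crux but do not close. The tools you invoke (Efron--Stein / Bernstein-type bounds for Lipschitz, self-bounding functions) are proved for \emph{independent} coordinates; here the bin assignments come from a random permutation (sampling without replacement), and negative association only yields Chernoff bounds for \emph{sums} of the NA indicators, not entropy-method bounds for a general functional such as $\mathrm{NS}(t)$. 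In addition, $\mathrm{NS}(t)$ depends on the bins of every key ever inserted before time $t$ --- potentially $\poly(n)\gg n$ coordinates --- so plain bounded differences gives nothing, and the self-bounding claim ``the per-coordinate changes sum to $O(\mathrm{NS}(t))$'' is itself doubtful: each present non-standard key $x$ has $\Theta(\phi)$ coordinates (the keys crowding its bin at the moment of $x$'s insertion) whose relocation would flip $x$ to standard, so the natural bound is $O(\phi\,\mathrm{NS}(t))$, not $O(\mathrm{NS}(t))$. Even the $O(1)$-Lipschitz property requires a careful coupling argument tracking how a one-arrival discrepancy in a bin's counter propagates through later standard/non-standard designations; you only gesture at it via ``queue-length monotonicity.''

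The paper sidesteps all of this with a more elementary decoupling, which is the idea missing from your plan: group the $n/\phi$ bins into collections of $\ell=\polylog n$ bins each; by an NA-Chernoff bound plus a union bound, with high probability every collection holds at most $2\ell\phi$ keys at every relevant time; then \emph{condition on which keys hash to which collection}. Under this conditioning the non-standard counts $c_i$ of distinct collections are independent, each is deterministically at most $2\ell\phi=\polylog n$, and the per-key argument you already have gives $\E[c_i]=O(\ell\phi/2^{\phi})$; a Chernoff bound on the independent bounded sum $\sum_i c_i$ then yields $O(n/2^{\phi})$ with high probability. If you wanted to rescue your route instead, you would have to either establish a concentration inequality for (weakly) self-bounding functions under sampling without replacement, or run a Freedman-style martingale over the sequential revelation of the hash values with a conditional-variance bound --- both substantially more work than the paper's conditioning trick.
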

\begin{proof}
Break the bins $R_1, R_2, \ldots, R_{n / \phi}$ into $m = (n / \phi) / \polylog n$ collections $C_1, \ldots, C_{m}$, each consisting of some number $\ell = \polylog n$ of bins. The assignments of keys to bins are negatively correlated, so we can use a Chernoff bound for negatively correlated random variables \cite{impagliazzo2010constructive} to deduce that, with high probability in $n$, the number of keys assigned to any given collection is at most $2 \ell \phi$ at any given moment. 

Let $a_1, \ldots, a_q$ be the set of keys currently present. By a union bound, we have that with high probability in $n$, every chunk $C_i$ had at most $2\ell \phi$ keys assigned to it during each of the time steps in which $a_1, \ldots, a_q$ were inserted. Condition on this being the case, and further condition on which specific keys hash to which  $C_i$s.

Define $c_1, \ldots, c_m$ so that $c_i$ is the number of non-standard keys currently in $C_i$. Having conditioned on which keys hash to which collections, the $c_i$s are independent. We will show that $\E[c_i] = O(\ell \phi / 2^\phi)$, meaning that $\E[\sum_i c_i] = O(n / 2^\phi)$. Since each $c_i$ is guaranteed to be at most $2 \ell \phi = \polylog n$, we can apply a Chernoff bound to the sum $\sum_i c_i$ to deduce that the total number of non-standard elements is $O(n / 2^\phi)$ with high probability.

We conclude the proof by establishing that $\E[c_i] = O(\ell / 2^\phi)$. By linearity of expectation, it suffices to show that any given key $x$ has a $O(1 / 2^\phi)$ probability of being non-standard. This follows by applying a Chernoff bound (again for negatively correlated random variables) to the number of keys that hash to $x$'s bin. When $x$ is inserted, there are at most $2\ell\phi$ keys in $x$'s collection, each of which has a $1 / \ell$ probability of hashing to $x$'s bin, so by a Chernoff bound we have that the probability of there being already $100 \phi - 1$ keys in $x$'s bin is at most $1 / 2^\phi$. This completes the proof.
\end{proof}

We can now bound the total space consumed by the hash table.
\begin{lemma}
The total space consumed by the hash table is $\log \binom{n2^r}{n} + o(n)$ bits, with high probability in $n$.
\label{lem:boundspacelittle}
\end{lemma}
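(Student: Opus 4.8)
The plan is to split the space used by the data structure into four pieces and bound each separately: (i) the array $A$ of counters; (ii) the shared lookup-table metadata that supports the encodings $E_i$ (from Lemma~\ref{lem:Ri}); (iii) the $100\phi$ dynamically-resized hash tables $H_1,\dots,H_{100\phi}$ (built via Theorem~\ref{thm:dictionary}) that hold the pairs $(i,E_i)$; and (iv) the backyard $B$. Write $n_b$ for the current number of non-standard keys and $n_s=n-n_b$ for the number of standard keys, and for each bin $R_i$ let $A_i\le 100\phi$ be its standard count, so $\sum_i A_i=n_s$. Pieces (i) and (ii) are immediate: the array $A$ consists of $n/\phi$ counters of $O(\log\phi)$ bits, costing $O((n/\phi)\log\phi)=o(n)$ bits, and the lookup tables of Lemma~\ref{lem:Ri} cost $O(\sqrt n)$ bits in aggregate.

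For piece (iii), group the bins by their standard count: let $m_j$ be the number of bins $R_i$ with $A_i=j$, so $\sum_j m_j=n/\phi$. The table $H_j$ stores $m_j$ keys from the universe $[n/\phi]$ with $v_j:=\log\binom{\phi 2^s}{j}+O(1)$-bit values, so by Theorem~\ref{thm:dictionary} it uses $\log\binom{n/\phi}{m_j}+m_j v_j+O(m_j\log^{(k+1)}n)$ bits (its hypotheses are met since $\log(n/\phi)=\Theta(\log n)$ and $v_j=o(\log n)$ by Lemma~\ref{lem:Ri}). Summing over $j$: the concavity of $x\mapsto x\log(1/x)$ (equivalently the log-sum inequality) gives $\sum_j\log\binom{n/\phi}{m_j}\le(n/\phi)\log\bigl(O(\phi)\bigr)=o(n)$; the value terms re-index to $\sum_j m_j v_j=\sum_i\log\binom{\phi 2^s}{A_i}+O(n/\phi)$; and since $\phi=\Theta(\log^{(k)}n)$ forces $\log^{(k+1)}n=O(\log\phi)$, the wasted-bits terms sum to $O((n/\phi)\log\phi)=o(n)$. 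So the $H_j$'s cost $\sum_i\log\binom{\phi 2^s}{A_i}+o(n)$ bits; the high-probability concentration of this aggregate follows exactly as in Lemma~\ref{lem:space}.

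For piece (iv), Lemma~\ref{lem:backyardbound} gives $n_b=O(n/2^\phi)$ with high probability, and $B$ stores those keys from $U=[n2^s]$ via Theorem~\ref{thm:dictionary} with $O(\log^{(k+1)}n)$ wasted bits per key, so it uses $\log\binom{n2^s}{n_b}+O(n_b\log^{(k+1)}n)$ bits, where the second term is $O((n/2^\phi)\log\phi)=o(n)$. It remains to combine everything. By Vandermonde's identity $\binom{n2^s}{n_s}=\sum\prod_i\binom{\phi 2^s}{A_i}$, the sum running over all compositions of $n_s$ into $n/\phi$ parts, we get $\sum_i\log\binom{\phi 2^s}{A_i}\le\log\binom{n2^s}{n_s}$. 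Hence the total space is at most $\log\binom{n2^s}{n_s}+\log\binom{n2^s}{n_b}+o(n)$, and the proof is finished by establishing
$$\log\binom{n2^s}{n_s}+\log\binom{n2^s}{n_b}\le\log\binom{n2^s}{n}+o(n).$$
For this, set $N=n2^s$; the identity $\binom{N}{n_s}\binom{N-n_s}{n_b}=\binom{N}{n}\binom{n}{n_b}$ together with $\binom{N}{n_b}\le\binom{N-n_s}{n_b}\bigl(1+\frac{1}{2^s-1}\bigr)^{n_b}$ gives $\log\binom{N}{n_s}+\log\binom{N}{n_b}\le\log\binom{N}{n}+\log\binom{n}{n_b}+o(n)$, and $\log\binom{n}{n_b}=O(n_b\phi)=O((n/2^\phi)\phi)=o(n)$ since $\phi=\omega(1)$, while $s=\omega(1)$ kills the other error term. (Lemma~\ref{lem:binom} could be used in place of the explicit ratio estimates.)

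The main obstacle is precisely this last combinatorial accounting. Bounded on its own, the backyard costs roughly $n_b s$ bits for the ``sub-bin identity'' of each overflow key that quotienting cannot strip away, and $n_b s$ is \emph{not} $o(n)$ for the relevant range of $\phi$; the whole scheme works only because this is cancelled, essentially to the bit, by the $n_b s$ bits \emph{saved} from storing just $n_s<n$ standard elements inside the size-$n2^s$ universe. Making that cancellation rigorous — rather than bounding the two $\Theta(n_b s)$ quantities separately — is what forces us to route overflow into a genuinely succinct backyard (Theorem~\ref{thm:dictionary}) and to compare $\binom{n2^s}{n_s}$ against $\binom{n2^s}{n}$ with care; every remaining term in the count is a routine $o(n)$ contribution.
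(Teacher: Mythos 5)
Your proof is correct, and although the structural decomposition (counter array, shared lookup tables, the tables $H_1,\dots,H_{100\phi}$, and the backyard) necessarily coincides with the paper's, the space accounting at the heart of the lemma takes a genuinely different route. The paper converts every binomial term into a linear expression via Stirling (Lemma~\ref{lem:binom}), argues via Jensen's inequality that the worst configuration has all $J_i$ equal and all $A_i$ equal, and then collapses the resulting algebra to $ns + n\log e + o(n)$, converting back with Lemma~\ref{lem:binom} at the end. You instead keep the count exact: Vandermonde's identity shows the per-bin encodings aggregate to at most $\log\binom{n2^s}{n_s}$, and the identity $\binom{N}{n_s}\binom{N-n_s}{n_b}=\binom{N}{n}\binom{n}{n_b}$ together with the ratio bound $\binom{N}{n_b}\le\binom{N-n_s}{n_b}\bigl(1+\frac{1}{2^s-1}\bigr)^{n_b}$ absorbs the backyard term, so the $\Theta(n_b s)$ cancellation that the paper handles implicitly inside its algebra is made explicit; a Jensen/log-sum step survives only in the minor bound $\sum_j\log\binom{n/\phi}{m_j}=O((n/\phi)\log\phi)=o(n)$ for the bin-index keys, which parallels the paper's $\sum_i J_i(\log n-\log J_i)$ computation. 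Your route buys exactness and a clean isolation of where the $\log e$ term and the $n_b s$ savings come from; the paper's buys a single uniform mechanism (Stirling plus Jensen) that it reuses verbatim elsewhere. Two small points: the intermediate claim $\log\binom{n}{n_b}=O(n_b\phi)$ can fail when $n_b$ is much smaller than $n/2^{O(\phi)}$, but the bound you actually need, $\log\binom{n}{n_b}=o(n)$, holds for every $n_b\le O(n/2^\phi)$ by monotonicity of $x\mapsto x\log(en/x)$; and the appeal to Lemma~\ref{lem:space}-style concentration is unnecessary, since once you condition on the high-probability events you already invoke (Lemma~\ref{lem:backyardbound} and the per-structure space guarantees of Theorem~\ref{thm:dictionary}), your accounting is deterministic in the $A_i$'s, $m_j$'s, and $n_b$.
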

\begin{proof}
The array $A$ of counters uses $O((n / \phi) \log \phi) = o(n)$ bits. Let $J_0$ be the number of items in the backyard. Then the backyard uses space
$$\log \binom{n2^s}{J_0} + O(J_0\log^{(k + 1)} J_0)$$ 
bits. By Lemma \ref{lem:backyardbound}, we have $J_0 \le O(n / 2^\phi) = o(n / \log^{(k + 1)} n)$, so the total space consumed by the backyard is at most
$$\log \binom{n2^s}{J_0} + o(n)$$
bits. 

For $i \in [n]$, let $J_i$ be the number of elements in each $H_i$. Notice that $\sum_{i = 1}^n J_i = n / \phi$. Since $H_i$ stores $\log n$-bit keys, stores $\log \binom{\phi 2^s}{A_i} + O(1)$-bit values ,and wastes $O(\log^{(k + 1)} n)$ bits per key, the total space taken by a given $H_i$ is 
$$O(J_i \log^{(k + 1)} n) + \log \binom{n}{J_i} + J_i \log \binom{\phi 2^s}{A_i}$$
bits.\footnote{Here we ignore the $n^{1 - \Omega(1)}$ total bits used for Method of Four Russians and for storing hash functions.}  

Putting the pieces together, the total space consumed by the hash table is
\begin{align*}
\sum_{i = 1}^{100 \phi} O(J_i \log^{(k + 1)} n) + \sum_{i = 1}^{100 \phi} \log \binom{n}{J_i} + \sum_{i = 1}^{n / \phi} \log \binom{\phi 2^s}{A_i} + \log \binom{n2^s}{J_0} + o(n). \\
\end{align*}
Since $\sum_i J_i = n / \phi = o(n / \log^{(k + 1)} n)$, this is at most
\begin{align*}
\sum_{i = 1}^{100 \phi} \log \binom{n}{J_i} + \sum_{i = 1}^{n / \phi} \log \binom{\phi 2^s}{A_i} + \log \binom{n2^s}{J_0} + o(n). \\
\end{align*}
Applying Lemma \ref{lem:binom}, the space is at most
\begin{align*}
& \sum_{i = 1}^{100\phi} J_i(\log n - \log J_i + O(1)) + \sum_{i = 1}^{n / \phi} A_i \left(\log (\phi 2^s) - \log A_i + \log e + o(1)\right) + J_0 \left(\log (n2^s) - \log J_0 + O(1)\right) + o(n)\\
& = \sum_{i = 1}^{100 \phi} J_i(\log n - \log J_i) + \sum_{i = 1}^{n / \phi} A_i \left(\log (\phi 2^s) - \log A_i + \log e\right) + J_0 \left(\log (n2^s) - \log J_0\right) + o(n)\\
& = \sum_{i = 1}^{100 \phi} J_i(\log n - \log J_i) + \sum_{i = 1}^{n / \phi} A_i \left(\log (\phi 2^s) - \log A_i\right) + J_0 \left(\log (n2^s) - \log J_0\right) + n \log e +  o(n).
\end{align*}
By Jensen's inequality, the above quantity is maximized by setting $J_1, \ldots, J_\phi$s to be equal (so $J_i \ge \frac{n}{2\phi^2}$ for all $i$) and by setting  $A_1, \ldots, A_{n /\phi}$ to be equal (so $A_i = (n - J_0) / (n / \phi)$ for all $i$). Thus the number of bits used by the hash table is at most
\begin{align*}
& \left(\sum_i J_i\right) \cdot \left(\log n - \log \frac{n}{2\phi^2}\right) + \left(\sum_i A_i\right) \cdot \left(\log (\phi 2^s) - \log \frac{n - J_0}{n / \phi}\right) + J_0 \left(\log (n2^s) - \log J_0\right) + n \log e +  o(n) \\
& = \frac{n}{\phi} \cdot \left(\log n - \log \frac{n}{2\phi^2}\right) + (n - J_0) \cdot \left(\log (\phi 2^s) - \log \frac{n - J_0}{n / \phi}\right) + J_0 \left(\log (n2^s) - \log J_0\right) + n \log e +  o(n) \\
& = \frac{n}{\phi} \cdot \Theta(\log \phi)+ (n - J_0) \cdot \left(\log (\phi 2^s) - \log \frac{n - J_0}{n / \phi}\right) + J_0 \left(\log (n2^s) - \log J_0\right) + n \log e +  o(n) \\
& = (n - J_0) \cdot \left(\log (\phi 2^s) - \log \frac{n - J_0}{n / \phi}\right) + J_0 \left(\log (n2^s) - \log J_0\right) + n \log e +  o(n) \\
& =  (n - J_0) \left(\log (\phi 2^s) - \log \phi\right) + J_0 \left(\log (n2^s) - \log J_0\right) + n \log e +  o(n) \\
& =  (n - J_0)s + J_0 \left(\log (n2^s) - \log J_0\right) + n \log e +  o(n) \\
& =  ns + J_0 \left(\log n - \log J_0\right) + n \log e +  o(n).
\end{align*}
Since $J_0 = o(n)$, this is
$$ns + n \log e +  o(n),$$
which by Lemma \ref{lem:binom} is 
$$\log \binom{n2^s}{n} + o(n).$$
\end{proof}

Since insertions/deletions take time $O(k)$ and queries take time $O(1)$, the preceding lemma implies Theorem \ref{thm:small}. 

We conclude the section with several simple corollaries. The first corollary extends the theorem to store key-value pairs.

\begin{corollary}
Let $k, n$ be parameters. Let $\phi = \Theta(\log^{(k)} n)$. Let $ U = [2^{\log n + s_1}]$ and $V = [2^{s_2}]$ for some $s_1, s_2$ satisfying $s_1 \ge \omega(1)$ and $s_1 + s_2 \le o(\log n) \cap O(\log n / \phi)$.

There exists a fixed-capacity dictionary that stores up to $n$ keys from $U$ at a time, each of which is associated with a value in $V$; that supports insertions/deletions in time $O(k)$ (with high probability in $n$); that supports queries in time $O(1)$; and that uses a total of 
$$\log \binom{|U|}{n} + n s_2 + o(n)$$
bits of space (with high probability in $n$).
\label{cor:smalla}
\end{corollary}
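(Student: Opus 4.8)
The plan is to re-run the construction and analysis of Theorem~\ref{thm:small} essentially verbatim, threading the $s_2$-bit value of each key through every component, rather than inventing anything new. As before, use a permutation hash function on $U$ (Section~\ref{sec:prelim}) so that the keys form a random subset of $U$, bucket each key into one of $n/\phi$ bins $R_1,\dots,R_{n/\phi}$ by the first $\log(n/\phi)$ bits of (the hash of) its key, and maintain the same array $A$ of $O(\log\phi)$-bit counters together with the same standard/non-standard dichotomy. The only changes are: (i) the bin encoding $E_i$ of Lemma~\ref{lem:Ri} now stores, in addition to the size-$A_i$ subset of $[\phi 2^{s_1}]$ describing which standard keys lie in $R_i$, the $A_i$ associated values written out in canonical key-order, so $|E_i| = \log\binom{\phi 2^{s_1}}{A_i} + A_i s_2 + O(1)$ bits; (ii) the backyard $B$ is a $(\log n + s_1,\, s_2)$-dictionary built from Theorem~\ref{thm:dictionary} with parameter $k+1$ (legal since $\log n + s_1 = \Theta(\log n)$ and $s_2 = o(\log n) = O(\log n)$), so it answers key$\to$value lookups directly; and (iii) the hash tables $H_1,\dots,H_{100\phi}$ now store these (larger, but still fixed-per-$j$) values $E_i$. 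A query for a key $x$ hashes to its bin, decodes $E_i$ in $O(1)$ time via the Method of Four Russians, and returns the matching value if present and otherwise defers to $B$; insertions/deletions are as in Theorem~\ref{thm:small} and still cost $O(k)$ time, and queries cost $O(1)$ time.

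Before touching the space bound I would dispose of the one hypothesis-level subtlety: the Method of Four Russians must still apply to $E_i$, which requires $|E_i| = o(\log n)$. Since $|E_i| \le O\!\big(\phi(\log\phi + s_1 + s_2)\big) + O(1)$ and $\phi\log\phi \le \log\log n\cdot\log\log\log n = o(\log n)$, it is enough that $\phi(s_1+s_2) = o(\log n)$, i.e.\ $s_1+s_2 = o(\log n/\phi)$. This is handled by exactly the reduction already used inside the proof of Theorem~\ref{thm:small}: if $s_1+s_2 = \Theta(\log n/\phi)$ we instead run the whole argument with $\phi' = \Theta(\log^{(k+1)} n)$ in place of $\phi$, which turns the hypothesis $s_1+s_2 \le O(\log n/\phi)$ into $s_1+s_2 = o(\log n/\phi')$ while keeping the per-operation cost $O(k)$.

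The heart of the argument is then re-deriving the space bound of Lemma~\ref{lem:boundspacelittle} with the value bits carried along. Lemma~\ref{lem:backyardbound} still yields $J_0 = O(n/2^\phi)$ non-standard elements, so $B$ occupies $\log\binom{n2^{s_1}}{J_0} + J_0 s_2 + O(J_0\log^{(k+1)} J_0)$ bits; each $H_j$ occupies $\log\binom{n}{J_j} + J_j|E_i| + O(J_j\log^{(k+1)} n)$ bits with $\sum_j J_j = n/\phi$; and the counters cost $o(n)$. Summing, applying Lemma~\ref{lem:binom} to each binomial, and maximizing by Jensen exactly as in Lemma~\ref{lem:boundspacelittle}, the new contribution appears as $(n-J_0)s_2$ from the bins plus $J_0 s_2$ from the backyard, which add to exactly $n s_2$ — mirroring the way $(n-J_0)s$ and $J_0 s$ combine into $ns$ in the original proof. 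Everything else collapses to $n s_1 + n\log e + o(n)$ as before, so the total is $n s_1 + n s_2 + n\log e + o(n)$, which by Lemma~\ref{lem:binom} (using $s_1 \ge \omega(1)$, so $n2^{s_1} = \omega(n)$) equals $\log\binom{n2^{s_1}}{n} + n s_2 + o(n) = \log\binom{|U|}{n} + n s_2 + o(n)$, as claimed. I do not expect a conceptual obstacle here; the only real work is the bookkeeping of re-running the Jensen-based calculation, checking that the value terms combine cleanly into $ns_2$ and that the residual terms ($J_0\log(n/J_0)$, $(n/\phi)\cdot O(\log\phi)$, $O((n/\phi)\log^{(k+1)} n)$, the counter array, etc.) all remain $o(n)$.
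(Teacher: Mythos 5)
Your proposal is correct and follows essentially the same route as the paper, whose proof of Corollary~\ref{cor:smalla} is simply to rerun the Theorem~\ref{thm:small} construction with the $s_2$-bit values attached to the keys. You in fact spell out more detail than the paper does (the enlarged $E_i$ encodings, the value-carrying backyard, the $\phi\mapsto\phi'$ trick to keep $|E_i|=o(\log n)$, and the $ (n-J_0)s_2 + J_0 s_2 = n s_2$ bookkeeping), and these details check out.
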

\begin{proof}
This follows from the same sequence of arguments as before, but now we associate values with keys as well.
\end{proof}

The second corollary extends the theorem to support dynamic resizing. Note that, since we are interested in keys whose lengths are very close to $n$, it does not make sense to talk about $n$ changing by a large factor (indeed, this would take us out of the small-key regime). Thus, we focus on $n$ in a range $[N / 2, N]$ for some $N$. 

\begin{corollary}
Let $k, n$ be parameters. Let $\phi = \Theta(\log^{(k)} n)$. Let $ U = [2^{\log n + s_1}]$ and $V = [2^{s_2}]$ for some $s_1, s_2$ satisfying $s_1 \ge \omega(1)$ and $s_1 + s_2 \le o(\log n) \cap O(\log n / \phi)$.

There exists a fixed-capacity dictionary that stores up to $N$ keys from $U$ at a time, each of which is associated with a value in $V$; that supports insertions/deletions in time $O(k)$ (with high probability in $n$); that supports queries in time $O(1)$; and that (with high probability in $n$) uses a total of 
$$\log \binom{|U|}{n} + n s_2 + o(n)$$
bits of space if $n \in [N / 2, N]$ is the number of keys currently present.
\label{cor:smalla2}
\end{corollary}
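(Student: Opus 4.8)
The plan is to reuse, essentially verbatim, the construction behind Corollary~\ref{cor:smalla}, but with every parameter instantiated in terms of the upper bound $N$: there are $N/\phi$ bins $R_1,\ldots,R_{N/\phi}$ (with $\phi=\Theta(\log^{(k)}N)$, rounded to a power of two and WLOG at most $\log\log N$), a counter array $A$ with entries in $[0,100\phi]$, a backyard $B$ for non-standard keys, and dictionaries $H_1,\ldots,H_{100\phi}$ holding the per-bin encodings $E_i$. The crucial observation is that $B$ and the $H_j$ are already the dynamically resizable dictionaries of Theorem~\ref{thm:dictionary}, so they automatically shrink and grow with the number of elements they hold, while the bins, the counter array, and the encodings $E_i$ do not depend on the current count $n$ at all. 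Hence nothing needs to be rebuilt as $n$ moves within $[N/2,N]$, and the time bounds carry over with no change (a query does one counter read, one lookup in $H_{A_i}$ or in $B$, and one $O(1)$-time Four-Russians decode of $E_i$; an insertion or deletion updates $A_i$, moves one encoding between $H_{A_i}$ and $H_{A_i\pm1}$, and promotes/demotes at most one element between $B$ and a bin, each an $O(k)$-time operation on a Theorem~\ref{thm:dictionary} dictionary). So the entire content of the proof is re-checking that the frozen bin count costs nothing in space when $n<N$.

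First I would re-establish the structural guarantees of Lemmas~\ref{lem:Ri} and~\ref{lem:backyardbound} in the new regime. With $N/\phi$ bins and $n\in[N/2,N]$ keys, each bin has expected load $n\phi/N\in[\phi/2,\phi]$, so the negatively-correlated Chernoff bounds of Lemma~\ref{lem:backyardbound} (and those used inside $B$ and the $H_j$ for local-query-router and facility loads) still apply and give at most $J_0=O(n/2^{\Omega(\phi)})=o(n/\log^{(k+1)}n)$ non-standard keys at any moment, w.h.p.; and $E_i$ is still encodable in $o(\log n)$ bits by Lemma~\ref{lem:Ri}, since $A_i\le 100\phi$ and $s_1+s_2=o(\log n/\phi)$ after the usual WLOG reduction. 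Consequently the counter array costs $O((N/\phi)\log\phi)=o(n)$ bits, and $B$ together with the $H_j$ hold $O(N/\phi)+J_0=o(n/\log^{(k+1)}n)$ key/value pairs at a time, so the $O(\log^{(k+1)}n)$ wasted bits per key with which they are parameterized sum to $o(n)$.

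Next I would redo the computation of Lemma~\ref{lem:boundspacelittle} with $N/\phi$ in place of $n/\phi$. The dominant contribution is
$$\sum_{i=1}^{N/\phi}\Bigl(\log\binom{\phi 2^{s_1}}{A_i}+s_2A_i\Bigr)+\log\binom{|U|}{J_0}+s_2J_0+o(n),$$
where $\sum_iA_i=n-J_0$. The $s_2$-terms sum to $ns_2-o(n)$, matching the $ns_2$ in the target. For the first sum, Lemma~\ref{lem:binom} with Jensen's inequality (maximized when the $A_i$ are all equal to $\phi(n-J_0)/N\in[\phi/3,\phi]$), after absorbing the $J_0$-dependent pieces (which are $o(n)$ since $J_0\log(n/J_0)=O(n\phi/2^{\Omega(\phi)})=o(n)$ and $\log\binom{|U|}{J_0}$ is likewise $o(n)$), yields
$$\sum_{i=1}^{N/\phi}\log\binom{\phi 2^{s_1}}{A_i}=ns_1+n\log\tfrac{N}{n}+n\log e+o(n).$$
Adding everything, the total is $n(s_1+s_2)+n\log(N/n)+n\log e+o(n)$, which by Lemma~\ref{lem:binom} equals $\log\binom{|U|}{n}+ns_2+o(n)$ once one writes $|U|=2^{\log N+s_1}$ and notes $\log\binom{|U|}{n}=ns_1+n\log(N/n)+n\log e\pm o(n)$ for $n\in[N/2,N]$ (using $2^{s_1}=\omega(1)$).

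The hard part will be exactly this space bookkeeping: one must see that freezing the bin count at $N/\phi$ rather than $n/\phi$ is harmless, which works only because the benchmark $\log\binom{|U|}{n}$ is measured against a universe whose size is pinned to $N$ and therefore already carries the $n\log(N/n)$ of ``slack'' that the under-filled bins generate; and one must, as in Lemma~\ref{lem:boundspacelittle}, verify that every backyard-size ($J_0$) term and every Stirling/Jensen error term stays $o(n)$ uniformly over the whole range $n\in[N/2,N]$. Everything else—correctness, the $O(1)$/$O(k)$ time bounds, and the high-probability guarantees—is inherited directly from Corollary~\ref{cor:smalla} and Theorem~\ref{thm:dictionary}.
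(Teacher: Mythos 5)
Your proposal is correct and follows essentially the same route as the paper: the paper's own proof simply observes that every component except the $o(n)$-bit counter array is already a dynamically-resizable dictionary and asserts that the space computation of Theorem~\ref{thm:small} extends to any $n \in [N/2, N]$, which is exactly the extension you carry out explicitly (with the bin count frozen at $N/\phi$ and the $n\log(N/n)$ slack absorbed into $\log\binom{|U|}{n}$ for a universe of size pinned to $N$). Your detailed bookkeeping of that slack and of the $J_0$-dependent error terms is just a fleshed-out version of what the paper leaves implicit, so no further changes are needed.
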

\begin{proof}
This guarantee is already true of our current data structure. Indeed, every component of the data structure except for the array $A$ of counters is a dynamically-resizable hash table. The array $A$ of counters takes $o(n)$ space, so we only need to worry about the total space consumed by the dynamically-resizable hash tables. The proof of Theorem \ref{thm:small} immediately extends to arbitrary $n \in [N / 2, N]$ to bound the total space by
$$\log \binom{u2^{s_1}}{n} + n s_2 + o(n)$$
bits. 
\end{proof}

\subsection{Constructing optimal filters}\label{sec:filters}

In this section, we apply our results to the problem of constructing space-efficient filters. A filter has three parameters: a maximum capacity $n $, and a false positive rate $\epsilon$ (which we will assume is an inverse power of two), and a universe $U$ of keys. A filter must support insertions/deletions/queries on a dynamic set $S \subseteq U$ of up to $n$ keys. Unlike a dictionary, however, a filter is permitted to sometimes return false positives on queries: if a key $x \not\in S$ is queried, the filter must correctly return that $x \not\in S$ with probability $1 - \epsilon$, but it is permitted to incorrectly return that $x \in S$ with probability $\epsilon$. 

Information theoretically, a static filter (i.e., a filter that supports only queries) must use at least $n  \log \epsilon^{-1} $ bits. It is known \cite{pagh2013approximate} that there exist values of $\epsilon$ for which a dynamic filter must use at least
$n  \log \epsilon^{-1} + \Omega(n)$ bits, but it remains an open question whether there exists a dynamic filter that uses at most $n \log \epsilon^{-1}  + O(n)$ bits for all $\epsilon$. We now establish that, as long as $\log \epsilon^{-1}$ is slightly sublogarithmic in $n$, then such a dynamic filter does, in fact, exist. We also give extremely succinct filters for the setting where $\epsilon^{-1} = \Theta(\log n)$, bringing the number of wasted bits per key to be the same as what we have achieved for the dictionary problem.

\paragraph{Reducing the filter problem to the dictionary problem.}
We begin by reviewing the standard technique for constructing a filter using a dictionary (see, e.g., \cite{carter1978exact, pagh2005optimal, liu2020succinct, arbitman2010backyard, bercea2020dynamic,bender2018bloom}). We hash keys $x\in U $ to $(\log n + \log \epsilon^{-1})$-bit fingerprints $f(x)$. We store the fingerprints $\{f(x) \mid x \in S\}$ in a hash table, and to answer a query for a key $x$, we simply check whether $f(x)$ is in the hash table. If a key $x \not\in S$ is queried, then the probability of a false positive is at most
$$\sum_{y \in S} \Pr[f(x) = f(y)] = n \cdot \frac{\epsilon}{n} = \epsilon.$$

Notice, however, that the fingerprints $\{f(x) \mid x \in S\}$ form a \emph{multi-set}, rather than a set, so we cannot actually store them directly in a hash table. Our solution is to store one copy of each fingerprint in a hash table $\mathcal{A}$, and then to store any duplicate fingerprints in a secondary hash table $\mathcal{B}$ that is capable of supporting multi-sets. Whenever we insert a new key $x$, we first try to place $f(x)$ in $\mathcal{A}$, and if it is already there, we place it in $\mathcal{B}$; whenever we delete a key $x$, we first try to delete (one copy of) $f(x)$ from $\mathcal{B}$, and if it is not there, we delete it from $\mathcal{A}$; and whenever we query a key $x$, we can just check whether $f(x) \in \mathcal{A}$. 

The hash table $\mathcal{B}$ will be significantly smaller than $\mathcal{A}$, meaning that it does not have to be highly space efficient. Thus we are able to use of past work on multi-set dictionaries to implement $\mathcal{B}$:

\begin{lemma}[Theorem 1 of \cite{bercea2020dynamic}]
Let $\epsilon^{-1} \in [\omega(1), O(\log n)]$. There exists a high-probability constant-time hash table that stores an arbitrary multi-set of $m$ keys in $(1 + o(1)) m \log \epsilon^{-1}$ bits.
\label{lem:backyardfilter}
\end{lemma}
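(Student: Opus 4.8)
This statement is quoted from \cite{bercea2020dynamic} and is used here as a black box, but it is worth recording how one would establish it, since the ingredients overlap with those developed in Section~\ref{sec:metadata}. The plan is to reduce the multiset problem to an ordinary (set) dictionary problem by tagging each occurrence with a \emph{copy index}: writing $M$ for the multiset and $(c_v)_{v\in[\epsilon^{-1}]}$ for its count vector (so $\sum_v c_v=m$), represent $M$ by the set $M'=\{(v,j): 1\le j\le c_v\}$ of $m$ pairs over the universe $[\epsilon^{-1}]\times[m]$. An insertion of $v$ appends $(v,c_v{+}1)$; a deletion of one copy of $v$ removes $(v,c_v)$; a membership query for $v$ just asks whether $(v,1)\in M'$. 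Each of these costs $O(1)$ accesses to $M'$ plus the ability to read and update $c_v$ in constant time, for which we keep a tiny auxiliary array of the (at most $\epsilon^{-1}=O(\log n)$) nonzero counts, each $O(\log\log n)$ bits, inside a single $\polylog(n)$-bit mini-array of Section~\ref{sec:metadata}; this auxiliary array is lower-order in space.

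The space bound then follows from the structure of $M'$. Because the support of $M'$ is a prefix $\{1,\dots,c_v\}$ in each coordinate $v$, the information content of $M'$ equals that of the count vector, namely $\log\binom{m+\epsilon^{-1}-1}{\epsilon^{-1}-1}$; since $\epsilon$ is an inverse power of two and $\epsilon^{-1}=\omega(1)$, this is at most $m\log\epsilon^{-1}+O(m)=(1+o(1))\,m\log\epsilon^{-1}$. So it suffices to store $M'$ in a dictionary that wastes $O(1)$ bits per key and supports constant-time operations. The cleanest route is to keep $M'$ as a sorted sequence, chop it into $\polylog(n)$-bit blocks held in a shallow B-tree over machine words, and use Method-of-Four-Russians lookup tables (as in Section~\ref{sec:metadata}) for constant-time block operations, choosing the per-block encoding to be entropy-sensitive so that the blocks aggregate to $m\log\epsilon^{-1}+O(m)$ bits; alternatively, since the universe has size $\epsilon^{-1}m\le O(n\log n)=n^{1+o(1)}$, one may invoke a suitably parameterized version of Theorem~\ref{thm:small}/Corollary~\ref{cor:smalla2} directly.

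The delicate point---and the one I expect to be the main obstacle---is obtaining the constant $1+o(1)$ rather than just $O(1)$: a naive fixed-width layout of $M'$ over the universe $[\epsilon^{-1}m]$ spends an extra $\log m$ bits per key, so one must exploit the prefix structure of $M'$ (equivalently, quotient away the $[m]$-coordinate), and one must treat the two extreme regimes separately---when $m\ll\epsilon^{-1}$ a flat sorted list of the $m$ raw values already fits in $m\log\epsilon^{-1}$ bits, whereas when $m\gg\epsilon^{-1}$ the count vector itself is the efficient representation. The remaining work is the routine deamortization of B-tree splits/merges and of rebuilding the shared lookup tables as $n$ drifts. Finally, the ``with high probability in $n$'' qualifier is inherited from the setting in which this structure is used inside a filter (the hashing that produces fingerprints and buckets); the multiset structure itself operates over an explicit $O(\log n)$-size universe and can be made deterministic.
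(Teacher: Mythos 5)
The paper does not prove this statement at all: it is imported verbatim as Theorem~1 of \cite{bercea2020dynamic} and used as a black box, so there is no internal proof to compare your sketch against. Judged on its own terms, your sketch has a genuine gap: it solves a different (and much easier) problem than the one the lemma is needed for. You read the multiset as living over the universe $[\epsilon^{-1}]$ of size $O(\log n)$ (your count vector $(c_v)_{v\in[\epsilon^{-1}]}$, and your closing remark that the structure ``operates over an explicit $O(\log n)$-size universe and can be made deterministic''). Under that reading the statement is essentially trivial---one could just keep the $\epsilon^{-1}=O(\log n)$ counters explicitly, in $O(\log n \cdot \log m)=o(n)$ bits, with deterministic constant-time operations---and none of the B-tree/lookup-table machinery you invoke would be needed; the ``high-probability constant-time'' qualifier would also be vacuous. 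But that is not what $\mathcal{B}$ must do in Section~\ref{sec:filters}: it stores, \emph{exactly}, a multiset of $(\log n+\log\epsilon^{-1})$-bit fingerprints, i.e., keys from a universe of size $n\epsilon^{-1}=\poly(n)$, supporting exact insertion, deletion, and membership in constant time. Achieving $(1+o(1))\log\epsilon^{-1}+O(1)$ bits per key in that setting is not an entropy triviality: one must quotient/hash away the $\log n$-bit part of each fingerprint and handle collisions and duplicate keys, which is precisely the content of the Bercea--Even construction and the reason the guarantee is only with high probability. Note also that your copy-index reduction, applied to the true universe, would store $m$ distinct pairs from $[n\epsilon^{-1}]\times[m]$ and hence cost roughly $m\log(n\epsilon^{-1})$ bits, far exceeding the claimed bound, so the reduction does not transfer.

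If you wanted to prove the lemma from the paper's own toolkit rather than cite it, the obstacle you would actually have to address is duplicates: Theorem~\ref{thm:dictionary} and the small-key machinery of Section~\ref{sec:small} assume distinct keys (the local query routers of Section~\ref{sec:metadata} rely on smallest unique prefixes of hashed keys and on $f$ taking distinct values, and the quotienting analysis treats the stored keys as a sample without replacement), so a multiset cannot be fed into them directly. A correct route is either to redo the construction of \cite{bercea2020dynamic}, or to de-duplicate by storing (fingerprint, multiplicity) pairs and then argue---via a concentration bound on the number of repeated fingerprints and the sizes of their multiplicities, in the spirit of Lemma~\ref{lem:duplicatekeys}---that the multiplicity fields occupy only lower-order space. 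The prefix-structure observation you make for the small-universe reading does not substitute for either of these steps.
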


In fact, Lemma \ref{lem:backyardfilter} is stronger than what we need---it would suffice for us to have a multi-set dictionary using $o(m \epsilon^{-1})$ bits. Indeed, we can bound $m = |\mathcal{B}|$ by $O(\epsilon n + \log n)$ with high probability:

\begin{lemma}
At any given moment $D = |\{x \in S \mid f(x) = f(y) \text{ for some }y \in S\setminus\{x\}\}|$ satisfies $D = O(\epsilon n + \log n)$ with high probability in $n$.
\label{lem:duplicatekeys}
\end{lemma}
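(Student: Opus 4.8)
The plan is to pass from the somewhat awkward quantity $D$ to the cleaner statistic $P := |\{\{x,y\}\subseteq S : f(x)=f(y)\}|$, the number of colliding fingerprint pairs, and then to show $P = O(\epsilon n + \log n)$ with high probability in $n$. The reduction is immediate: writing $c_v = |\{x\in S : f(x)=v\}|$ for the number of keys hashing to fingerprint $v$, we have $D = \sum_v c_v\,\mathbb{I}[c_v\ge 2]$ and $P = \sum_v \binom{c_v}{2}$, and since $c \le c(c-1) = 2\binom{c}{2}$ for every integer $c\ge 2$, it follows that $D \le 2P$. Thus it suffices to bound $P$.

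For the expectation: the fingerprints $f(x)$ are $(\log n + \log\epsilon^{-1})$-bit strings, so the fingerprint range has size $n\epsilon^{-1}$, and — using that $f$ is drawn from the $\poly(n)$-independent family simulated in Section~\ref{sec:prelim} (here even pairwise independence and near-uniformity suffice) — we get $\Pr[f(x)=f(y)] = (1\pm o(1))\,\epsilon/n$ for distinct $x,y$. Since $|S|\le n$, linearity of expectation yields $\E[P] \le (1+o(1))\binom{|S|}{2}\epsilon/n \le (1+o(1))\,\epsilon n/2$.

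The technical heart is concentration of $P$ around this (possibly small) mean. I would establish it by the method of moments: for an even integer $t = \Theta(\log n)$, expand $\E[(P-\E P)^t]$ as a sum over $t$-tuples of pairs, group the terms according to the "dependency graph" that the chosen pairs form on their key-endpoints, and evaluate each group using the $\poly(n)$-independence of $f$ (legitimate, since each term involves at most $2t = O(\log n)$ distinct keys). The key point is that every non-tree component of such a dependency graph contributes a negligible $n^{-\Omega(1)}$ factor, so the computation collapses to the bound one would obtain for a $\mathrm{Poisson}(\E P)$ variable, namely $\E[(P-\E P)^t] \le O(t\,\E P)^{t/2} + O(t)^{t}$. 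Applying Markov's inequality at scale $\lambda = \Theta(\sqrt{t\,\E P} + t) = O(\E P + \log n) = O(\epsilon n + \log n)$ (the first estimate is just AM--GM) then gives $\Pr[P - \E P \ge \lambda] \le 2^{-\Omega(t)}$, which we can make $1/\poly(n)$ with any desired degree by choosing the constant in $t = \Theta(\log n)$. As an alternative route to the same bound, one could reveal the hash values $f(x_1),\dots,f(x_n)$ one key at a time: the number of new collisions created by the $i$-th key has conditional expectation at most $\epsilon$ and value at most the current maximum bucket load, which is $O(\log n)$ with high probability by a standard balls-into-bins bound, so Freedman's martingale inequality (truncated on the rare overflow event) gives $P = O(\epsilon n + \polylog n)$, which is $O(\epsilon n + \log n)$ for every $\epsilon$ for which the lemma is invoked.

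The main obstacle is precisely this concentration step. $D$, equivalently $P$, is a sum of $\binom{|S|}{2}$ collision indicators carrying strong positive correlations whenever two pairs share a key, and the mean $\epsilon n$ can be far below the naive bounded-differences fluctuation scale, so neither a plain McDiarmid bound nor Chebyshev's inequality is on its own strong enough; what makes the argument work is exploiting that heavy buckets are rare — equivalently, that the collision count is sub-gamma with variance proxy $\E P$ — which is exactly what the moment (or truncated-martingale) computation delivers. Everything else, namely the inequality $D\le 2P$ and the expectation bound, is routine.
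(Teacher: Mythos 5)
Your reduction $D \le 2P$ and the expectation bound are fine, but the concentration step for $P=\sum_v\binom{c_v}{2}$ --- which you yourself flag as the heart of the argument --- is where the proposal has a genuine gap, and it is a self-inflicted one. Two of the claims you lean on are not correct as stated. First, $P$ is \emph{not} sub-gamma with variance proxy $\E P$ and constant scale: a single fingerprint value of multiplicity $k$ contributes $\binom{k}{2}$ to $P$ and occurs with probability roughly $n\epsilon^{k-1}/k!$; taking, say, $\epsilon=n^{-0.9}$ and $k\approx 2\sqrt{\E P}$, this probability is about $\exp(-\Theta(n^{0.05}\log n))$, vastly larger than the sub-gamma prediction $\exp(-\Omega(n^{0.1}))$ at deviation $\lambda\approx\E P$. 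Second, in the moment expansion the clustered (``non-tree'') configurations are \emph{not} $n^{-\Omega(1)}$-negligible: the tuples in which all $t$ pairs sit inside one bucket of load $m$ contribute on the order of $n\,\epsilon^{m-1}m^{2t}$, and for instance with $\epsilon^{-1}=\log\log n$ and $t=\Theta(\log n)$ this is $t^{(2-o(1))t}$, exceeding your $O(t)^t$ term; such terms are absorbed only by the $O(t\,\E P)^{t/2}$ term, and verifying that this always happens requires a case analysis relating $\epsilon$, the largest plausible bucket load, and $\E P$ that your sketch does not contain (so the computation does not simply ``collapse to the Poisson bound''). Your martingale fallback, truncating increments at the global $O(\log n)$ max-load bound, yields only $O(\epsilon n+\log^2 n)$, which is weaker than the lemma as stated (adequate for the paper's application, as you note, but not a proof of the lemma).

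The paper's proof avoids all of this by bounding a smaller statistic with bounded increments rather than the pair count. Order the keys as $x_1,\dots,x_n$ and let $Y_i$ indicate that $f(x_i)=f(x_j)$ for some $j<i$. A bucket of multiplicity $c\ge 2$ contributes $c$ to $D$ and $c-1$ to $\sum_i Y_i$, so $D\le 2\sum_i Y_i$; and no matter what the earlier fingerprints are, $\Pr[Y_i=1]\le (i-1)\epsilon/n\le\epsilon$, so $\sum_i Y_i$ is stochastically dominated by a sum of independent Bernoulli$(\epsilon)$ variables and a standard Chernoff bound immediately gives $D=O(\epsilon n+\log n)$ with high probability. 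Counting, per bucket, the colliding keys beyond the first (indicators bounded by $1$) instead of colliding pairs (which can be quadratic in a bucket's load) is precisely what eliminates the heavy-tail difficulty you were fighting; if you want to keep your framing, note that $\sum_i Y_i\le P$, so the paper's statistic is the ``rooted'' and better-behaved version of yours.
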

\begin{proof}
Let $x_1, \ldots, x_n$ denote the keys in $S$, and let $Y_i$ be the 0-1 random variable indicating whether $f(x_i) = f(x_j)$ for some $j < i$. Notice that $D \le 2 \sum_i Y_i$. 

The $Y_i$s are independent, and each $Y_i$ satisfies $\Pr[Y_i = 1] \le \epsilon$. Therefore we can apply a Chernoff bound to deduce that $D = O(\epsilon n + \log n)$ with high probability in $n$.
\end{proof}

By Lemma \ref{lem:duplicatekeys}, if $\epsilon = o(1)$, then the total number of bits used by $\mathcal{B}$ is $o(n)$ with high probability in $n$. On the other hand, if $\mathcal{A}$ is implemented using a hash table that wastes $r$ bits per key, then it uses a total of at most
$$\log \binom{n\epsilon^{-1}}{n} + nr$$
bits. If we again assume that $\epsilon = o(1)$, then by Lemma \ref{lem:binom}, this is equal to 
$$ n \log \epsilon^{-1}  + nr + n \log e$$
bits.

Applying Theorems \ref{thm:largekeys} and \ref{thm:small} to construct $\mathcal{A}$, we arrive at the following result:
\begin{theorem}
Let $\epsilon^{-1} \in [\omega(1), O(\log n)]$ be an inverse power of two, and let $k \in [\log^* n]$ be a parameter. One can construct a filter that has false-positive rate at most $\epsilon$, that supports queries in constant time, that supports insertions/deletions in time $O(k)$, and that uses space at most
$$\begin{cases}
n \log \epsilon^{-1}  + n\log e + o(n) & \text{ if  } \epsilon^{-1} \le \frac{\log n}{\log^{(k)} n} \text{ and } \log^{(k)} n = \omega(1) \\
n \log \epsilon^{-1}  + O(n) + O(n \log^{(k)} n) & \text{ otherwise}
\end{cases}$$
bits. The time and space guarantees hold for each operation with high probability in $n$.
\label{thm:filter}
\end{theorem}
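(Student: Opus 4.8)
The plan is to realize the fingerprint-based reduction sketched just above the statement, feeding the dictionaries of Theorems~\ref{thm:dictionary}, \ref{thm:largekeys} and~\ref{thm:small} into the primary table and Lemma~\ref{lem:backyardfilter} into the overflow table. Concretely: using a $\poly(n)$-independent hash family from Section~\ref{sec:prelim}, hash each key $x\in U$ to a $(\log n+\log\epsilon^{-1})$-bit fingerprint $f(x)$, so the fingerprints live in the universe $U'=[\,n\epsilon^{-1}\,]$. Write $\mathcal{A}$ for a dictionary over $U'$ that stores \emph{one} copy of each distinct fingerprint (and no values), and $\mathcal{B}$ for a multiset dictionary holding the duplicate fingerprints. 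A query for $x$ just tests $f(x)\in\mathcal{A}$, which is $O(1)$ time; an insertion/deletion of $x$ touches $\mathcal{A}$ and $\mathcal{B}$. Correctness of the false-positive rate is the textbook union bound: for $x\notin S$, $\Pr[f(x)=f(y)\text{ for some }y\in S]\le\sum_{y\in S}\Pr[f(x)=f(y)]\le n\cdot(\epsilon/n)=\epsilon$, which only requires pairwise independence of $f$.

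First I would dispatch $\mathcal{B}$. By Lemma~\ref{lem:duplicatekeys}, at any moment the number of duplicate fingerprints is $D=O(\epsilon n+\log n)$ w.h.p.\ in $n$, and by Lemma~\ref{lem:backyardfilter} these fit in $(1+o(1))D\log\epsilon^{-1}$ bits. Since $\epsilon^{-1}=\omega(1)$ forces $\epsilon\log\epsilon^{-1}=o(1)$ and $\epsilon^{-1}=O(\log n)$ forces $\log\epsilon^{-1}\cdot\log n=O(\log^2 n)$, the table $\mathcal{B}$ adds only $o(n)$ bits w.h.p., and its operations are constant-time. For $\mathcal{A}$ I would split into the two cases of the theorem, writing $s=\log\epsilon^{-1}$ so that $|U'|=n2^{s}$. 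In Case~1 ($\epsilon^{-1}\le\log n/\log^{(k)}n$ with $\log^{(k)}n=\omega(1)$; note this forces $k\ge2$, since otherwise $\epsilon^{-1}\le1$ contradicts $\epsilon^{-1}=\omega(1)$), we have $s\in\omega(1)\cap o(\log n)$ and $s\le\log\log n\le O(\log n/\log^{(k)}n)$, so Theorem~\ref{thm:small} applies with $\phi=\Theta(\log^{(k)}n)$ and builds a fixed-capacity dictionary for $n$ keys from $U'$ in $\log\binom{n2^{s}}{n}+o(n)$ bits; since $2^{s}=\omega(1)$, Lemma~\ref{lem:binom} gives $\log\binom{n2^{s}}{n}=ns+n\log e\pm o(n)=n\log\epsilon^{-1}+n\log e\pm o(n)$, so $\mathcal{A}$ and $\mathcal{B}$ together use $n\log\epsilon^{-1}+n\log e+o(n)$ bits. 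In Case~2, since $\epsilon^{-1}=O(\log n)$ always, the fingerprints are $\log n+O(\log\log n)=\Theta(\log n)$ bits, so Theorem~\ref{thm:dictionary} (or, more generally, Theorem~\ref{thm:largekeys}) with $\lambda=0$ builds $\mathcal{A}$ in $\log\binom{n2^{s}}{n}+O(n\log^{(k)}n)$ bits, and the second bound of Lemma~\ref{lem:binom} gives $\log\binom{n2^{s}}{n}\le n\log\epsilon^{-1}+n\log e+o(n)=n\log\epsilon^{-1}+O(n)$; hence the total is $n\log\epsilon^{-1}+O(n)+O(n\log^{(k)}n)$. In both cases insertions/deletions cost $O(k)$ and queries cost $O(1)$, inherited from the underlying dictionaries, and all guarantees hold per-operation w.h.p.\ in $n$.

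Almost everything here is bookkeeping on top of the earlier theorems; the one place that needs genuine care is checking, in Case~1, that the fingerprint expansion $s=\log\epsilon^{-1}$ simultaneously satisfies all three hypotheses of Theorem~\ref{thm:small}---$s=\omega(1)$ from $\epsilon^{-1}=\omega(1)$, $s=o(\log n)$ from $\epsilon^{-1}=O(\log n)$, and $s\le O(\log n/\phi)$ from $\epsilon^{-1}\le\log n/\log^{(k)}n$ with $\phi=\Theta(\log^{(k)}n)$---and then tracking the additive $n\log e$ term through the binomial estimate of Lemma~\ref{lem:binom} precisely enough that it lands as an honest $n\log e$ summand rather than being absorbed into an $O(n)$. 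I would also confirm that a fixed-capacity $\mathcal{A}$ is legitimate: the filter is specified with a fixed maximum capacity $n$, so no dynamic-resizing apparatus is required for this theorem, although Corollary~\ref{cor:smalla2} could be substituted to make the space track the current number of keys.
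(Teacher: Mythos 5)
Your proposal is correct and follows essentially the same route as the paper: the fingerprint reduction with a primary dictionary $\mathcal{A}$ (built from Theorem~\ref{thm:small} in the small-$\log\epsilon^{-1}$ case and Theorem~\ref{thm:dictionary}/\ref{thm:largekeys} otherwise) plus a duplicate-holding multiset table $\mathcal{B}$ bounded via Lemmas~\ref{lem:duplicatekeys} and~\ref{lem:backyardfilter}, with Lemma~\ref{lem:binom} converting $\log\binom{n\epsilon^{-1}}{n}$ into $n\log\epsilon^{-1}+n\log e\pm o(n)$. Your explicit verification of the hypotheses of Theorem~\ref{thm:small} in Case~1 (including that $k\ge 2$ is forced) is a reasonable elaboration of details the paper leaves implicit.
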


We remark that the size $|U|$ of the universe does not matter, since we can use hash function with $O(\log n)$ description bits (see, e.g., Theorem 14 of \cite{pagh2009dispersing}) to map $|U|$ to a universe of size $\poly(n)$, while avoiding collisions with probability $1 - 1 / \poly(n)$. The $1 / \poly(n)$ collision probability can then easily be absorbed into $\epsilon$.\footnote{Notice that the same approach is not legal for hash tables, since the failure probability for a hash table must be with respect to \emph{running time}, rather than with respect to \emph{correctness}. That is, hash tables are never allowed to return false positives.}

\section*{Acknowledgments}

This research was supported in part by NSF grants CSR-1938180, CCF-2106999, CCF-2118620, CCF-2118832, CCF-2106827, CCF-1725543, CSR-1763680, CCF-1716252 and CNS-1938709, as well as an NSF GRFP fellowship, a Fannie and John Hertz Fellowship, and Singapore Ministry of Education (AcRF) Tier 2 grant MOE2018-T2-1-013.

This research was also partially sponsored by the United States Air Force Research Laboratory and the United States Air Force Artificial Intelligence Accelerator and was accomplished under Cooperative Agreement Number FA8750-19-2-1000. The views and conclusions contained in this document are those of the authors and should not be interpreted as representing the official policies, either expressed or implied, of the United States Air Force or the U.S. Government. The U.S. Government is authorized to reproduce and distribute reprints for Government purposes notwithstanding any copyright notation herein.

\end{document}